\newcounter{actr}
{\begin{list}{(\alph{actr})}{\usecounter{actr}}}{\end{list}}
\newcounter{ictr}
{\begin{list}{(\roman{ictr})}{\usecounter{ictr}}}{\end{list}}
\newtheorem{remark}{Remark}
\newtheorem{thm}{Theorem}
\newtheorem{lemma}{Lemma}
\newtheorem{claim}{Claim}
\newtheorem{corol}{Corollary}
\newtheorem{prop}{Proposition}
\newenvironment{new-proof}[1]
{{\em Proof }:\\}%
{ \noindent\qed }
\newcommand{\defeq}{\stackrel{\Delta}{=}}
\newcommand{\bc}{{\mathbf{c}}}
\newcommand{\cC}{{\mathcal{C}}}
\newcommand{\bd}{{\mathbf{d}}}
\newcommand{\cF}{{\mathcal{F}}}
\newcommand{\cG}{{\mathcal{G}}}
\newcommand{\cL}{{\mathcal{L}}}
\newcommand{\cM}{{\mathcal{M}}}
\newcommand{\cN}{{\mathcal{N}}}
\newcommand{\bs}{{\mathbf{s}}}
\newcommand{\cS}{{\mathcal{S}}}
\newcommand{\cW}{{\mathcal{W}}}
\newcommand{\al}{\alpha}
\newcommand{\bt}{\boldsymbol{t}}
\newcommand{\g}{\gamma}
\newcommand{\eps}{\varepsilon}
\DeclareMathAlphabet{\mathbsf}{OT1}{cmss}{bx}{n}
\DeclareMathAlphabet{\mathssf}{OT1}{cmss}{m}{sl}
\DeclareSymbolFont{bsfletters}{OT1}{cmss}{bx}{n}
\DeclareSymbolFont{ssfletters}{OT1}{cmss}{m}{n}
\DeclareMathSymbol{\bsfGamma}{0}{bsfletters}{'000}
\DeclareMathSymbol{\ssfGamma}{0}{ssfletters}{'000}
\DeclareMathSymbol{\bsfDelta}{0}{bsfletters}{'001}
\DeclareMathSymbol{\ssfDelta}{0}{ssfletters}{'001}
\DeclareMathSymbol{\bsfTheta}{0}{bsfletters}{'002}
\DeclareMathSymbol{\ssfTheta}{0}{ssfletters}{'002}
\DeclareMathSymbol{\bsfLambda}{0}{bsfletters}{'003}
\DeclareMathSymbol{\ssfLambda}{0}{ssfletters}{'003}
\DeclareMathSymbol{\bsfXi}{0}{bsfletters}{'004}
\DeclareMathSymbol{\ssfXi}{0}{ssfletters}{'004}
\DeclareMathSymbol{\bsfPi}{0}{bsfletters}{'005}
\DeclareMathSymbol{\ssfPi}{0}{ssfletters}{'005}
\DeclareMathSymbol{\bsfSigma}{0}{bsfletters}{'006}
\DeclareMathSymbol{\ssfSigma}{0}{ssfletters}{'006}
\DeclareMathSymbol{\bsfUpsilon}{0}{bsfletters}{'007}
\DeclareMathSymbol{\ssfUpsilon}{0}{ssfletters}{'007}
\DeclareMathSymbol{\bsfPhi}{0}{bsfletters}{'010}
\DeclareMathSymbol{\ssfPhi}{0}{ssfletters}{'010}
\DeclareMathSymbol{\bsfPsi}{0}{bsfletters}{'011}
\DeclareMathSymbol{\ssfPsi}{0}{ssfletters}{'011}
\DeclareMathSymbol{\bsfOmega}{0}{bsfletters}{'012}
\DeclareMathSymbol{\ssfOmega}{0}{ssfletters}{'012}
\renewcommand{\defeq}{\triangleq}
\newcommand{\rvN}{{\mathssf{N}}}    
\newcommand{\rvX}{{\mathssf{X}}}    
\newcommand{\rvY}{{\mathssf{Y}}}    
\newcommand{\rvZ}{{\mathssf{Z}}}    
\newcommand{\rva}{{\mathssf{a}}}    
\newcommand{\rve}{{\mathssf{e}}}    
\newcommand{\rvf}{{\mathssf{f}}}    
\newcommand{\rvg}{{\mathssf{g}}}    
\newcommand{\rvm}{{\mathssf{m}}}    
\newcommand{\rvn}{{\mathssf{n}}}    
\newcommand{\rvs}{{\mathssf{s}}}    
\newcommand{\rvbs}{{\mathbsf{s}}}
\newcommand{\rvbt}{{\mathbsf{t}}}
\newcommand{\rvu}{{\mathssf{u}}}    
\newcommand{\rvbu}{{\mathbsf{u}}}
\newcommand{\rvv}{{\mathssf{v}}}    
\newcommand{\rvw}{{\mathssf{w}}}    
\newcommand{\rvz}{{\mathssf{z}}}    
\newcommand{\rvU}{{\mathssf{U}}}
\author{{Farrokh~Etezadi,  Ashish~Khisti and Mitchell~Trott}
\thanks{Manuscript submitted June 2013, revised December 2013.}
\thanks{Farrokh Etezadi (fetezadi@comm.utoronto.ca) and Ashish Khisti ({akhisti@comm.utoronto.ca}) are with the University of Toronto, Toronto, ON, Canada, Mitchell Trott was with HP Labs, Palo Alto, USA. 
This work was supported by an NSERC Discovery Research Grant, a Hewlett-Packard Innovation Research Program award and an Ontario Early Research Award. This work was presented in parts at the 2012 IEEE Data Compression Conference and the 2012 Allerton Conference on Communication, Control and Computing. }}
\title{Zero-Delay Sequential Transmission of Markov Sources over Burst Erasure Channels}
\begin{document}
 
\maketitle

\begin{abstract}
A setup involving zero-delay sequential transmission of a vector Markov source over a burst erasure channel  is studied.  A sequence of source vectors is compressed in a causal fashion at the encoder,  and the resulting output is transmitted over a burst erasure channel.  The destination is required to reconstruct each source vector with zero-delay, but those source sequences that are observed either during the burst erasure, or in the interval of length $W$ following the burst erasure need not be reconstructed.   The minimum achievable compression rate is called the rate-recovery function.   We assume that each source vector is sampled  i.i.d.\ across the spatial dimension and from a stationary, first-order Markov process across the temporal dimension. 

For discrete sources the case of lossless recovery is considered, and upper and lower bounds on the rate-recovery function are established. Both these bounds can be expressed as the rate for predictive coding, plus a term that decreases at least inversely with the recovery window length $W$. For Gauss-Markov sources and a quadratic distortion measure, upper and lower bounds on the minimum rate are established when $W=0$. These bounds are shown to coincide in the high resolution limit.  Finally another setup involving  i.i.d. Gaussian sources is studied and the rate-recovery function is completely characterized in this case.

\end{abstract}
\begin{keywords}
Joint Source-Channel Coding, Distributed Source Coding, Gauss-Markov Sources, Kalman Filter, Burst Erasure Channels, Multi-terminal Information Theory, Rate-distortion Theory.
\end{keywords}



\section{Introduction}

\IEEEPARstart{R}eal-time streaming applications require both the sequential compression, and playback of multimedia frames under strict latency constraints.  Linear predictive techniques such as DPCM have long been used to exploit the source memory in such systems. However predictive coding  schemes also exhibit a significant level of error propagation in the presence of packet losses~\cite{wang2000error}.   In practice one must develop transmission schemes that satisfy both the real-time constraints and are robust to channel errors.  

There exists an inherent  tradeoff between the underlying {\em transmission-rate} and the {\em error-propagation} at the receiver in all video streaming applications.  Commonly used video compression formats such as H.264/MPEG and HEVC use a combination of intra-coded and predictively-coded frames to limit the amount of error propagation. The predictively-coded frames are used to improve the compression efficiency whereas the intra-coded frames limit the amount of error propagation. 
Other techniques  including forward error correction codes~\cite{tan}, leaky DPCM~\cite{Huang:08} and distributed video coding~\cite{pradhanRamchandran:03} can also be used to trade off the transmission rate with error propagation. Despite this, such a tradeoff is not well understood even in the case of a single isolated packet loss~\cite{Wang:06}.


In this paper we study the information theoretic tradeoff between the transmission rate and error propagation in a simple source-channel model. 
We assume that the channel introduces an  isolated erasure burst of a certain maximum length, say $B$. The encoder observes a sequence of vector sources and compresses them in a causal fashion. The decoder is required to reconstruct each source vector with zero delay, except those that occur during the error propagation window. The decoder can declare a \emph{don't-care} for all the source sequences that occur in this window. We assume that is period spans the duration of the erasure burst, as well an interval of length $W$ immediately following it.  We study the minimum rate required $R(B,W)$, and define it as the \emph{rate-recovery function}.  

We first consider the case of discrete sources and lossless reconstruction and establish  upper and lower bounds on the minimum rate. Both these bounds can be expressed as the rate of the predictive coding scheme, plus an additional term that decreases at-least as $H(\rvs)/(W+1)$ where $H(\rvs)$ denotes the entropy of the source symbol.  Our lower bound is obtained through connection to a certain multi-terminal source coding problem that captures the tension in encoding a source sequence during the error-propagation period, and outside it. The upper bound is based on a natural random-binning scheme. We also consider the  case of Gauss-Markov sources and a quadratic distortion measure. We again establish upper and lower bounds on the minimum rate when $W=0$, i.e., when instantaneous recovery following the burst erasure is imposed. We observe that our upper and lower bounds coincide in the high resolution limit, thus establishing the rate-recovery function in this regime. Finally we consider a different setup involving i.i.d.\ Gaussian sources, and a special recovery constraint, and obtain an exact characterization of the  rate-recovery function in this special case.  Many of our results also naturally extend to the case when the channel introduces multiple erasure bursts.

The remainder of the paper is organized as follows.  We discuss related literature  in Section~\ref{sec:Background}. The problem setup is described in Section~\ref{sec:statement} and a summary of the main results is provided in Section~\ref{sec:Results}. We treat the case of discrete sources and lossless recovery in Section~\ref{sec:THM1} and establish upper and lower bounds on the minimum rate. The optimality of binning for the special case of symmetric sources and memoryless encoders is established in Section~\ref{sec:Symmetric}. In Section~\ref{sec:GM} we consider the case of Gauss-Markov source with a quadratic distortion constraint. Section~\ref{sec:Gauss} studies another setup involving independent Gaussian sources and a sliding window recovery constraint, where an exact characterization of the minimum rate is obtained. Conclusions appear in Section~\ref{sec:Conclusion}. 

{\bf Notations}: Throughout this paper we represent the  Euclidean norm operator by $||\cdot||$ and the expectation operator by $E[\cdot]$. 
The notation ``$\log$'' is used for the binary logarithm, and rates are expressed in bits.
 The operations $H(.)$ and $h(.)$ denote the entropy and the differential entropy, respectively.  
The ``slanted sans serif'' font $\rva$ and the normal font $a$ represent random variables and their realizations respectively. 
The notation $\rva_{i}^{n}=\{\rva_{i,1},\ldots,\rva_{i,n}\}$ represents a length-$n$ sequence of symbols at time $i$. 
The notation $[\rvf]_{i}^{j}$ for $i< j$ represents $\rvf_{i},\rvf_{i+1}, \ldots, \rvf_{j}$.

%
%
%
%
%
%

\section{ Related Works}
\label{sec:Background}
Problems involving real-time coding and compression have been studied from many different perspectives in related literature.
The compression of a Markov source, with zero encoding and decoding delays, was studied in an early work by Witsenhausen \cite{Witsenhausen:79}. In this setup, the encoder must  sequentially compress a (scalar) Markov source and transmit it
over an ideal channel. The decoder must reconstruct the source symbols with zero-delay and under an average distortion constraint.
It was shown in \cite{Witsenhausen:79} that for a $k$-th order Markov source model, an encoding rule that only depends on the $k$ most recent source symbols, and the decoder's memory, is sufficient to achieve the optimal rate.  Similar structural results have been  obtained in a number of followup works, see e.g.,~\cite{Teneketzis} and references therein. The authors in \cite{Asnani} considered real-time communication of a  memoryless source over memoryless channels, with or without the presence of unit-delay feedback. The encoding and decoding is sequential with a fixed finite lookahead at the encoder. The authors  propose conditions under which  symbol-by-symbol encoding and decoding, without lookahead, is optimal and more generally characterize the optimal encoder as a solution to a dynamic programming problem.


In another line of work, the problem of sequential coding of  correlated vector sources in a multi-terminal source coding framework was introduced by Viswanathan and Berger~\cite{berger}.  In this setup, a set of correlated sources must be sequentially compressed by the encoder, whereas the decoder at each stage is required to reconstruct the corresponding source sequence, given all the encoder outputs up to that time. It is noted in~\cite{berger} that the correlated source sequences can model consecutive video frames and each stage at the decoder maps to sequential reconstruction of a particular source frame. This setup is an extension of the  well-known  successive refinement problem in source coding~\cite{equitzCover:91}. In followup works, in reference~\cite{Wu} the authors consider the case where the encoders at each time have access to previous encoder outputs rather than previous source frames.
Reference~\cite{ishwar} considers an extension where the encoders and decoders can introduce non-zero delays. All these works assume ideal channel conditions.
Reference~\cite{songChen12} considers an extension of~\cite{berger}  where at any given stage the decoder has either all the previous outputs, or only the present output. A robust extension of the predictive coding scheme is proposed and shown to achieve the minimum sum-rate. However this setup does not capture the effect of packet losses over a channel, where the destination has access to all the non erased symbols.  To our knowledge, only reference~\cite{Huang:08} considers the setting of  sequential coding over a random packet erasure channel. The source is assumed to be Gaussian, spatially i.i.d.\, and temporally autoregressive. A class of  linear predictive coding schemes is studied and an optimal scheme within this class, with respect to the excess distortion ratio metric is proposed. Our proposed coding scheme is qualitatively different from~\cite{Huang:08, songChen12} and involves a random binning based approach,  which is inherently robust to the side-information at the decoder.

In other related works, the joint source-channel coding of a vector Gaussian source over a vector Gaussian channel with zero reconstruction delay has also been extensively studied. While optimal analog mappings are not known in general, a number of interesting approaches have been proposed in e.g.~\cite{Chung:00,tuncel} and related references.  Reference~\cite{arildsen} studies the problem of sequential coding of the scalar Gaussian source over a channel with random erasures. In~\cite{Wang:06}, the authors  consider a joint source-channel coding setup and propose the use of distributed source coding  to compensate the effect of channel losses. However no optimality results are presented for the proposed scheme. Sequential random binning techniques for streaming scenarios have been proposed in e.g.~\cite{Chang:07}, \cite{draper} and the references therein.

To the best of our knowledge, there has been no prior work that studies an information theoretic tradeoff between error-propagation and compression efficiency in real-time streaming systems.

%
%
%
%
%
%
%

\begin{figure*}
\begin{center}
\vspace{1em}
\begin{tikzpicture}

\fill[color=red!40!white] (-1.3,-.3) rectangle (-.7,.3);
\draw [black](-1.3,-.3) rectangle (-.7,.3);
\draw [color=red!40!white] (-1,0) -- (-1,0) node {$\color{black}\rvs^n_{-1}$};
\draw [->](-0.7,0) -- (-0.3,0);

\draw [white] (0,0) -- (0,0) node {$\color{black}\rvs^n_0$};
\draw [->](0,-.3) -- (0,-.7);
\draw [white] (0,-1) -- (0,-1) node {$\color{black}\rvf_0$};
\draw [->](0,-1.3) -- (0,-1.7);
\draw [white] (0,-2) -- (0,-2) node {$\color{black}\rvf_0$};
\draw [->](0,-2.3) -- (0,-2.7);
\draw [white] (0,-3) -- (0,-3) node {$\color{black}\hat{\rvs}^n_0$};
\draw [->](0.3,0) -- (0.7,0);

\draw [white] (1,0) -- (1,0) node {$\color{black}\rvs^n_1$};
\draw [->](1,-.3) -- (1,-.7);
\draw [white] (1,-1) -- (1,-1) node {$\color{black}\rvf_1$};
\draw [->](1,-1.3) -- (1,-1.7);
\draw [white] (1,-2) -- (1,-2) node {$\color{black}\rvf_1$};
\draw [->](1,-2.3) -- (1,-2.7);
\draw [white] (1,-3) -- (1,-3) node {$\color{black}\hat{\rvs}^n_1$};
\draw [->](1.3,0) -- (1.7,0);

\draw [white] (2,0) -- (2,0) node {$\color{black}\rvs^n_{2}$};
\draw [->](2,-.3) -- (2,-.7);
\draw [white] (2,-1) -- (2,-1) node {$\color{black}\rvf_2$};
\draw [->](2,-1.3) -- (2,-1.7);
\draw [white] (2,-2) -- (2,-2) node {$\color{black}\rvf_2$};
\draw [->](2,-2.3) -- (2,-2.7);
\draw [white] (2,-3) -- (2,-3) node {$\color{black}\hat{\rvs}^n_2$};
\draw [->](2.3,0) -- (2.7,0);

\draw [dotted](2.8,0) -- (3.1,0);

\draw [white] (3.5,0) -- (3.5,0) node {$\color{black}\rvs^n_{j-1}$};
\draw [->](3.5,-.3) -- (3.5,-.7);
\draw [white] (3.5,-1) -- (3.5,-1) node {$\color{black}\rvf_{j-1}$};
\draw [->](3.5,-1.3) -- (3.5,-1.7);
\draw [white] (3.5,-2) -- (3.5,-2) node {$\color{black}\rvf_{j-1}$};
\draw [->](3.5,-2.3) -- (3.5,-2.7);
\draw [white] (3.5,-3) -- (3.5,-3) node {$\color{black}\hat{\rvs}^n_{j-1}$};
\draw [->](3.9,0) -- (4.2,0);

\draw [white] (4.5,0) -- (4.5,0) node {$\color{black}\rvs^n_{j}$};
\draw [->](4.5,-.3) -- (4.5,-.7);
\draw [white] (4.5,-1) -- (4.5,-1) node {$\color{black}\rvf_{j}$};
\draw [->](4.5,-1.3) -- (4.5,-1.7);
\draw [white] (4.5,-2) -- (4.5,-2) node {$\color{black}\star$};
\draw [->](4.5,-2.3) -- (4.5,-2.7);
\draw [white] (4.5,-3) -- (4.5,-3) node {$\color{black}-$};
\draw [->](4.8,0) -- (5.1,0);

\draw [white] (5.5,0) -- (5.5,0) node {$\color{black}\rvs^n_{j+1}$};
\draw [->](5.5,-.3) -- (5.5,-.7);
\draw [white] (5.5,-1) -- (5.5,-1) node {$\color{black}\rvf_{j+1}$};
\draw [->](5.5,-1.3) -- (5.5,-1.7);
\draw [white] (5.5,-2) -- (5.5,-2) node {$\color{black}\star$};
\draw [->](5.5,-2.3) -- (5.5,-2.7);
\draw [white] (5.5,-3) -- (5.5,-3) node {$\color{black}-$};

\draw [dotted](5.9,0) -- (6.2,0);

\draw [white] (6.8,0) -- (6.8,0) node {$\color{black}\rvs^n_{j+B-1}$};
\draw [->](6.8,-.3) -- (6.8,-.7);
\draw [white] (6.8,-1) -- (6.8,-1) node {$\color{black}\rvf_{j+B-1}$};
\draw [->](6.8,-1.3) -- (6.8,-1.7);
\draw [white] (6.8,-2) -- (6.8,-2) node {$\color{black}\star$};
\draw [->](6.8,-2.3) -- (6.8,-2.7);
\draw [white] (6.8,-3) -- (6.8,-3) node {$\color{black}-$};
\draw [->](7.4,0) -- (7.7,0);

\draw [white] (8.2,0) -- (8.2,0) node {$\color{black}\rvs^n_{j+B}$};
\draw [->](8.2,-.3) -- (8.2,-.7);
\draw [white] (8.2,-1) -- (8.2,-1) node {$\color{black}\rvf_{j+B}$};
\draw [->](8.2,-1.3) -- (8.2,-1.7);
\draw [white] (8.2,-2) -- (8.2,-2) node {$\color{black}\rvf_{j+B}$};
\draw [->](8.2,-2.3) -- (8.2,-2.7);
\draw [white] (8.2,-3) -- (8.2,-3) node {$\color{black}-$};
\draw [->](8.7,0) -- (9,0);

\draw [dotted](9.1,0) -- (9.5,0);

\draw [->](9.6,0) -- (10,0);
\draw [white] (10.9,0) -- (10.9,0) node {$\color{black}\rvs^n_{j+B+W-1}$};
\draw [->](10.9,-.3) -- (10.9,-.7);
\draw [white] (10.9,-1) -- (10.9,-1) node {$\color{black}\rvf_{j+B+W-1}$};
\draw [->](10.9,-1.3) -- (10.9,-1.7);
\draw [white] (10.9,-2) -- (10.9,-2) node {$\color{black}\rvf_{j+B+W-1}$};
\draw [->](10.9,-2.3) -- (10.9,-2.7);
\draw [white] (10.9,-3) -- (10.9,-3) node {$\color{black}-$};

\draw [->](11.8,0) -- (12.1,0);
\draw [white] (12.9,0) -- (12.9,0) node {$\color{black}\rvs^n_{j+B+W}$};
\draw [->](12.9,-.3) -- (12.9,-.7);
\draw [white] (12.9,-1) -- (12.9,-1) node {$\color{black}\rvf_{j+B+W}$};
\draw [->](12.9,-1.3) -- (12.9,-1.7);
\draw [white] (12.9,-2) -- (12.9,-2) node {$\color{black}\rvf_{j+B+W}$};
\draw [->](12.9,-2.3) -- (12.9,-2.7);
\draw [white] (12.9,-3) -- (12.9,-3) node {$\color{black}\hat{\rvs}^n_{j+B+W}$};

\draw [->](13.7,0) -- (13.9,0);
\draw [white] (14.8,0) -- (14.8,0) node {$\color{black}\rvs^n_{j+B+W+1}$};
\draw [->](14.8,-.3) -- (14.8,-.7);
\draw [white] (14.8,-1) -- (14.8,-1) node {$\color{black}\rvf_{j+B+W+1}$};
\draw [->](14.8, -1.3) -- (14.8,-1.7);
\draw [white] (14.8,-2) -- (14.8,-2) node {$\color{black}\rvf_{j+B+W+1}$};
\draw [->](14.8,-2.3) -- (14.8,-2.7);
\draw [white] (14.8,-3) -- (14.8,-3) node {$\color{black}\hat{\rvs}^n_{j+B+W+1}$};

\draw [white] (5.8,-3.6) -- (5.8,-3.6) node {$\color{black}\textrm{\footnotesize{Erased}}$};
\draw [white] (9.75,-3.6) -- (9.75,-3.6) node {$\color{black}\textrm{\footnotesize{Not to be recovered}}$};


\draw [dashed](4.2,-3.4) rectangle (7.5,-1.6);
\draw [dashed](7.6,-3.4) rectangle (11.9,-1.6);
\draw (4.1,-3.9) rectangle (12,-1.5);
\draw [white] (8,-4.2) -- (8,-4.2) node {$\color{black}\textrm{\small{Error Propagation Window}}$};
\end{tikzpicture}

\caption{ Problem Setup: The encoder output $\rvf_i$ is a  function of  all the past source sequences.
The channel introduces a burst erasure of length up to $B$.
The decoder produces $\hat{\rvs}_i^n $ upon observing the  channel outputs up to time $i$.  
As indicated, the decoder is not required to produce those source sequences that are observed either during the burst erasure, or a period of $W$ 
following it. The first sequence, $\rvs_{-1}^n$ is a synchronization frame available to both the source and destination.}
\label{fig:setup}
\end{center}
\end{figure*}
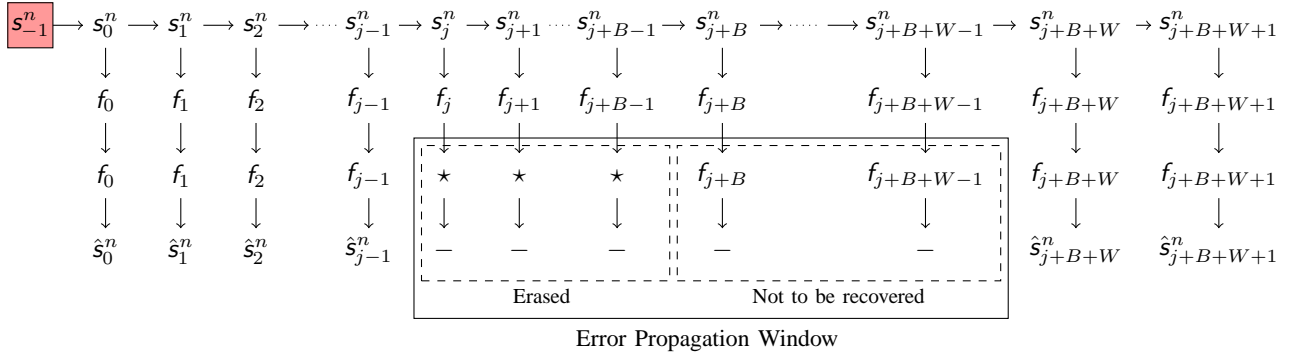

\section{Problem Statement}
\label{sec:statement}
In this section we introduce our source and channel models and the associated definition of the rate-recovery function. 

We assume that the communication spans the interval $i\in [-1, \mathcal{L}]$. At each time $i$, a source vector $\{\rvs_i^n\}$ is sampled,  whose symbols are drawn independently across the spatial dimension, and from a first-order Markov chain across the temporal dimension, i.e., 
\begin{multline}
\Pr(~\rvs_i^n = s_i^n~|~\rvs_{i-1}^n = s_{i-1}^n,~\rvs_{i-2}^n = s_{i-2}^n,\ldots, \rvs^n_{-1}=s^n_{-1})  \\ = \prod_{k=1}^n p_{1}(s_{i,k}|s_{i-1,k}),
\qquad  0\le i \le \mathcal{L}.\label{eq:Markov}
\end{multline}
The underlying random variables $\{\rvs_i\}$ constitute a time-invariant, stationary and a first-order Markov chain with a common marginal distribution denoted by $p_\rvs(\cdot)$ over an alphabet $\cS$. The sequence $\rvs_{-1}^n$ is sampled i.i.d.\ from $p_\rvs(\cdot)$ and revealed to both the encoder and decoder before the start of the communication.  It plays the role of  a synchronization frame. 

A rate-$R$  encoder computes  an index $\rvf_i \in [1,2^{nR}]$ at  time $i$, according to an encoding function 
\begin{align}
\rvf_i=\mathcal{F}_i\left({\rvs}^n_{-1},{\rvs}^n_0,..., {\rvs}^n_{i}\right), \qquad  0\le i \le \mathcal{L}. \label{eq:f-enc}
\end{align} 
Note that the encoder in~\eqref{eq:f-enc} is a causal function of the source sequences. 
A {\em memoryless} encoder satisfies $\mathcal{F}_i(\cdot) = \cF_i(\rvs_i^n)$ i.e., the encoder does  not use the knowledge of the past sequences.  Naturally a memoryless encoder is very restrictive, and we will only use it to establish some special results. 

The channel takes each $\rvf_i$ as input and either outputs $\rvg_i = \rvf_i$ or an erasure symbol i.e., $\rvg_i=\star$. We consider the class of burst erasure channels. For some particular $j \ge 0$, it introduces a burst erasure such that  $\rvg_{i}=\star$ for $i\in\{j,j+1,...,j+B'-1\}$ and  $\rvg_{i}=\rvf_{i}$ otherwise i.e.,\begin{align}
\rvg_i =\begin{cases}
\star, &i \in [j,j+1,\ldots, j+B'-1]\\
\rvf_i, &\text{ else},
\end{cases}\label{eq:chModel}
\end{align} where the burst length $B'$ is upper bounded by $B$.

Upon observing the sequence $\{\rvg_i\}_{i\ge 0}$, the decoder is required to reconstruct each source sequence with zero delay i.e., 
\begin{align}
\hat{\rvs}^{n}_i=\mathcal{G}_{i}(\rvg_0, \rvg_1, \ldots, \rvg_{i}, \rvs_{-1}^n), \quad i \notin \{j, \ldots, j+B'+W-1\} \label{eq:decoder-def}
\end{align}
where $\hat{\rvs}_i^{n}$ denotes the reconstruction sequence and $j$ denotes the time at which burst erasure starts in~\eqref{eq:chModel}. The destination is not required to produce the source vectors that appear either during the burst erasure or in the period of length $W$ following it. We call this period the error propagation window. Fig.~\ref{fig:setup} provides a schematic of the causal encoder~\eqref{eq:f-enc}, the channel model~\eqref{eq:chModel}, and the decoder~\eqref{eq:decoder-def}. 

\subsection{Rate-Recovery Function}  
We define the rate-recovery function under lossless and lossy reconstruction constraints. 
\subsubsection{Lossless Rate-Recovery Function}
We first consider the case when the reconstruction in~\eqref{eq:decoder-def} is required to be lossless. 
We assume that the source alphabet is discrete and the entropy $H(\rvs)$ is finite. 
A rate $R_{\mathcal{L}}(B, W)$ is feasible if there exists a sequence of encoding and decoding functions and a sequence $\epsilon_{n}$ that approaches zero as $n \to \infty$ such that, $\Pr(\rvs_i^n \neq \hat{\rvs}_i^{n})\leq\epsilon_{n}$ for all source sequences reconstructed as in~\eqref{eq:decoder-def}. We seek the minimum feasible rate $R_{\mathcal{L}}(B,W)$, which is the {\em lossless rate-recovery} function. 
In this paper, we will focus on infinite-horizon case, $R(B,W) =\lim_{\mathcal{L}\to\infty}R_{\mathcal{L}}(B,W)$, which will be  called the rate-recovery function for simplicity.

\subsubsection{Lossy Rate-Recovery Function}
We also consider the case where reconstruction in~\eqref{eq:decoder-def} is required to satisfy an average distortion constraint:
\begin{align}
\limsup_{n\to \infty} E\left[ \frac{1}{n}\sum_{k=1}^n d(\rvs_{i,k},\hat{\rvs}_{i,k})\right] \le D \label{eq:D-def}
\end{align} for some distortion measure $d:\mathbb{R}^2 \rightarrow [0,\infty)$.
The rate $R$ is feasible if a sequence of encoding and decoding functions exists that satisfies the average distortion constraint. The minimum feasible rate $R_{\mathcal{L}}(B, W, D)$, is the {\em lossy rate-recovery function}.  
The study of lossy rate-recovery function for the general case appears to be quite challenging. 
In this paper we will focus on the class of Gaussian-Markov sources, with quadratic distortion measure, i.e. $d(\rvs,\hat{\rvs}) = (\rvs-\hat{\rvs})^2,$ where the analysis simplifies.
We will again focus on infinite-horizon case, $R(B,W,D) =\lim_{\mathcal{L}\to\infty}R_{\mathcal{L}}(B,W,D)$ which we simply call the rate-recovery function.
Table~\ref{tab:1} summarizes the  notation used throughout the paper. 

\begin{table}
\caption {Summary of notation used in the paper.} \label{tab:1} 
\begin{center}
\vspace{1em}
%
%
%
%
%

\begin{tabular}{ c| c|c  }
\specialrule{0.1em}{1em}{0em}
\multirow{4}{*}{\begin{tabular}{ c}Source\\ Parameters \end{tabular}} & Source Symbol& $\rvs$ \\\cline{2-3}
 & Source Reproduction & $\hat{\rvs}$ \\ \cline{2-3}
 & \begin{tabular}{ c} Temporal Correlation Coefficient\\
 of Gauss-Markov Source Model \end{tabular}& $\rho$\\
 \specialrule{.1em}{0em}{0em}
 \multirow{4}{*}{\begin{tabular}{ c}Channel\\ Parameters \end{tabular}} & Channel Input & $\rvf$ \\\cline{2-3}
 & Channel Output & $\rvg$ \\ \cline{2-3}
 & Maximum Burst Length & $B$ \\ \cline{2-3}
 & Guard Length between Consecutive Bursts & $L$ \\ 
 \specialrule{.1em}{0em}{0em}
 \multirow{3}{*}{\begin{tabular}{ c}System\\ Parameters \end{tabular}} & Length of Source Sequences & $n$\\\cline{2-3}
 & Communication Duration & $\mathcal{L}$\\\cline{2-3}
 & Recovery Window Length & $W$ \\ 
  \specialrule{.1em}{0em}{0em}
  \multirow{2}{*}{\begin{tabular}{ c}Performance \\ Metrics \end{tabular}} & Rate  & $R$ \\\cline{2-3}
 & Average Distortion & $D$\\ \specialrule{.1em}{0em}{0em}
\end{tabular}
\end{center}
\end{table}

\begin{remark}
Note that our proposed setup only considers a single burst erasure during the entire duration of communication. When we consider lossless recovery at the destination our results immediately extend to channels involving multiple burst erasures with a certain guard interval separating consecutive bursts. When we consider Gauss-Markov sources with a quadratic distortion measure, we will explicitly treat the channel with multiple burst erasures and compare the achievable rates with that of a single burst erasure channel.
\end{remark}

\subsection{Practical Motivation}

Note that our setup assumes that the size of both the source frames and  channel packets is sufficiently large. A  relevant application for the proposed setup is video streaming. Video frames are generated at a  rate of approximately 60 Hz and each frame typically contains several hundred thousand pixels.  The inter-frame interval is thus $\Delta_s \approx 17 $ ms. Suppose that the underlying broadband communication channel has a bandwidth of $W_s = 2.5$~MHz. Then in the interval of $\Delta_s$ the number of symbols transmitted using ideal synchronous modulation is $N=2\Delta_s W_s \approx 84,000$. Thus the block length between successive frames is sufficiently long that capacity achieving codes could be used and the erasure model and large packet sizes is justified. 
The assumption of spatially i.i.d.\ frames could reasonably approximate the video \emph{innovation process} generated by applying suitable transform on original video frames. 
Such models have been also used in earlier works e.g.,~\cite{berger,songChen12,Huang:08,ishwar,Wu}.   

Possible applications of the burst loss model considered in our setup  include fading wireless channels and congestion in wired networks. We note that the present paper does not consider a statistical channel model but  instead considers a worst case channel model.  As mentioned before even the effect of such a single burst loss has not been well understood in the video streaming setup and therefore our proposed setup is a natural starting point. Furthermore while the statistical models are used to capture the typical behaviour of channel errors, the atypical behaviour is often modelled (see e.g.,~\cite[Sec.~6.10]{gallager}) using a worst-case approach. Therefore in low-latency applications where the local channel dynamics are relevant such models are often used (see e.g.,~\cite{martinianThesis, badrinfo:13,tekin,leong}). Finally we note that earlier works (see e.g.,~\cite{Huang:08}) that consider statistical channel models, also ultimately simplify the system by analyzing the effect of each burst erasure separately in steady state.

%
%
%
%
%
%
%

\section{Main Results}
\label{sec:Results}
We summarize the main results of this paper. We note in advance that throughout the paper, the upper bound on the rate-recovery function indicates the rate achievable by a proposed coding scheme and the lower bound corresponds to a necessary condition that the rate-recovery function of any feasible coding scheme has to satisfy. Section~\ref{sec:LLRR} treats the lossless rate-recovery function and presents
lower and upper bounds in Theorem~\ref{thm:genUB_LB}. Corollary~\ref{thm:binning} presents the lossless rate-recovery function for a special case of symmetric sources, when restricted to memoryless encoders. Section~\ref{sec:GMS} treats the lossy rate-recovery function for the class of Gauss-Markov sources.  Prop.~\ref{prop:GML} presents a lower bound, whereas Prop.~\ref{prop:GMAch} and Prop.~\ref{prop:GM-ME} present upper bounds on lossy rate-recovery function for the single and multiple burst erasure channel models respectively. Our bounds coincide in the high resolution limit, as stated in Corollary~\ref{corol:HR}.  Finally Section~\ref{subsec:GSW} treats another setup involving  independent Gaussian sources, with a sliding window recovery constraint, and establishes the associated rate-recovery function.

\subsection{Lossless Rate-Recovery Function}
\label{sec:LLRR}
\begin{thm}(Lossless Rate-Recovery Function)
For the stationary, first-order Markov, discrete source process, the lossless rate-recovery function satisfies the following upper and lower bounds: $R^-(B,W) \le R(B,W)\le R^+(B,W),$ where
\begin{align}
R^{+}(B, W) &\!=\! H(\rvs_{1}|\rvs_0)+\frac{1}{W+1}I(\rvs_{B};\rvs_{B+1}|\rvs_{0}), \label{eq:genUB}\\
R^-(B, W) &\!=\! H(\rvs_{1}|\rvs_0)+\frac{1}{W+1}I(\rvs_{B};\rvs_{B+W+1}|\rvs_{0}). \label{eq:genLB}
\end{align}
\label{thm:genUB_LB}\hfill$\Box$
\end{thm}
Notice that the upper and lower bounds~\eqref{eq:genUB} and~\eqref{eq:genLB} coincide for $W=0$ and $W \rightarrow\infty$, yielding the rate-recovery function in these cases. We can interpret the term $H(\rvs_1|\rvs_0)$ as the amount of uncertainty in $\rvs_i$ when the past sources are perfectly known.  This term is equivalent to the rate associated with ideal predictive coding in absence of any erasures. The second term in both~\eqref{eq:genUB} and~\eqref{eq:genLB} is the additional penalty that arises due to the recovery constraint following a burst erasure. Notice that this term decreases at-least as $H(\rvs)/(W+1)$, thus the penalty decreases as we increase the recovery period $W$. Note that the mutual information term associated with the lower bound is $I(\rvs_B; \rvs_{B+W+1}|\rvs_0)$ while that in the upper bound is $I(\rvs_B; \rvs_{B+1}|\rvs_0)$. Intuitively this difference arises because in the lower bound we only consider the reconstruction of $\rvs_{B+W+1}^n$ following an erasure bust in $[1,B]$ while, as explained below in Corollary~\ref{corol:genUB} the upper bound involves a binning based scheme that reconstructs  all sequences $(\rvs_{B+1}^n,\ldots, \rvs_{B+W+1}^n)$ at time $t=B+W+1$.  

A proof of Theorem~\ref{thm:genUB_LB} is provided in Section~\ref{sec:THM1}. The lower bound involves a connection to a multi-terminal source coding problem. This model captures the different requirements imposed on the encoder output following a burst erasure and in the steady state. The following Corollary  provides an alternate expression for the achievable rate and makes the connection to the binning technique  explicit. 

\begin{corol}
\label{corol:genUB}
The upper bound in~\eqref{eq:genUB} is equivalent to the following expression
\begin{align}
R^+(B,W) = \frac{1}{W+1} H(\rvs_{B+1}, \rvs_{B+2}, \ldots, \rvs_{B+W+1}|\rvs_{0}). \label{eq:genUB_Slepian-Wolf}
\end{align}\hfill$\Box$
\end{corol}
The proof of Corollary~\ref{corol:genUB} is provided in Appendix~\ref{app:Cor1}.  We make several remarks. First, the entropy term in \eqref{eq:genUB_Slepian-Wolf} is equivalent to the sum-rate constraint associated with the Slepian-Wolf  coding scheme in simultaneously recovering $\{\rvs^n_{B+1}, \rvs^n_{B+2}, \ldots, \rvs^n_{B+W+1}\}$ when $\rvs^n_{0}$ is known. Note that due to the stationarity of the source process, the rate expression in \eqref{eq:genUB_Slepian-Wolf} suffices for recovering from any burst erasure of length up to $B$, spanning an arbitrary interval. Second, note that in \eqref{eq:genUB_Slepian-Wolf} we amortize over a window of length $W+1$ as $\{\rvs_{B+1}^n, \ldots, \rvs_{B+W+1}^n\}$ are recovered simultaneously at time $t=B+W+1$. Note that this is the maximum window length over which we can amortize due to the decoding constraint. Third, the results in Theorem~\ref{thm:genUB_LB} immediately apply when the channel introduces multiple bursts with a guard spacing of at least $W+1$. This property arises due to the Markov nature of the source. Given a source sequence at time $i$, all the future source sequences $\{\rvs_t^n\}_{t > i}$ are independent of the past $\{\rvs_t^n\}_{t<i}$ when conditioned on $\rvs_i^n$. Thus when a particular source sequence is reconstructed at the destination, the decoder becomes oblivious to past erasures.  Finally, while the results in Theorem~\ref{thm:genUB_LB} are stated for the rate-recovery function over an infinite horizon, upon examining the proof of Theorem~\ref{thm:genUB_LB}, it can be verified that both the upper and lower bounds  hold for the finite horizon case, i.e. $R_{\mathcal L}(B, W),$ when $\cL \ge B+W$.

A {\em symmetric} source is defined as a Markov source such that the underlying Markov chain is also reversible i.e., the random variables  satisfy   $(\rvs_0,\ldots, \rvs_l) \stackrel{\text{d}}{=} (\rvs_l, \ldots, \rvs_0)$, where the equality is in the sense of distribution~\cite{markov}.  Of particular interest to us is the following property satisfied for each $t$:
\begin{align}
p_{\rvs_{t+1}, \rvs_{t}} (s_a, s_b) = p_{\rvs_{t-1},\rvs_t}(s_a, s_b), \quad \forall s_a, s_b \in \cS\label{eq:symmetric}
\end{align}
i.e., we can ``exchange'' the source pair $(\rvs_{t+1}^n,\rvs_t^n)$ with $(\rvs_{t-1}^n,\rvs_t^n)$ without affecting the joint distribution.  An example of a symmetric source is the binary symmetric source: $\rvs_t^n = \rvs_{t-1}^n \oplus \rvz_t^n$, where $\{\rvz_t^n\}_{t\ge 0}$ is  an i.i.d.\ binary source process (in both temporal and spatial  dimensions)  with the marginal distribution ${\Pr(\rvz_{t,i}=0)=p}$, the marginal distribution $\Pr(\rvs_{t,i}=0) = \Pr(\rvs_{t,i}=1) =\frac{1}{2}$ and $\oplus$ denotes modulo-2 addition. 
\begin{corol}
\label{thm:binning}
For the class of symmetric Markov sources that satisfy~\eqref{eq:symmetric}, the lossless rate-recovery function when restricted to the class of memoryless encoders i.e., $\rvf_i = \cF_i(\rvs_i^n)$, is given by
\begin{align}
R(B,W) = \frac{1}{W+1} H(\rvs_{B+1}, \rvs_{B+2}, \ldots, \rvs_{B+W+1}|\rvs_{0}).\label{eq:rate-rec-binning}
\end{align}
\hfill$\Box$
\end{corol}
The proof of Corollary~\ref{thm:binning} is presented in Section~\ref{sec:Symmetric}.  The converse is obtained by again using a multi-terminal source coding problem, 
but obtaining a tighter bound by exploiting the memoryless property of the encoders and the symmetric structure~\eqref{eq:symmetric}.


\subsection{Gauss-Markov Sources}
\label{sec:GMS}
We study the lossy rate-recovery function when  $\{\rvs_{i}^n\}$  is sampled i.i.d. from a zero-mean Gaussian distribution, $\mathcal{N}(0,\sigma^2_{s})$,  along the spatial dimension and forms a first-order Markov chain across the temporal dimension  i.e., 
\begin{align}
\rvs_{i} = \rho \rvs_{i-1} + \rvn_{i} \label{eq:GM-Def}
\end{align} where $\rho \in (0,1)$ and $\rvn_{i} \sim \mathcal{N}(0,\sigma^{2}_{s}(1-\rho^2))$. Without loss of generality we assume $\sigma^2_{s} = 1$. We  consider the quadratic distortion measure $d(\rvs_{i}, \hat{\rvs}_{i}) = (\rvs_{i} -\hat{\rvs}_{i})^2$ between the source symbol $\rvs_{i}$ and its reconstruction $\hat{\rvs}_{i}$.
In this paper we  focus on the special case of $W=0$, where the reconstruction must begin immediately after the burst erasure.  We briefly remark about the case when $W>0$ at the end of Section~\ref{sec:UpperGM}.
As stated before unlike the lossless case, the results of Gauss-Markov sources for single burst erasure channels do not readily extend to the multiple burst erasures case.  Therefore, we treat the two cases separately.         

\subsubsection{Channels with Single Burst Erasure}
\label{sec:GMS_SE}
In this channel model, as stated in \eqref{eq:chModel}, we assume that  the channel can introduce a single burst erasure of length up to $B$ during the transmission period. Define ${R_{\textrm{GM-SE}}(B,D)} \triangleq {R(B, W=0,D)}$ as the lossy rate-recovery function of Gauss-Markov sources with single burst erasure channel model. 
\begin{prop}[Lower Bound--Single Burst]
\label{prop:GML}
The lossy rate-recovery function of the Gauss-Markov source for single burst erasure channel model when $W=0$ satisfies
\begin{multline}
R_{\textrm{GM-SE}}(B,D) \ge R^{-}_{\textrm{GM-SE}}(B,D) \triangleq \\ \frac{1}{2}   \log \left( \frac{D\rho^2+1-\rho^{2(B+1)} + \sqrt{\Delta}}{2D}\right) \label{eq:thm-1}
\end{multline} where $\Delta \triangleq (D\rho^2+1-\rho^{2(B+1)})^2 - 4D\rho^2(1-\rho^{2B})$.
$\hfill\Box$
\end{prop}

The proof of Prop.~\ref{prop:GML} is presented in Section~\ref{sec:LowerGM}. The proof considers the recovery of a source sequence $\rvs_t^n$, given a burst erasure in the interval $[t-B, t-1]$ and extends the lower bounding technique in Theorem~\ref{thm:genUB_LB} to incorporate the distortion constraint.


\begin{center}
\begin{figure*}
\begin{minipage}[t]{0.45\linewidth}
\begin{center}
\vspace{1em}
\includegraphics[width=0.9\textwidth]{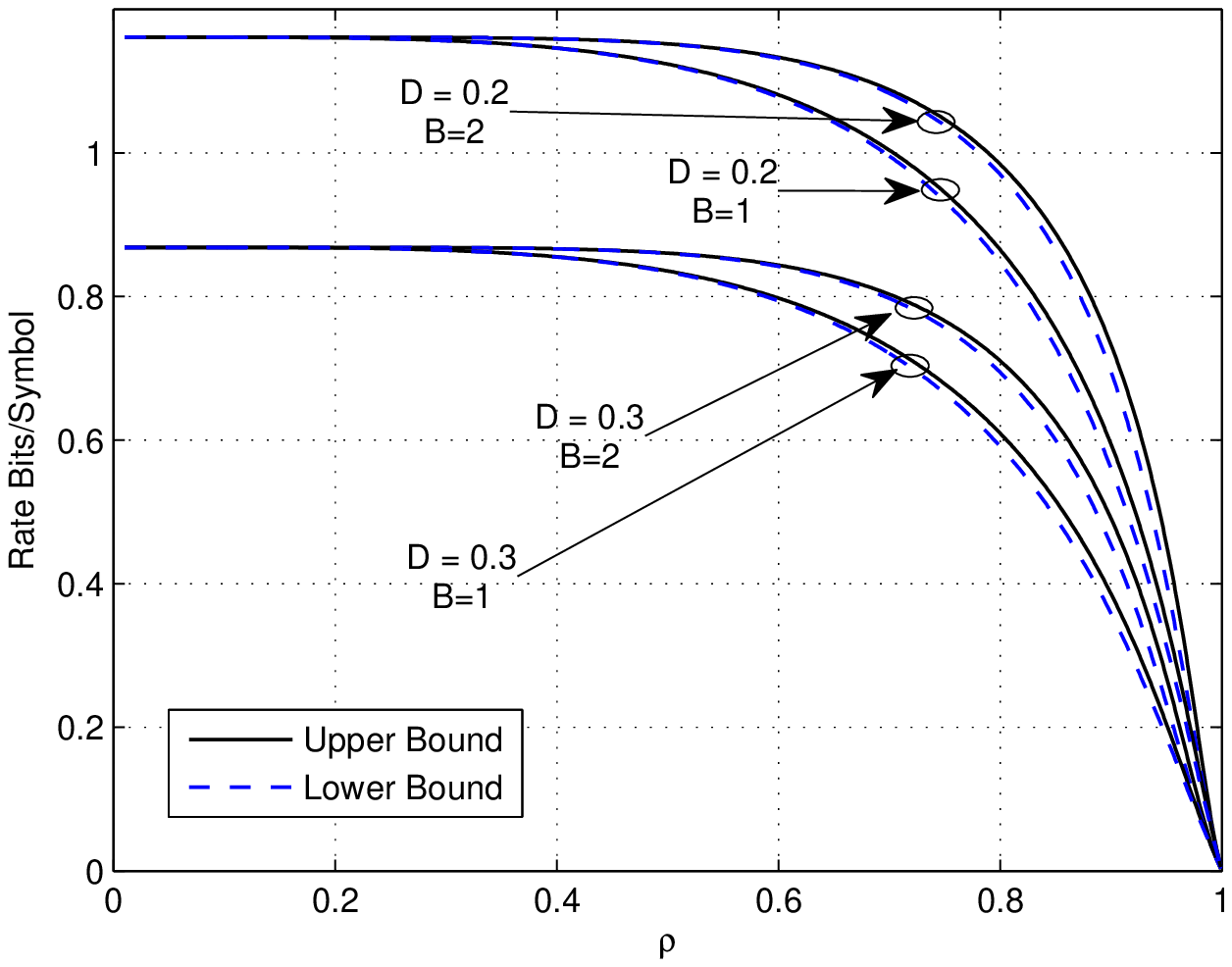}
\caption{Lower and upper bounds of lossy rate-recovery function $R_{\textrm{GM-SE}}(B,D)$ versus $\rho$ for $D=0.2$, $D=0.3$ and $B=1$, $B=2$.}
\label{fig:GM1}
\end{center}
\end{minipage}\hspace{0.7cm}\begin{minipage}[t]{0.45\linewidth}
\begin{center}
\vspace{1em}
\includegraphics[width=0.9\textwidth]{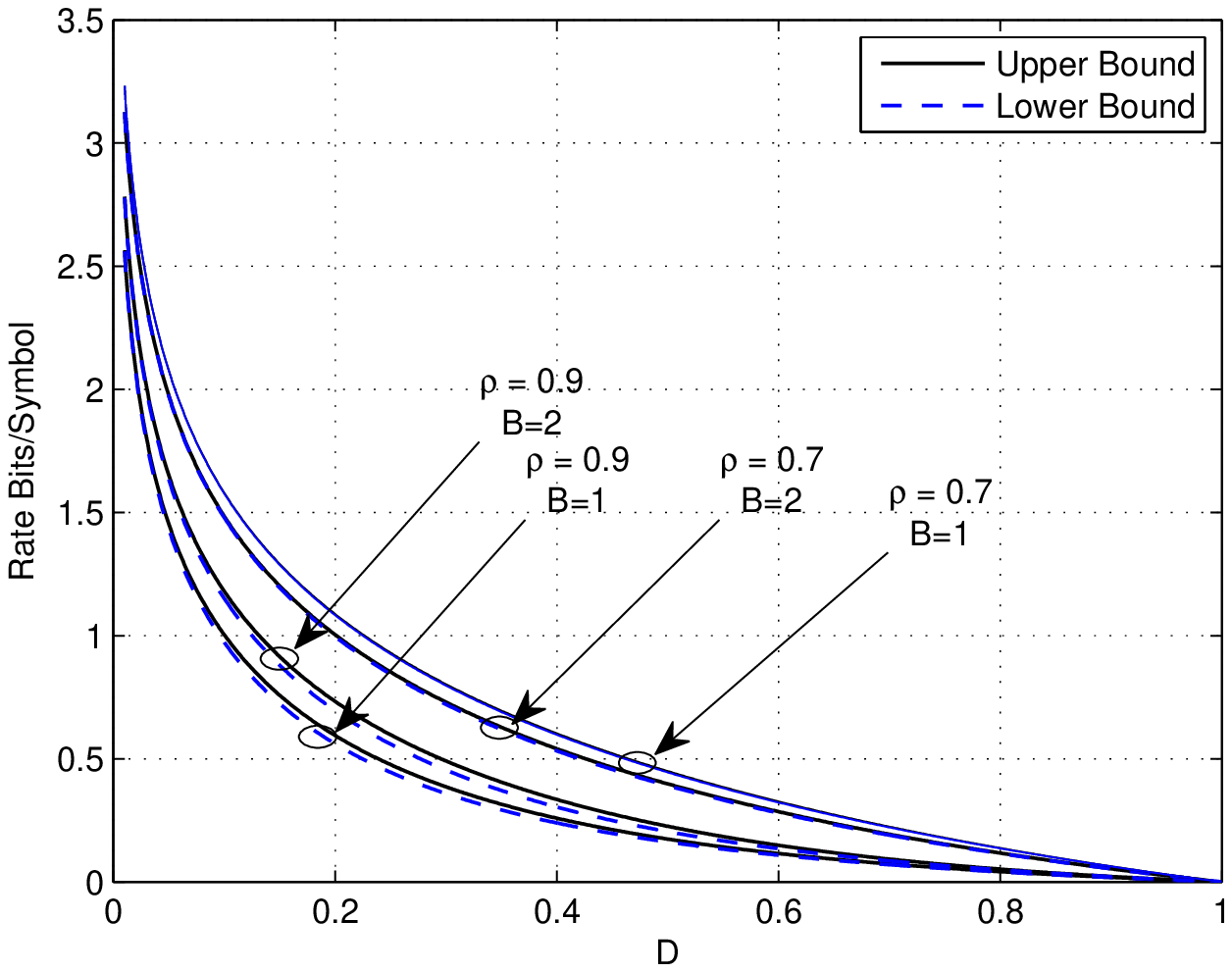}
\caption{Lower and upper bounds of lossy rate-recovery function $R_{\textrm{GM-SE}}(B,D)$ versus $D$ for $\rho=0.9$, $\rho=0.7$ and $B=1$, $B=2$.}
\label{fig:GM2}
\end{center}
\end{minipage}
\end{figure*}
\end{center}

\begin{prop}[Upper Bound--Single Burst]
\label{prop:GMAch}
The lossy rate-recovery function of the Gauss-Markov source for single burst erasure channel model when $W=0$ satisfies
\begin{align}
R_{\textrm{GM-SE}}(B,D) \le R^{+}_{\textrm{GM-SE}}(B,D) &\triangleq  I(\rvs_{t};\rvu_{t} | \tilde{\rvs}_{t-B})\label{eq:GM-LB-R}
\end{align} where $\rvu_{t} \triangleq \rvs_{t} + \rvz_{t},$ and $\rvz_{t}$ is sampled i.i.d.\ from $\mathcal{N}(0, \sigma^2_{z})$. Also $\tilde{\rvs}_{t-B} \defeq {\rvs}_{t-B} +\rve$ and  $\rve\sim \mathcal{N}\left(0, \Sigma(\sigma^{2}_{z})/(1-\Sigma(\sigma^{2}_{z}))\right)$ with 
 \begin{multline}
\Sigma(\sigma^{2}_{z}) \triangleq \\ \frac{1}{2}\sqrt{(1-\sigma^2_{z})^2(1-\rho^2)^2+4\sigma^2_{z}(1-\rho^2) } + \frac{1-\rho^2}{2}(1-\sigma^2_{z}) \label{eq:Sigma1},
\end{multline} is independent of all other random variables. The test channel noise $\sigma^2_{z}>0$ is chosen to satisfy
\begin{align}
\left[\frac{1}{\sigma^2_{z}} + \frac{1}{1-\rho^{2B}(1-\Sigma(\sigma^2_z))}\right]^{-1} \le D.
\end{align} This is equivalent to $\sigma^2_{z}$ satisfying
\begin{align}
 E\left[(\rvs_{t}-\hat{\rvs}_t)^2\right] \le D, \label{eq:GM-LB-RR}
\end{align} where $\hat{\rvs}_{t}$ denotes the minimum mean square estimate (MMSE) of $\rvs_{t}$ from $\{\tilde{\rvs}_{t-B}, \rvu_{t}\}$.  
$\hfill\Box$
\end{prop}


The following alternative rate expression for the achievable rate in Prop.~\ref{prop:GMAch}, provides a more explicit interpretation of the coding scheme.
\begin{align}
R^{+}_{\textrm{GM-SE}}(B,D) =  \lim_{t\to \infty} I(\rvs_{t};\rvu_{t} | [\rvu]_{0}^{t-B-1})\label{eq:asym1}
\end{align} where the random variables $\rvu_t$ are obtained using the same test channel in Prop.~\ref{prop:GMAch}. Notice that the test channel noise $\sigma^2_{z}>0$ is chosen to satisfy $E\left[(\rvs_{t}-\hat{\rvs}_t)^2\right] \le D$ where $\hat{\rvs}_{t}$ denotes the MMSE of $\rvs_{t}$ from $\{[\rvu]_{0}^{t-B-1}, \rvu_{t}\}$ in steady state, i.e. $t\to \infty$. Notice that~\eqref{eq:asym1} is based on a quantize and binning scheme when the receiver has side information sequences $\{\rvu_0^n, \ldots, \rvu_{t-B-1}^n\}$. 
The proof of Prop.~\ref{prop:GMAch} which is presented in Section~\ref{sec:UpperGM} also involves establishing that the worst case erasure pattern during the recovery of  $\hat{\rvs}_t^n$ spans the interval $[t-B-1,t-1]$. The proof is considerably more involved as the reconstruction sequences $\{\rvu_t^n\}$ do not form a Markov chain.

As we will show subsequently, the upper and lower bounds in Prop.~\ref{prop:GML} and Prop.~\ref{prop:GMAch} coincide in the high resolution limit. Numerical evaluations suggest that the bounds are close for a wide range of parameters.  Fig.~\ref{fig:GM1} and Fig.~\ref{fig:GM2} illustrate some sample comparison plots. 

\begin{figure*}
        \centering
        \begin{subfigure}[b]{0.405\textwidth}
                \centering
                \includegraphics[width=\textwidth]{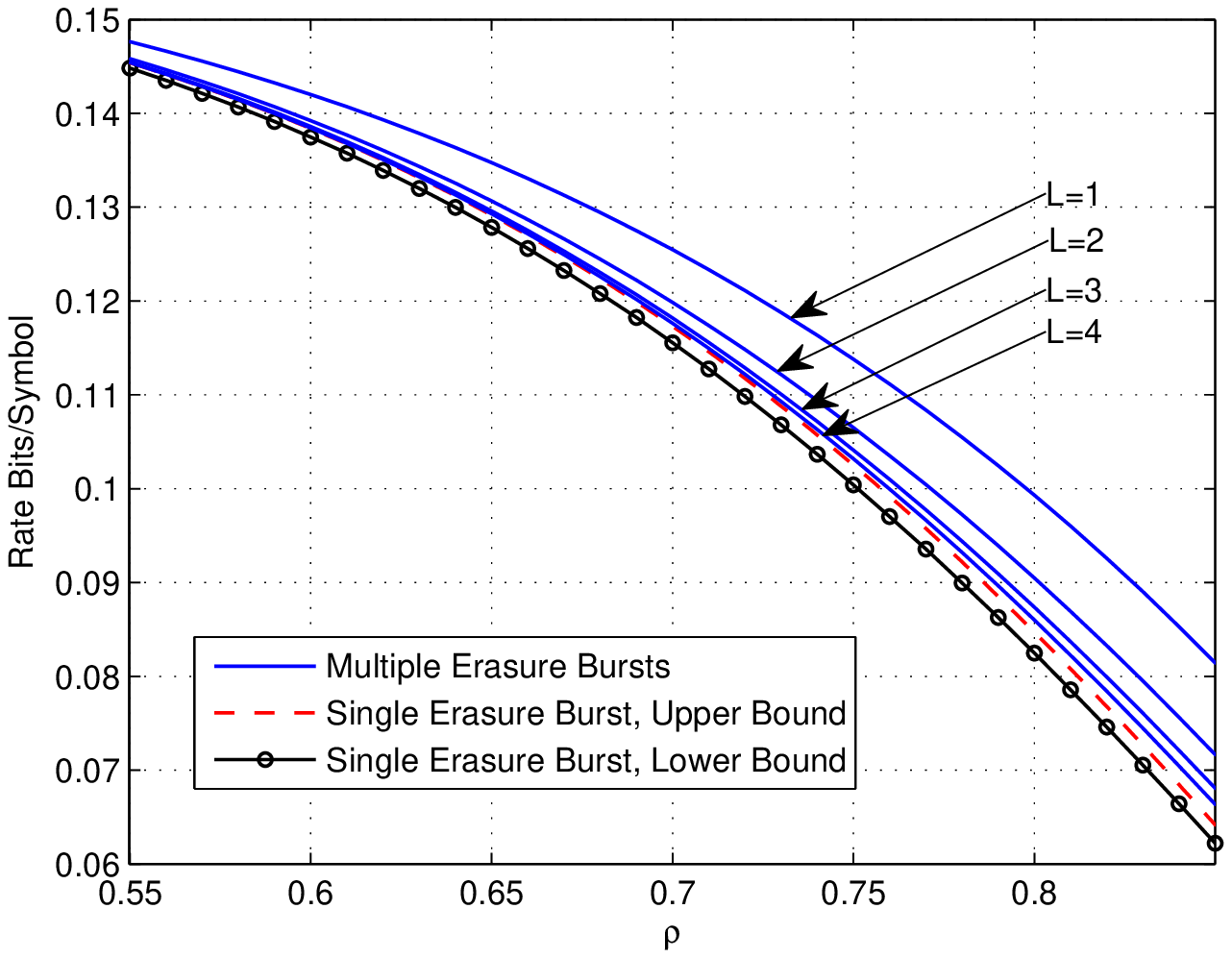}
                \caption{$D=0.8$}
        \end{subfigure}%
        ~ 
        \begin{subfigure}[b]{0.405\textwidth}
                \centering
                \includegraphics[width=\textwidth]{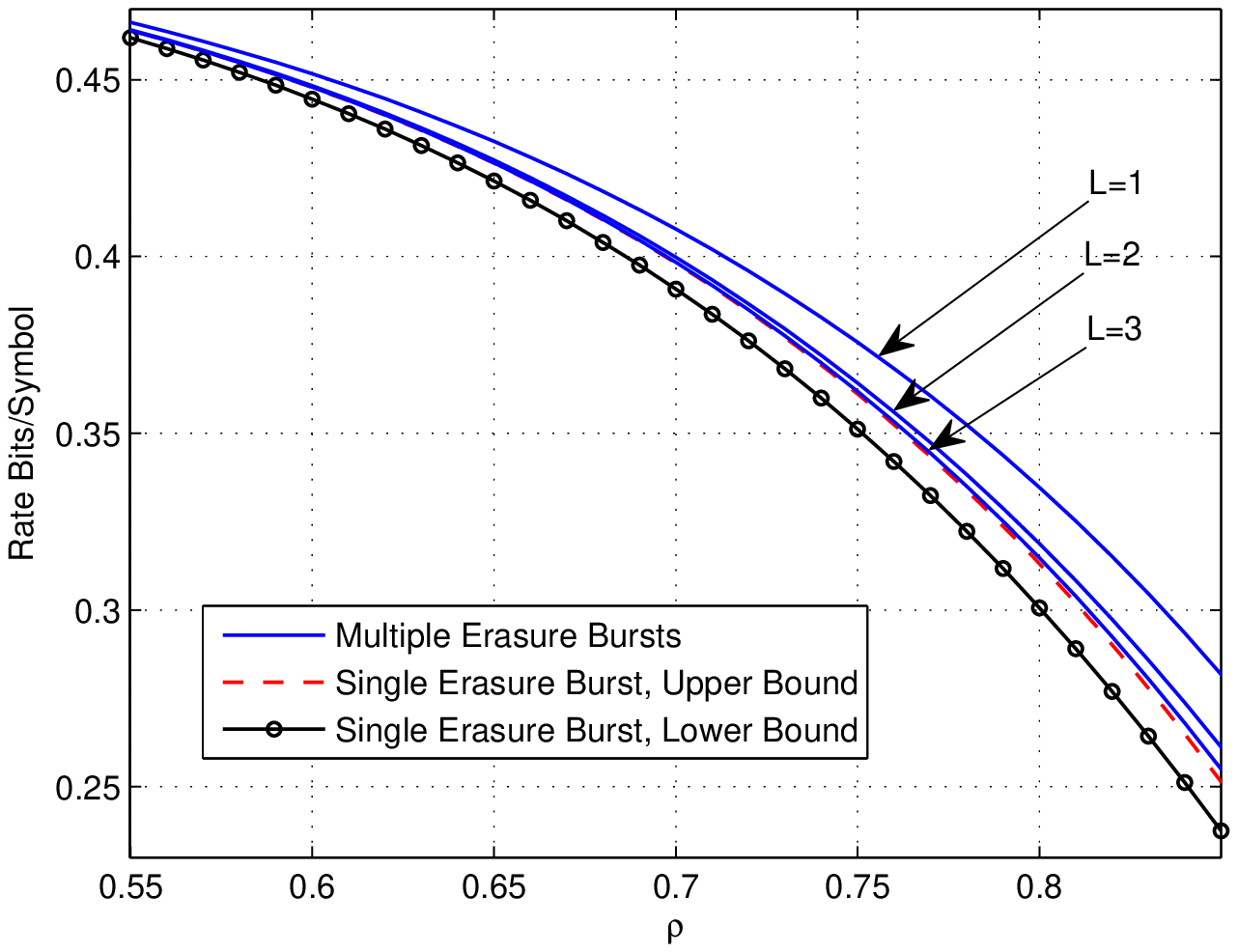}
                \caption{$D=0.5$}
        \end{subfigure}
        \caption{Achievable rates for multiple burst erasures model for different values of guard length $L$ separating burst erasures comparing to single burst erasure. As $L$ grows, the rate approaches the single erasure case. The lower bound for single erasure case is also plotted for comparison ($B=1$).}
        \label{fig:GMvsSE}
\end{figure*}

\subsubsection{Channels with  Multiple Burst Erasures}
\label{sec:GMS_ME} 
We also consider the case where the channel can introduce multiple burst erasures, each of length no greater than $B$
and with a guard interval of length at-least $L$ separating consecutive bursts. The encoder is defined as in \eqref{eq:f-enc}. We again only consider the case when $W=0$.   Upon observing the sequence $\{\rvg_i\}_{i\ge 0}$, the decoder is required to reconstruct each source sequence with zero delay, i.e., 
\begin{align}
\hat{\rvs}^{n}_i=\mathcal{G}_{i}(\rvg_0, \rvg_1, \ldots, \rvg_{i}, \rvs_{-1}^n), \quad \textrm{ whenever } \rvg_{i} \neq \star \label{eq:decoder-def-ME}
\end{align} such that the reconstructed source sequence $\hat{\rvs}^{n}_i$ satisfies an average mean square distortion of $D$. The destination is not required to produce the source vectors that appear during any of the burst erasures. The rate $R(L,B,D)$ is feasible if a sequence of encoding and decoding functions exists that satisfies the average distortion constraint. The minimum feasible rate $R_{\textrm{GM-ME}}(L,B,D)$, is the lossy rate-recovery function.

\begin{prop}[Upper Bound--Multiple Bursts]
\label{prop:GM-ME}
The lossy rate-recovery function  $R_{\textrm{GM-ME}}(L,B,D)$  for Gauss-Markov sources over the multiple burst erasures channel satisfies the following upper bound:
\begin{multline}
R_{\textrm{GM-ME}}(L,B,D) \le R_{\textrm{GM-ME}}^{+}(L,B,D)\triangleq \\ I(\rvu_{t};\rvs_{t} | \tilde{\rvs}_{t-L-B}, [\rvu]_{t-L-B+1}^{t-B-1}) \label{eq:GM-ME-Ach}
\end{multline} where $\tilde{\rvs}_{t-L-B} = {\rvs}_{t-L-B} +\rve$, where $\rve\sim \mathcal{N}(0, D/(1-D))$. Also for any $i$, $\rvu_{i} \triangleq \rvs_{i} + \rvz_{i}$ and $\rvz_{i}$ is sampled i.i.d.\ from $\mathcal{N}(0, \sigma^2_{z})$ and the noise in the test channel, $\sigma^{2}_{z}>0$  satisfies
\begin{align}
E\left[(\rvs_{t}-\hat{\rvs}_t)^2\right] \le D \label{eq:GM-ME-Ach-D}
\end{align}  and $\hat{\rvs}_{t}$ denotes the MMSE estimate of $\rvs_{t}$ from $\{\tilde{\rvs}_{t-L-B}, [\rvu]_{t-L-B+1}^{t-B-1}, \rvu_{t}\}$. $\hfill\Box$
\end{prop}

The proof of Prop.~\ref{prop:GM-ME} presented in Section~\ref{sec:GM_ME} is again based on quantize-and-binning technique and involves characterizing the worst-case erasure pattern by the channel. 
Note also that the rate expression in \eqref{eq:GM-ME-Ach} depends on the minimum guard spacing $L$, the maximum burst erasure length $B$ and distortion $D$, but  is not a function of time index $t$, as the test channel is time invariant and the source process is stationary. An expression for computing $\sigma_z^2$ is provided in Section~\ref{sec:GM_ME}. While we do not provide a lower bound for $R_{\textrm{GM-ME}}(L,B,D)$
we remark that the lower bound in Prop.~\ref{prop:GML} also applies to the multiple burst erasures setup.


Fig.~\ref{fig:GMvsSE} provides numerical evaluation of the achievable rate for different values of $L$. We note that even for  $L$ as small as $4$,  the achievable rate in Prop.~\ref{prop:GM-ME} is virtually identical to the rate for single burst erasure in Prop.~\ref{prop:GMAch}. This strikingly fast convergence to the single burst erasure rate appears due to the exponential decay in  the correlation coefficient between source samples as time-lag increases.

%

\subsubsection{High Resolution Regime}
\label{sec:highres}

For both the single and multiple burst erasures models, the upper and lower bounds on lossy rate-recovery function for $W=0$ denoted by $R(L, B, D)$ coincide in the high resolution limit as stated below. 
\begin{corol}
\label{corol:HR}
In the high resolution limit, the Gauss-Markov lossy rate-recovery function satisfies the following:
\begin{align}
R(L, B, D) = \frac{1}{2} \log\left( \frac{1-\rho^{2(B+1)}}{D} \right) + o(D)  \label{eq:HR} .
\end{align} where $\lim _{D\to 0 }o(D)=0$. \hfill$\Box$
\end{corol}
\begin{figure}
\centering
\includegraphics[width=.5\textwidth]{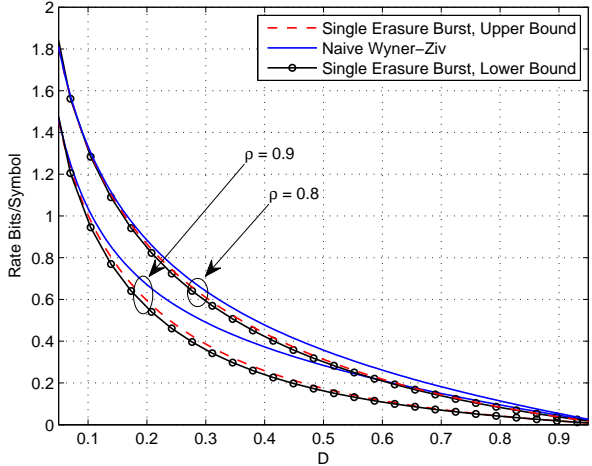}
\caption{A comparison of achievable rates for the Gauss-Markov source ($B=1$).}
\label{fig:HigRes}
\end{figure}

The proof of Corollary~\ref{corol:HR} is presented in Section \ref{sec:HR}. It is based on evaluating the asymptotic behaviour of the lower bound in \eqref{eq:thm-1} and the upper bound in Prop.~\ref{prop:GM-ME}, in high resolution regime. Notice that the rate expression in~\eqref{eq:HR} does not depend on the guard separation $L$. The intuition behind this is as follows. In the high resolution regime, the output of the test channel, i.e. $\rvu_{t}$, becomes very close to the original source $\rvs_{t}$. Therefore the Markov property of the original source is approximately satisfied by these auxiliary random variables and hence the past sequences are not required. The rate in~\eqref{eq:HR} can also be approached by a \emph{Naive Wyner-Ziv} coding scheme that only makes use of the most recently available sequence at the decoder~\cite{oohama1997}. The rate of this scheme is given by:
\begin{align}
R_{\textrm{NWZ}}(B,D) \triangleq I(\rvs_{t} ; \rvu_{t} | \rvu_{t-B-1}) \label{eq:NWZ}
\end{align} where for each $i$, $\rvu_{i} = \rvs_{i} + \rvz_{i}$ and $\rvz_{i} \sim \cN(0,\sigma^2_{z})$ and $\sigma^2_{z}$ satisfies the following distortion constraint
\begin{align}
E[(\rvs_{t}-\hat{\rvs}_{t})^2] \le D 
\end{align} where $\hat{\rvs}_{t}$ is the MMSE estimate of $\rvs_{t}$ from $\{\rvu_{t-B-1}, \rvu_{t}\}$.

Fig.~\ref{fig:HigRes} reveals that while the rate in~\eqref{eq:NWZ} is near optimal in the high resolution limit, it is in general sub-optimal when compared to the rates in~\eqref{eq:GM-ME-Ach} when $\rho=0.9$. As we decrease $\rho$, the performance loss associated with this scheme appears to reduce.

\subsection{Gaussian Sources with Sliding Window Recovery Constraints}
\label{subsec:GSW}

In this section we consider a specialized source model and distortion constraint, where it is possible to improve upon the binning based upper bound. Our proposed scheme attains the rate-recovery function for this special case and is thus optimal. This example illustrates that the binning based scheme can be sub-optimal in general.

{\subsubsection{Source Model}
We consider a sequence of i.i.d.\ Gaussian source sequences i.e., at time $i$, $\rvs^n_{i}$ is sampled i.i.d.\ 
according to a zero mean unit variance Gaussian distribution $\cN(0,1),$ independent of the past sources. At each time we associate an auxiliary source
\begin{align}
\rvbt^n_{i}=\begin{pmatrix}
\rvs^n_{i}~
\rvs^n_{i-1}~\ldots~
\rvs^n_{i-K}\end{pmatrix}
\label{Gauss-model}
\end{align}
which is a collection of the past ${K+1}$ source sequences. Note that $\rvbt_i^n$ constitutes a first-order Markov chain. We will define a reconstruction constraint with the sequence $\rvbt_i^n$.

\subsubsection{Encoder}
The (causal) encoder at time $i$ generates an output given by 
$\rvf_{i} = \cF_i(\rvs_{-1}^n, \ldots, \rvs_i^n) \in [1,2^{nR}]$. 

\subsubsection{Channel Model}
The channel can introduce a burst  erasure of length up to $B$ in an arbitrary interval $[j, j+B-1]$.

\subsubsection{Decoder}
At time $i$ the decoder is interested in reproducing a collection of past $K+1$ sources\footnote{In this section it is sufficient to assume that any source sequence with a time index $j < -1$ is a constant sequence.}
within a distortion vector $\bd=(d_0,d_1,\cdots,d_K)$ i.e., at time $i$ the decoder is interested in reconstructing $(\hat{\rvs}_i^n,\ldots, \hat{\rvs}_{i-K}^n)$ where $E\left[||\rvs_{i-l}^n- \hat{\rvs}_{i-l}^n||^2\right] \le nd_l$ must be satisfied for $l\in[0,K]$.
We assume throughout that $d_0\le d_1\le \ldots \le d_K$ which corresponds to the requirement that the  more recent source sequences must be reconstructed with a smaller average distortion.

In Fig.~\ref{fig:Gauss_Slide}, the source symbols $\rvs_i$ are shown as white circles. The symbols $\rvbt_{i}$ and  $\hat{\rvbt}_i$ are also illustrated for $K=2$. The different shading for the sub-symbols in $\hat{\rvbt}_i$  corresponds to different distortion constraints.

If a burst erasure spans the interval $[j, j+B-1],$  the decoder is not required to output a reproduction of the sequences $\rvbt_i^n$ for $i \in [j, j+B+W-1]$.
}

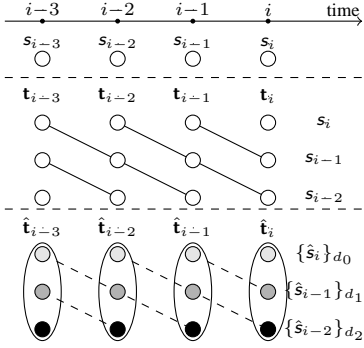
\begin{figure}
\centering
%
%
%
%
%

\begin{tikzpicture}[scale=0.5]

\draw (6,1) ellipse (.5 and 1.3);
\draw  (4,1) ellipse (.5 and 1.3);
\draw  (2,1) ellipse (.5 and 1.3);
\draw (0,1) ellipse (.5 and 1.3);

\draw[dashed] (0,2)--(4,0);
\draw[dashed] (2,2)--(6,0);
\draw[dashed] (4,2)--(6,1);
\draw[dashed] (0,1)--(2,0);

\foreach \t in {0,2,4,6}{
\draw[fill=black!100!white] (\t,0) circle (.2cm);
}
\draw (7.5,0)--(7.5,0) node{$\scriptstyle\{\hat{\rvs}_{i-2}\}_{d_2}$};
\foreach \t in {0,2,4,6}{
\draw[fill=black!30!white] (\t,1) circle (.2cm);
}
\draw (7.5,1)--(7.5,1) node {$\scriptstyle\{\hat{\rvs}_{i-1}\}_{d_1}$};

\foreach \t in {0,2,4,6}{
\draw[fill=black!10!white] (\t,2) circle (.2cm);
}
\draw (7.5,2)--(7.5,2) node {$\scriptstyle\{\hat{\rvs}_{i}\}_{d_0}$};

\draw (6,2.7)--(6,2.7) node {$\scriptstyle \hat{\rvbt}_{i}$};
\draw (4,2.7)--(4,2.7) node {$\scriptstyle\hat{\rvbt}_{i-1}$};
\draw (2,2.7)--(2,2.7) node {$\scriptstyle\hat{\rvbt}_{i-2}$};
\draw (0,2.7)--(0,2.7) node {$\scriptstyle\hat{\rvbt}_{i-3}$};

\draw[dashed] (-1,3.2)--(8.5,3.2);

\draw[-] (0,2+3.5)--(4,0+3.5);
\draw[-] (2,2+3.5)--(6,0+3.5);
\draw[-] (4,2+3.5)--(6,1+3.5);
\draw[-] (0,1+3.5)--(2,0+3.5);

\foreach \t in {0,2,4,6}{
\draw[fill=white] (\t,3.5) circle (.2cm);
}
\draw (7.5,3.5)--(7.5,3.5) node {$\scriptstyle{\rvs}_{i-2}$};
\foreach \t in {0,2,4,6}{
\draw[fill=white] (\t,4.5) circle (.2cm);
}
\draw (7.5,4.5)--(7.5,4.5) node {$\scriptstyle{\rvs}_{i-1}$};

\foreach \t in {0,2,4,6}{
\draw[fill=white] (\t,5.5) circle (.2cm);
}
\draw (7.5,5.5)--(7.5,5.5) node{$\scriptstyle{\rvs}_{i}$};

\draw  (6,2.7+3.5)--(6,2.7+3.5) node {$\scriptstyle{\rvbt}_{i}$};
\draw  (4,2.7+3.5)--(4,2.7+3.5) node {$\scriptstyle{\rvbt}_{i-1}$};
\draw  (2,2.7+3.5)--(2,2.7+3.5) node{$\scriptstyle{\rvbt}_{i-2}$};
\draw  (0,2.7+3.5)--(0,2.7+3.5) node{$\scriptstyle{\rvbt}_{i-3}$};

\draw[dashed] (-1,3.5+3.2)--(8.5,3.5+3.2);

\draw (6,7.6)--(6,7.6) node{$\scriptstyle{\rvs}_{i}$};
\draw (4,7.6)--(4,7.6) node{$\scriptstyle{\rvs}_{i-1}$};
\draw (2,7.6)--(2,7.6) node{$\scriptstyle{\rvs}_{i-2}$};
\draw (0,7.6)--(0,7.6) node{$\scriptstyle{\rvs}_{i-3}$};

\foreach \t in {0,2,4,6}{
\draw[fill=white] (\t,7.2) circle (.2cm);
}

\foreach \t in {0,2,4,6}{
\draw[fill=black] (\t,8.2) circle (.05cm);
}

\draw (8,8.5) -- (8,8.5) node {$\color{black}\scriptstyle\textrm{time}$};
\draw[->] (-1,8.2)--(8.5,8.2);
\draw (6,8.5) -- (6,8.5) node {$\color{black}\scriptstyle{i}$};
\draw (4,8.5) -- (4,8.5) node {$\color{black}\scriptstyle{i-1}$};
\draw (2,8.5) -- (2,8.5) node {$\color{black}\scriptstyle{i-2}$};
\draw (0,8.5) -- (0,8.5) node {$\color{black}\scriptstyle{i-3}$};

\end{tikzpicture}
\caption{Schematic of the Gaussian sources with sliding window recovery constraints for $K=2$. The source $\rvs_i$, drawn as white circles, are independent sources and $\rvbt_i$ is defined as a collection of $K+1=3$ most recent sources. The source symbols along the diagonal lines are the same. The decoder at time $i$ recovers $\rvs_{i}$, $\rvs_{i-1}$ and $\rvs_{i-2}$ within distortions $d_0$, $d_1$ and $d_2$, respectively where $d_0\le d_{1}\le d_{2}$. In figure the colour density of the circle represents the amount of reconstruction distortion. }
\label{fig:Gauss_Slide}
\end{figure}

The {\em lossy rate-recovery function} denoted by $R(B, W, \bd)$ is the minimum rate required to satisfy these constraints.

\begin{remark}
One motivation for considering the above setup is that the decoder might be interested in computing a function of the last $K+1$ source sequences at each time e.g.,, $\rvv_i = \sum_{j=0}^K \al^j \rvs_{i-j}$. A robust coding scheme, when the coefficient $\al$ is not known to the encoder is to communicate $\rvs^n_{i-j}$ with distortion $d_{j}$ at time $i$ to the decoder.
\end{remark}
\begin{thm}
For the proposed Gaussian source model with {a non-decreasing distortion }vector $\bd = (d_0,\ldots, d_K)$ with $0 < d_i \le 1$,  the lossy rate-recovery function is given by
\begin{multline}
R(B,W,\bd) = \frac{1}{2}\log \bigg(\frac{1}{d_0}\bigg) + \\ \frac{1}{W+1}\sum_{k=1}^{\min\{K-W,B\}}\frac{1}{2}\log \bigg(\frac{1}{d_{W+k}}\bigg). \label{eq:det-rate-Gaussian}
\end{multline}$\hfill\Box$
\label{thm:gauss-rate}
\end{thm}

The proof of Theorem~\ref{thm:gauss-rate} is provided in Section~\ref{sec:Gauss}. The coding scheme for the proposed model involves using a successive refinement codebook for each sequence $\rvs_i^n$ to produce $B+1$ layers and carefully assigning the sequence of layered codewords to each channel packet. A simple quantize and binning scheme in general does not achieve the rate-recovery function in Theorem~\ref{thm:gauss-rate}. 
A numerical comparison of the lossy rate-recovery function with other schemes is presented in Section~\ref{sec:Gauss}. 

This completes the statement of the main results in this paper.

%
%
%
%
%

\section{General Upper and Lower Bounds on Lossless Rate-Recovery Function}
\label{sec:THM1}
In this section we present the proof of Theorem~\ref{thm:genUB_LB}. In particular, we show that the rate-recovery function satisfies the following lower bound.
\begin{align}
R\ge R^{-}(B,W) =H(\rvs_{1}|\rvs_{0}) + \frac{1}{W+1} I(\rvs_{B}, \rvs_{B+W+1}|\rvs_{0})\label{eq:lower_rep}.
\end{align} which is inspired by a connection to a
multi-terminal source coding problem introduced in Section~\ref{subsec:gen}. Based on this connection, the proof of the lower bound in general form in \eqref{eq:lower_rep} is presented in Section~\ref{subsec:genLB}.  
Then by proposing a coding scheme based on random binning, we show in Section~\ref{sec:UBLB} that the following rate is achievable. 
\begin{align}
R\ge R^{+}(B,W) =H(\rvs_{1}|\rvs_{0}) + \frac{1}{W+1} I(\rvs_{B}, \rvs_{B+1}|\rvs_{0}).
\end{align}

\subsection{Connection to Multi-terminal Source Coding Problem}
\label{subsec:gen}
We first present a multi-terminal source coding setup which captures the tension inherent in the streaming setup. We focus on the special case when $B=1$ and $W=1$. At any given time $j$ the encoder output $\rvf_j$  must satisfy two objectives simultaneously:  1) if $j$ is outside the error propagation period then the decoder should use $\rvf_j$ and the past sequences to reconstruct $\rvs_j^n$; 2) if $j$ is within the recovery period then $\rvf_j$ must only help in the recovery of a future source sequence. 
\begin{figure}
\begin{center}
\vspace{1em}
\begin{tikzpicture}

\def \a {0}{
\draw [white] (-1,-\a-.8) -- (-1,-\a-.8) node {$\color{black}\rvs^n_{j},\rvs^n_{j+1}$};
\fill [fill=white, draw=gray!50!black] (0,-.3-\a) -- (0,-2-\a) -- (1,-2-\a) -- (1,-.3-\a) -- (0,-.3-\a);
\draw [white] (.5,-\a-1.15) -- (.5,-\a-1.15) node {$\color{black}\scriptstyle\textrm{Encoder}$};
\draw [->] (-.9,-1.15) -- (0,-1.15);
\draw [->] (1,-.4) -- (3,-.4);
\fill [fill=white, draw=gray!50!black] (3,-.8-\a) -- (3,-\a) -- (4,-\a) -- (4,-.8-\a) -- (3,-.8-\a);
\draw [white] (3.5,-\a-0.25) -- (3.5,-\a-0.25) node {$\color{black}\scriptstyle\textrm{Decoder}$};
\draw [white] (3.5,-\a-0.55) -- (3.5,-\a-0.55) node {$\color{black}\scriptstyle{1}$};
\draw [->] (4,-.4) -- (5,-.4);
\draw [white] (5.5,-\a-0.4) -- (5.5,-\a-0.4) node {$\color{black}\hat{\rvs}^n_{j}$};
\draw [white] (2,-\a-0.15) -- (2,-\a-0.15) node {$\color{black}\rvf_{j}$};
\draw [->] (3.5,-\a+1) -- (3.5,-\a);
\draw [white] (4,-\a+.5) -- (4,-\a+.5) node {$\color{black}\rvs^n_{j-1}$};
}
\def \a {1.5}{
\draw [->] (1,-.4-\a) -- (3,-.4-\a);
\fill [fill=white, draw=gray!50!black] (3,-.8-\a) -- (3,-\a) -- (4,-\a) -- (4,-.8-\a) -- (3,-.8-\a);
\draw [white] (3.5,-\a-0.25) -- (3.5,-\a-0.25) node {$\color{black}\scriptstyle\textrm{Decoder}$};
\draw [white] (3.5,-\a-0.55) -- (3.5,-\a-0.55) node {$\color{black}\scriptstyle{2}$};
\draw [->] (4,-.4-\a) -- (5,-.4-\a);
\draw [white] (5.5,-\a-0.4) -- (5.5,-\a-0.4) node {$\color{black}\hat{\rvs}^n_{j+1}$};
\draw [white] (2,-\a-0.15) -- (2,-\a-0.15) node {$\color{black}\rvf_{j+1}$};
\draw [->] (3.5,-\a-1.8) -- (3.5,-\a-.8);
\draw [white] (4.2,-\a-1.3) -- (4.2,-\a-1.3) node {$\color{black}\rvs^n_{j-2}$};
}

\draw [-] (2.5,-.4) -- (2.5,-1.7);
\draw [->] (2.5,-1.7) -- (3,-1.7);

\end{tikzpicture}
\caption{Multi-terminal problem setup associated with our proposed streaming setup when $W=B=1$. The erasure at time $t=j-1$ leads to two virtual decoders with different side information as shown.}
\label{fig:multiterminal}
\end{center}
\end{figure}
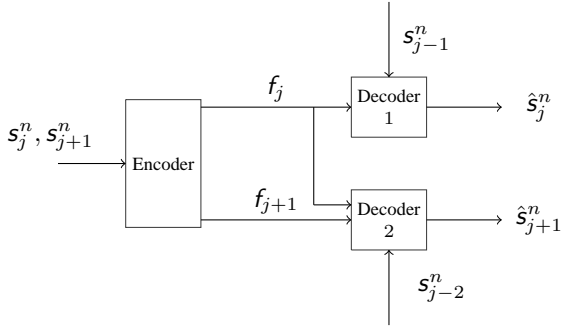

Fig.~\ref{fig:multiterminal} illustrates the multi-terminal source coding problem with one encoder and two  decoders that  captures these constraints.  The sequences $(\rvs_{j}^n, \rvs_{j+1}^n)$ are revealed to the encoder and produces outputs $\rvf_{j}$ and $\rvf_{j+1}$.
Decoder $1$  needs to recover $\rvs_{j}^n$ given $\rvf_{j}$ and $\rvs_{j-1}^n$ while decoder $2$ needs to recover $\rvs_{j+1}^n$
given $\rvs_{j-2}^n$ and $(\rvf_{j}, \rvf_{j+1})$. Thus decoder $1$ corresponds to the steady state  of the system when there is no loss while decoder $2$ corresponds to the recovery immediately after an erasure when $B=1$ and $W=1$. 
We note in advance that the multi-terminal source coding setup does not directly correspond to providing genie-aided side information in the streaming setup. In particular this setup does not account for the fact that the encoder has access to all previous source sequences and the decoders have access to past channel outputs. Nevertheless the main steps of the lower bound developed in the multi-terminal setup are then generalized rather naturally in the formal proof of the lower bound in the next sub-section.

For the above multi-terminal problem, we establish a  lower bound on the sum rate as follows:
\begin{align}
& n(R_1+R_2) \notag \\
&\ge H(\rvf_{j}, \rvf_{j+1}) \notag\\ &\ge H(\rvf_{j},\rvf_{j+1}|\rvs_{j-2}^n)\notag\\
&= H(\rvf_{j}, \rvf_{j+1}, \rvs_{j+1}^n | \rvs_{j-2}^n)- H(\rvs_{j+1}^n | \rvf_{j}, \rvf_{j+1}, \rvs_{j-2}^n)\notag\\
& = H(\rvf_{j}, \rvs_{j+1}^n | \rvs_{j-2}^n) + H(\rvf_{j+1} | \rvf_{j}, \rvs_{j-2}^n, \rvs_{j+1}^n)\notag\\ &\qquad - H(\rvs_{j+1}^n | \rvf_{j}, \rvf_{j+1}, \rvs_{j-2}^n)\label{eq:Aux1}\\
&\ge H(\rvf_{j}, \rvs_{j+1}^n|\rvs_{j-2}^n) - n\eps_n \label{eq:Fano_S3k}\\
&= H(\rvs_{j+1}^n|\rvs_{j-2}^n) + H(\rvf_{j}|\rvs_{j+1}^n, \rvs_{j-2}^n)-n\eps_n\notag\\
&\ge H(\rvs_{j+1}^n|\rvs_{j-2}^n) \!+\! H(\rvf_{j}|\rvs_{j+1}^n, \rvs_{j-1}^n, \rvs_{j-2}^n)-n\eps_n\label{eq:Cond_Red}\\
&\ge H(\rvs_{j+1}^n|\rvs_{j-2}^n) \!+\! H(\rvs^n_{j}|\rvs_{j+1}^n, \rvs_{j-1}^n, \rvs_{j-2}^n)-2n\eps_n\label{eq:Fano_S3k2}\\
&= H(\rvs_{j+1}^n|\rvs_{j-2}^n) \!+\! H(\rvs^n_{j}|\rvs_{j+1}^n, \rvs_{j-1}^n)-2n\eps_n\label{eq:Markov_S3k2}\\
&= nH(\rvs_3|\rvs_0) + nH(\rvs_1 |\rvs_2, \rvs_0) -2n\eps_n \label{eq:3k_der}
\end{align}
where~\eqref{eq:Aux1} follows from the chain rule of entropy, \eqref{eq:Fano_S3k} follows from the fact that $\rvs_{j+1}^n$ must be recovered from $\{\rvf_{j}, \rvf_{j+1}, \rvs_{j-2}^n\}$ at decoder $2$ hence Fano's inequality applies and~\eqref{eq:Cond_Red} follows from the fact that conditioning reduces entropy.
~Eq.~\eqref{eq:Fano_S3k2} follows from Fano's inequality applied to decoder $1$ and~\eqref{eq:Markov_S3k2} follows from the Markov chain associated with the source process.  Finally~\eqref{eq:3k_der} follows from the fact that the source process is memoryless. Dividing throughout by $n$ in~\eqref{eq:3k_der}  and taking $n\rightarrow\infty$ yields 
\begin{equation}
R_1 +R_2 \ge H(\rvs_1|\rvs_0,\rvs_2) + H(\rvs_3|\rvs_0).\label{eq:SumRateLower}
\end{equation}

{\bf Tightness of Lower Bound:} As a side remark, we note that the sum-rate lower bound in \eqref{eq:SumRateLower} can be achieved if Decoder 1 is further revealed $\rvs_{j+1}^n$. Note that the lower bound \eqref{eq:SumRateLower} also applies in this case since the Fano's Inequality applied to decoder 1 in \eqref{eq:Fano_S3k2} has $\rvs_{j+1}^n$ in the conditioning. We claim that $R_1 = H(\rvs_j|\rvs_{j+1},\rvs_{j-1})$ and $R_2 = H(\rvs_{j+1}|\rvs_{j-2})$ are achievable.  The encoder can achieve $R_1$ by  random binning of source $\rvs_{j}^n$ with $\{\rvs_{j-1}^n, \rvs_{j+1}^n\}$ as decoder 1's side information and achieve $R_2$ by random binning of source $\rvs_{j+1}^n$ with $\rvs^n_{j-2}$ as decoder 2's side information. Thus revealing the additional side information of $\rvs^n_{j+1}$ to decoder 1, makes the link connecting $\rvf_{j}$ to decoder 2 unnecessary.

 Also note that the setup in Fig.~\ref{fig:multiterminal} reduces to the source coding problem in~\cite{zigZag} if we set  $\rvs_{j-2}^{n} = \phi$.  It is also a successive refinement source coding problem with different side information at the decoders and special distortion constraints at each of the decoders. 
However to the best of our knowledge the multi-terminal problem in Fig.~\ref{fig:multiterminal} has  not been  addressed in the literature nor has the connection to our proposed streaming setup been considered in earlier works.

In the streaming setup, the symmetric rate i.e., $R_1=R_2=R$ is of interest. Setting this in \eqref{eq:SumRateLower} we obtain: 
\begin{equation}
R \ge \frac{1}{2}H(\rvs_1|\rvs_0,\rvs_2) + \frac{1}{2}H(\rvs_3|\rvs_0).\label{eq:R-MT-LB}
\end{equation}

It can be easily shown that the expression in \eqref{eq:R-MT-LB} and the right hand side of the general lower bound in \eqref{eq:genLB} for $B=W=1$ are the equivalent using a simple calculation.
\begin{align}
&R^-(B=1,W=1) \notag\\
&= H(\rvs_1|\rvs_0)+ \frac{1}{2}I(\rvs_1;\rvs_{3}|\rvs_0)\notag\\
&= H(\rvs_1|\rvs_0) + \frac{1}{2}H(\rvs_3|\rvs_0) -\frac{1}{2}H(\rvs_3 | \rvs_0, \rvs_1)\notag\\
&=\frac{1}{2}H(\rvs_1, \rvs_2 | \rvs_0) + \frac{1}{2}H(\rvs_3 | \rvs_0) - \frac{1}{2}H(\rvs_3 | \rvs_1)\label{eq:MC_Prop1}\\
& = \frac{1}{2}H(\rvs_2 | \rvs_0) + \frac{1}{2}H(\rvs_1|\rvs_0,\rvs_2) + \frac{1}{2}H(\rvs_3 | \rvs_0) - \frac{1}{2}H(\rvs_3 | \rvs_1)\label{eq:Aux2}\\
&= \frac{1}{2}H(\rvs_1|\rvs_0,\rvs_2) + \frac{1}{2}H(\rvs_3|\rvs_0)\label{eq:MC_Prop2}
\end{align}
where  the first term in \eqref{eq:MC_Prop1} follows from the Markov Chain property $\rvs_0 \rightarrow\rvs_1 \rightarrow \rvs_2$, the last term in \eqref{eq:MC_Prop1} follows from the Markov Chain property $\rvs_1 \rightarrow \rvs_2 \rightarrow \rvs_3$ and \eqref{eq:MC_Prop2} follows from the fact that the source model is stationary, thus the first and last term in \eqref{eq:Aux2} are the same.

As noted before the above proof  does not directly apply to the streaming setup as it does not take into account that the decoders have access to all the past encoder outputs, and that the encoder has access to all the past source sequences. We next provide a formal proof of the lower bound that shows that this additional information does not help.

\subsection{Lower Bound on Lossless Rate-Recovery Function}
\label{subsec:genLB}

For any sequence of $(n,2^{nR})$ codes we show that there is a sequence $\eps_n$ that vanishes as $n \rightarrow \infty$ such that
\begin{align}
R \ge H(\rvs_1|\rvs_0) + \frac{1}{W+1} I(\rvs_{B+W+1}; \rvs_B| \rvs_0) - \eps_n \label{eq:rate_UB}.
\end{align}

 We consider that a burst erasure of length $B$ spans the interval $[t-B-W, t-W-1]$ for some $t\ge B+W$. It suffices to lower bound the rate  for this erasure pattern. By considering the interval $[t-W, t]$, following the burst erasure we have the following.
\begin{align}
(W+1)nR &\ge H([\rvf]_{t-W}^{t})\notag\\
&\ge H([\rvf]_{t-W}^{t} | [\rvf]_{0}^{t-B-W-1}, \rvs^n_{-1}) \label{eq:GC1}
\end{align} where \eqref{eq:GC1} follows from the fact that conditioning reduces the entropy. 
By definition, the source sequence $\rvs_{t}^n$ must be recovered from $\{[\rvf]_{0}^{t-B-W-1}, [\rvf]_{t-W}^t, \rvs^n_{-1}\}$ Applying  Fano's inequality we have that 
\begin{align}
H(\rvs_{t}^n| [\rvf]_0^{t-B-W-1}, [\rvf]_{t-W}^t, \rvs_{-1}^n) \le n\eps_n. \label{eq:Fano2}
\end{align} Therefore we have 
\begin{align}
&H([\rvf]_{t-W}^{t}~|~[\rvf]_{0}^{t-B-W-1},\rvs_{-1}^n) \notag\\ &= H(\rvs^n_{t}, [\rvf]_{t-W}^{t}~|~[\rvf]_{0}^{t-B-W-1},\rvs_{-1}^n) \notag\\ &\quad- H(\rvs^n_{t} | [\rvf]_{0}^{t-B-W-1}, [\rvf]_{t-W}^{t},\rvs_{-1}^n)\label{eq:Aux3}\\
&\ge H(\rvs_{t}^n~|~[\rvf]_0^{t-B-W-1},\rvs_{-1}^n) \notag\\ &\quad+  H([\rvf]_{t-W}^{t}~|~\rvs_{t}^n, [\rvf]_{0}^{t-B-W-1},\rvs_{-1}^n) - n\eps_n.\label{eq:FanoApp}
\end{align} where \eqref{eq:Aux3} and the first two terms of \eqref{eq:FanoApp} follow from the application of chain rule and the last term in \eqref{eq:FanoApp} follows form~\eqref{eq:Fano2}.
Now we bound each of the two terms in~\eqref{eq:FanoApp}. First we note that:
\begin{align}
&H(\rvs_{t}^n|[\rvf]_0^{t-B-W-1},\rvs_{-1}^n) \notag\\
&\qquad \ge H(\rvs_{t}^n|[\rvf]_0^{t-B-W-1}, \rvs_{t-B-W-1}^n,\rvs_{-1}^n)\label{eq:Aux4}\\
&\qquad= H(\rvs_{t}^n|\rvs_{t-B-W-1}^n) \label{eq:Aux5}\\
&\qquad= H(\rvs_{B+W+1}^n|\rvs_{0}^n)\label{eq:Aux6} \\
&\qquad= nH(\rvs_{B+W+1}|\rvs_0),\label{eq:Markov2}
\end{align} where \eqref{eq:Aux4} follows from the fact that conditioning reduces entropy and \eqref{eq:Aux5} follows from the Markov relation $$(\rvs_{-1}^n,[\rvf]_0^{t-B-W-1}) \rightarrow \rvs_{t-B-W-1}^n \rightarrow \rvs_{t}^n.$$ Eq.~\eqref{eq:Aux6} and \eqref{eq:Markov2} follow from the stationary and memoryless  source model.

Furthermore the second term in~\eqref{eq:FanoApp} can be lower bounded using the following series of inequalities.
\begin{align}
&H\left([\rvf]_{t-W}^{t}~|~\rvs_t^n,[\rvf]_{0}^{t-B-W-1},\rvs_{-1}^n\right) \notag\\
&\ge H\left([\rvf]_{t-W}^{t-1}~\big|~\rvs_{t}^n,[\rvf]_{0}^{t-W-1},\rvs_{-1}^n\right) \label{eq:cond_f3k}\\
&= H\left([\rvf]_{t-W}^{t-1}, \rvs^n_{t-W}, \ldots,\rvs_{t-1}^n |\rvs_{t}^n,[\rvf]_{0}^{t-W-1},\rvs_{-1}^n\right) \notag\\
&\qquad - H\left(\rvs_{t-W}^n, \ldots,\rvs_{t-1}^n\big|\rvs_{t}^n,[\rvf]_{0}^{t-1},\rvs_{-1}^n\right)\notag\\
&\ge H\left([\rvf]_{t-W}^{t-1}, \rvs_{t-W}^n,  \ldots,\rvs_{t-1}^n |\rvs_{t}^n,[\rvf]_{0}^{t-W-1},\rvs_{-1}^n\right) - Wn\eps_n \label{eq:FanoApp2}\\
&\ge H\left(\rvs_{t-W}^n, \ldots,\rvs_{t-1}^n\big|\rvs_{t}^n,[\rvf]_{0}^{t-W-1},\rvs_{-1}^n\right)- Wn\eps_n \notag\\
&\ge H\left(\rvs_{t-W}^n, \ldots,\rvs_{t-1}^n\big|\rvs_{t}^n,[\rvf]_{0}^{t-W-1}, \rvs_{t-W-1}^n,\rvs_{-1}^n\right) - Wn\eps_n \notag\\
& = H\left(\rvs_{t-W}^n, \rvs_{t-W+1}^n, \ldots,\rvs_{t-1}^n\big|\rvs_{t}^n, \rvs_{t-W-1}^n \right) - Wn\eps_n\label{eq:CondMarkov}\\
&= n H(\rvs_{B+1},\rvs_{B+2},\ldots,\rvs_{B+W}|\rvs_B,\rvs_{B+W+1})- Wn\eps_n\label{eq:Aux7}\\
&= n H(\rvs_{B+1}, \rvs_{B+2}, \ldots, \rvs_{B+W},\rvs_{B+W+1} |\rvs_B) \notag\\ &\quad - n H(\rvs_{B+W+1} |\rvs_B) - Wn\eps_n\notag\\
&= n (W+1)H(\rvs_1 |\rvs_0)- n H(\rvs_{B+W+1} |\rvs_B) - Wn\eps_n \label{eq:term2}
\end{align}
Note that in \eqref{eq:cond_f3k}, in order to lower bound the entropy term, we reveal the codewords $[\rvf]_{t-B-W}^{t-W-1}$ which is not originally available at the decoder and exploit the fact that conditioning reduces the entropy. This step in deriving the lower bound may not be necessarily tight, however it is the best lower bound we have for the general problem. Also~\eqref{eq:FanoApp2} follows from the fact that according to the problem setup $\{\rvs_{t-W}^n, \ldots,\rvs_{t-1}^n\}$ must be decoded when $\rvs_{-1}^n$ and all the channel codewords before time $t$, i.e. $[\rvf]_0^{t-1}$,  are available at the decoder, Hence Fano's inequality again applies. The expression above~\eqref{eq:CondMarkov} also follows from conditioning reduces entropy.
Eq.~\eqref{eq:CondMarkov}  follows from the fact that 
\begin{align}
(\rvs_{-1}^n,[\rvf]_0^{t-W-1})\rightarrow \rvs_{t-W-1}^n\rightarrow (\rvs_{t-W}^n, \ldots,\rvs_{t-1}^n).
\end{align} 
Eq.~\eqref{eq:Aux7} and \eqref{eq:term2} follow from memoryless and stationarity of the source sequences.
Combining~\eqref{eq:FanoApp},~\eqref{eq:Markov2} and~\eqref{eq:term2} we have that
\begin{multline}
H\left([\rvf]_{t-W}^{t}~\big|~[\rvf]_{0}^{t-B-W-1},\rvs_{-1}^n\right) 
 \ge n H(\rvs_{B+W+1} |\rvs_0) +  \\ n (W+1)H(\rvs_1 |\rvs_0)- n H(\rvs_{B+W+1} |\rvs_B)- (W+1)n\eps_n
\label{eq:LB_term}
\end{multline} Finally from \eqref{eq:LB_term} and~\eqref{eq:GC1} we have that,
\begin{align}
&nR \ge n H(\rvs_1 |\rvs_0) + \notag\\ 
&\quad\frac{n}{W+1} \left[H(\rvs_{B+W+1} |\rvs_0) -  H(\rvs_{B+W+1} |\rvs_B)- (W+1)\eps_n\right]\notag\\
&= n H(\rvs_1 |\rvs_0) + \notag\\ 
& \frac{n}{W+1} \left[H(\rvs_{B+W+1} |\rvs_0) -  H(\rvs_{B+W+1} |\rvs_B, \rvs_0)- (W+1)\eps_n\right]\notag\\
& =  n H(\rvs_1 |\rvs_0) + \frac{n}{W+1} I(\rvs_{B+W+1} ; \rvs_{B} |\rvs_0)- n\eps_n  \label{eq:LB_bound}
\end{align} 
where the second step above follows from the Markov condition $\rvs_0 \rightarrow \rvs_B \rightarrow \rvs_{B+W+1}$.
As we take $n\rightarrow \infty$ we recover~\eqref{eq:rate_UB}.
This completes the proof of the lower bound in Theorem~\ref{thm:genUB_LB}.

We remark that the derived lower bound holds for any $t\ge B+W$. Therefore, the lower bound~\eqref{eq:rate_UB} on lossless rate-recovery function also holds for finite-horizon rate-recovery function whenever $\mathcal{L} \ge B+W$.

Finally we note that in our setup we are assuming a peak rate constraint on $\rvf_t$. If we assume the average rate constraint across $\rvf_t$ the lower bound still applies with minor modifications in the proof. 

\subsection{Upper Bound on Lossless Rate-Recovery Function}
\label{sec:UBLB}
In this section we establish the achievability of $R^+(B,W)$ in Theorem~\ref{thm:genUB_LB} using a  binning based scheme.  At each time the encoding function $\rvf_i$ in~\eqref{eq:f-enc}  is  the bin-index of  a Slepian-Wolf codebook~\cite{Cover95, slepianWolf:73}. Following a burst  erasure in $[j,j+B-1]$, the decoder collects $\rvf_{j+B},\ldots, \rvf_{j+W+B}$ and attempts to jointly recover all the underlying sources  at $t=j+W+B$.  Using 
 Corollary~\ref{corol:genUB} it suffices to show that
\begin{equation}
R^+  = \frac{1}{W+1} H(\rvs_{B+1}, \ldots, \rvs_{B+W+1}|\rvs_0) + \eps\label{eq:SW2} 
\end{equation}
is achievable for any arbitrary $\eps > 0$.

We use a  codebook $\cC$ which is generated by randomly partitioning the set of all typical sequences
~$T_\eps^n(\rvs)$ into $2^{nR^+}$ bins. 
The partitions are revealed to the  decoder ahead of time. 

Upon observing $\rvs_i^n$ the encoder declares an error if $\rvs_i^n \notin T_\eps^n(\rvs)$. Otherwise it finds the bin to which $\rvs_i^n$ belongs to and sends the corresponding bin index $\rvf_i$. We separately consider two possible scenarios at the decoder.

First suppose that the sequence $\rvs_{i-1}^n$ has already been recovered. Then the destination attempts to recover $\rvs_i^n$ from  $(\rvf_i, \rvs_{i-1}^n)$. This succeeds with high probability if $R^+ > H(\rvs_1 |\rvs_0)$, which is guaranteed via~\eqref{eq:SW2}. If we define probability of the error event $\mathcal{E}_{i}\triangleq \{\hat{\rvs}^n_{i}\neq \rvs^n_{i}\}$ conditioned on the correct recovery of $\rvs^n_{i-1}$, i.e. $ \overline{\mathcal{E}}_{i-1}$, as follows
\begin{align}
P^{(n)}_{e,1} \triangleq P(\mathcal{E}_{i} | \overline{\mathcal{E}}_{i-1}) \label{eq:eps1}
\end{align} then for the rates satisfying $R^+ > H(\rvs_1 |\rvs_0)$ and in particular for $R^{+}$ in \eqref{eq:SW2}, it is guaranteed that
\begin{align}
\lim_{n\to \infty} P^{(n)}_{e,1} = 0\label{eq:lim1}.
\end{align} 

Next consider the case where 
$\rvs_i^n$ is the first sequence to be recovered after the burst erasure. In particular the burst erasure spans the interval ${[ i-B'-W,i-W-1]}$ for some $B' \le B$. The decoder thus has access to $\rvs_{i-B'-W-1}^n$, before the start of the burst erasure. Upon receiving $\rvf_{i-W},\ldots, \rvf_i$ the destination simultaneously attempts to recover $(\rvs_{i-W}^n,\ldots, \rvs_i^n)$ given $(\rvs_{i-B'-W-1}^n, \rvf_{i-W},\ldots, \rvf_i)$. This succeeds with high probability if,
\begin{align}
(W+1)nR &= \sum_{j=i-W}^i H(\rvf_j) \\
&> nH(\rvs_{i-W},\ldots, \rvs_i|\rvs_{i-B'-W-1})\\
&=nH(\rvs_{B'+1},\ldots, \rvs_{B'+W+1}|\rvs_0)\label{eq:s-tinv}
\end{align}where~\eqref{eq:s-tinv} follows from the fact that the sequence of variables $\rvs_i$ is a stationary process. Whenever $B' \le B$ it immediately follows that~\eqref{eq:s-tinv}  is also guaranteed by~\eqref{eq:SW2}. Define $P^{(n)}_{e,2}$ as the probability of error in $\rvs^n_{i}$  given $(\rvs_{i-B-W-1}^n, \rvf_{i-W},\ldots, \rvf_i)$, i.e. 
\begin{align}
P_{e,2}^{(n)}\triangleq P(\mathcal{E}_{i} | \overline{\mathcal{E}}_{i-B-W-1}) \label{eq:eps2}.
\end{align}  For rate satisfying \eqref{eq:s-tinv}, which is satisfied through \eqref{eq:SW2}, it is guaranteed that  
\begin{align}
\lim_{n\to \infty} P^{(n)}_{e,2} = 0\label{eq:lim2}.
\end{align}

{\bf Analysis of the Streaming Decoder:} As described in problem setup, the decoder is interested in recovering all the source sequences outside the error propagation window with vanishing probability of error. Assume a communication duration of $\mathcal{L}$ and a single burst erasure of length $0<B'\le B$ spanning the interval $[j, j+B'-1]$, for $0\le j\le \mathcal{L}$. The decoder fails if at least one source sequences outside the error propagation window is erroneously recovered, i.e. $\hat{\rvs}^n_{i} \neq \rvs^n_{i}$ for some $i\in [0,j-1]\cup[j+B'+W+1,\mathcal{L}]$. For this particular channel erasure pattern, the probability of decoder's failure, denoted by $P^{(n)}_{\textrm{F}}$, can be bounded  as follows.  
\begin{align}
P^{(n)}_{\textrm{F}} &\le  \sum_{k=0}^{j-1}P(\mathcal{E}_{k} | \overline{\mathcal{E}}_{0}, \overline{\mathcal{E}}_{1}, \ldots ,\overline{\mathcal{E}}_{k-1}) +\notag\\ & P(\mathcal{E}_{j+B'+W+1} | \overline{\mathcal{E}}_{0}, \ldots,\overline{\mathcal{E}}_{j-1}) +\notag\\
&  \sum_{k=j+B'+W+2}^{\mathcal{L}}\!\!\!\! P(\mathcal{E}_{k}| \overline{\mathcal{E}}_{0}, \ldots, \overline{\mathcal{E}}_{j-1}, \overline{\mathcal{E}}_{j+B'+W+1}, \ldots,\overline{\mathcal{E}}_{k-1} ) \label{eq:out1}\\
& = (\mathcal{L}-B'-W) P^{(n)}_{e,1} + P^{(n)}_{e,2}\le \mathcal{L}  P^{(n)}_{e,1} + P^{(n)}_{e,2}\label{eq:out2}
\end{align} where  $P^{(n)}_{e,1}$ and $P^{(n)}_{e,2}$ are defined in \eqref{eq:eps1} and \eqref{eq:eps2}. Eq.~\eqref{eq:out2} follows from the fact that, because of the Markov property of the source model, all the terms in the first and the last summation in \eqref{eq:out1} are the same and equal to $P_{e,1}^{(n)}$. 

According to \eqref{eq:lim1} and \eqref{eq:lim2}, for any rate satisfying \eqref{eq:SW2} and for any $\mathcal{L}$, $n$ can be chosen large enough such that the upper bound on $P^{(n)}_{\textrm{F}}$ in \eqref{eq:out2} approaches zero. Thus the decoder fails with vanishing probability for any fixed $\mathcal{L}$. This in turn establishes the upper bound on $R(B,W)$, when ${\mathcal L} \rightarrow\infty$.
This completes the justification of the upper bound.

%
%
%
%
%

\section{Symmetric Sources: Proof of Corollary~\ref{thm:binning}}
\label{sec:Symmetric}
\begin{figure*}
\begin{center}
\begin{tikzpicture}

\def \a {0}{
\draw [white] (-1.5,-\a-.4) -- (-1.5,-\a-.4) node {$\color{black}\rvs_{j}^{n}$};
\fill [fill=white, draw=gray!50!black] (0,-.8-\a) -- (0,-\a) -- (1,-\a) -- (1,-.8-\a) -- (0,-.8-\a);
\draw [white] (.5,-\a-0.25) -- (.5,-\a-0.25) node {$\color{black}\scriptstyle\textrm{Encoder}$};
\draw [white] (.5,-\a-0.55) -- (.5,-\a-0.55) node {$\color{black}\scriptstyle{j}$};
\draw [->] (-1.3,-.4) -- (0,-.4);
\draw [->] (1,-.4) -- (3,-.4);
\fill [fill=white, draw=gray!50!black] (3,-.8-\a) -- (3,-\a) -- (4,-\a) -- (4,-.8-\a) -- (3,-.8-\a);
\draw [white] (3.5,-\a-0.25) -- (3.5,-\a-0.25) node {$\color{black}\scriptstyle\textrm{Decoder}$};
\draw [white] (3.5,-\a-0.55) -- (3.5,-\a-0.55) node {$\color{black}\scriptstyle{j}$};
\draw [->] (4,-.4) -- (5,-.4);
\draw [white] (5.5,-\a-0.4) -- (5.5,-\a-0.4) node {$\color{black}\hat{\rvs}^n_{j}$};
\draw [white] (2,-\a-0.15) -- (2,-\a-0.15) node {$\color{black}\rvf_{j}$};
\draw [->] (3.5,-\a+1) -- (3.5,-\a);
\draw [white] (4,-\a+.5) -- (4,-\a+.5) node {$\color{black}\rvs^n_{j-1}$};
}
\def \a {1.5}{
\draw [white] (-1.5,-\a-.4) -- (-1.5,-\a-.4) node {$\color{black}\rvs_{j+1}^{n}$};
\fill [fill=white, draw=gray!50!black] (0,-.8-\a) -- (0,-\a) -- (1,-\a) -- (1,-.8-\a) -- (0,-.8-\a);
\draw [white] (.5,-\a-0.25) -- (.5,-\a-0.25) node {$\color{black}\scriptstyle\textrm{Encoder}$};
\draw [white] (.5,-\a-0.55) -- (.5,-\a-0.55) node {$\color{black}\scriptstyle{j+1}$};
\draw [->] (-1,-.4-\a) -- (0,-.4-\a);
\draw [->] (1,-.4-\a) -- (3,-.4-\a);
\fill [fill=white, draw=gray!50!black] (3,-.8-\a) -- (3,-\a) -- (4,-\a) -- (4,-.8-\a) -- (3,-.8-\a);
\draw [white] (3.5,-\a-0.25) -- (3.5,-\a-0.25) node {$\color{black}\scriptstyle\textrm{Decoder}$};
\draw [white] (3.5,-\a-0.55) -- (3.5,-\a-0.55) node {$\color{black}\scriptstyle{j+1}$};
\draw [->] (4,-.4-\a) -- (5,-.4-\a);
\draw [white] (5.5,-\a-0.4) -- (5.5,-\a-0.4) node {$\color{black}\hat{\rvs}^n_{j+1}$};
\draw [white] (2,-\a-0.15) -- (2,-\a-0.15) node {$\color{black}\rvf_{j+1}$};
\draw [->] (3.5,-\a-1.8) -- (3.5,-\a-.8);
\draw [white] (4.2,-\a-1.3) -- (4.2,-\a-1.3) node {$\color{black}\rvs^n_{j-B-1}$};
}

\draw [-] (2.5,-.4) -- (2.5,-1.7);
\draw [->] (2.5,-1.7) -- (3,-1.7);
\draw [white] (2,-4) -- (2,-4) node {$\color{black}\textrm{(a)}$};

\def \a {0}{
\draw [white] (6.5,-\a-.4) -- (6.5,-\a-.4) node {$\color{black}\rvs_{j}^{n}$};
\fill [fill=white, draw=gray!50!black] (8,-.8-\a) -- (8,-\a) -- (9,-\a) -- (9,-.8-\a) -- (8,-.8-\a);
\draw [white] (8.5,-\a-0.25) -- (8.5,-\a-0.25) node {$\color{black}\scriptstyle\textrm{Encoder}$};
\draw [white] (8.5,-\a-0.55) -- (8.5,-\a-0.55) node {$\color{black}\scriptstyle{j}$};
\draw [->] (6.7,-.4) -- (8,-.4);
\draw [->] (9,-.4) -- (11,-.4);
\fill [fill=white, draw=gray!50!black] (11,-.8-\a) -- (11,-\a) -- (12,-\a) -- (12,-.8-\a) -- (11,-.8-\a);
\draw [white] (11.5,-\a-0.25) -- (11.5,-\a-0.25) node {$\color{black}\scriptstyle\textrm{Decoder}$};
\draw [white] (11.5,-\a-0.55) -- (11.5,-\a-0.55) node {$\color{black}\scriptstyle{j}$};
\draw [->] (12,-.4) -- (13,-.4);
\draw [white] (13.5,-\a-0.4) -- (13.5,-\a-0.4) node {$\color{black}\hat{\rvs}^n_{j}$};
\draw [white] (10,-\a-0.15) -- (10,-\a-0.15) node {$\color{black}\rvf_{j}$};
\draw [->] (11.5,-\a+1) -- (11.5,-\a);
\draw [white] (12,-\a+.5) -- (12,-\a+.5) node {$\color{black}\rvs^n_{j+1}$};
}
\def \a {1.5}{
\draw [white] (6.5,-\a-.4) -- (6.5,-\a-.4) node {$\color{black}\rvs_{j+1}^{n}$};
\fill [fill=white, draw=gray!50!black] (8,-.8-\a) -- (8,-\a) -- (9,-\a) -- (9,-.8-\a) -- (8,-.8-\a);
\draw [white] (8.5,-\a-0.25) -- (8.5,-\a-0.25) node {$\color{black}\scriptstyle\textrm{Encoder}$};
\draw [white] (8.5,-\a-0.55) -- (8.5,-\a-0.55) node {$\color{black}\scriptstyle{j+1}$};
\draw [->] (7,-.4-\a) -- (8,-.4-\a);
\draw [->] (9,-.4-\a) -- (11,-.4-\a);
\fill [fill=white, draw=gray!50!black] (11,-.8-\a) -- (11,-\a) -- (12,-\a) -- (12,-.8-\a) -- (11,-.8-\a);
\draw [white] (11.5,-\a-0.25) -- (11.5,-\a-0.25) node {$\color{black}\scriptstyle\textrm{Decoder}$};
\draw [white] (11.5,-\a-0.55) -- (11.5,-\a-0.55) node {$\color{black}\scriptstyle{j+1}$};
\draw [->] (12,-.4-\a) -- (13,-.4-\a);
\draw [white] (13.5,-\a-0.4) -- (13.5,-\a-0.4) node {$\color{black}\hat{\rvs}^n_{j+1}$};
\draw [white] (10,-\a-0.15) -- (10,-\a-0.15) node {$\color{black}\rvf_{j+1}$};
\draw [->] (11.5,-\a-1.8) -- (11.5,-\a-.8);
\draw [white] (12.2,-\a-1.3) -- (12.2,-\a-1.3) node {$\color{black}\rvs^n_{j-B-1}$};
}

\draw [-] (10.5,-.4) -- (10.5,-1.7);
\draw [->] (10.5,-1.7) -- (11,-1.7);
\draw [white] (10,-4) -- (10,-4) node {$\color{black}\textrm{(b)}$};

\draw [dotted] (6,1.5) -- (6,-4);

\end{tikzpicture}
\end{center}
\caption{Connection between the streaming problem and the multi-terminal source coding problem. 
The setup on the right is identical to the setup on the left, except with the side information sequence $\rvs_{j-1}^n$ replaced with $\rvs_{j+1}^n$. However the rate
region for both problems are identical for symmetric Markov sources.}
\label{fig:zigZag}
\end{figure*}
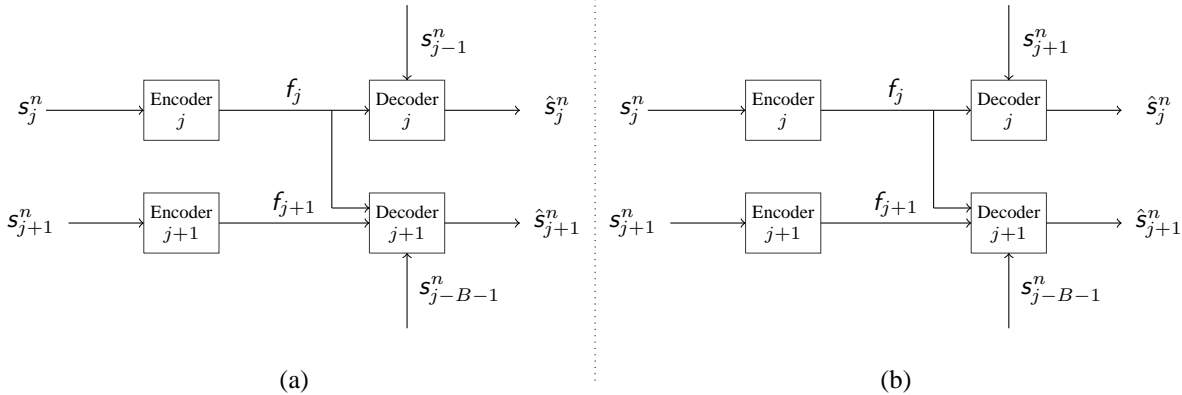

In this section we establish that the lossless rate-recovery function for symmetric Markov sources restricted to class of memoryless encoders is given by
\begin{align}
R(B,W) = \frac{1}{W+1}H(\rvs_{B+1}, \ldots, \rvs_{B+W+1} | \rvs_{0}).
\end{align} 

The achievability follows from Theorem~\ref{thm:genUB_LB} and Corollary~\ref{corol:genUB}. We thus only need to prove the converse to improve upon the general lower bound in~\eqref{eq:genLB}. 
The lower bound for the special case when $W=0$ follows directly from~\eqref{eq:genLB} and thus we only need to show the lower bound for $W \ge 1$. 
For simplicity in exposition we illustrate the case when $W=1$. Then we need to show that 
\begin{align}
R(B,W=1) \ge \frac{1}{2} H(\rvs_{B+1}, \rvs_{B+2}|\rvs_0)\label{eq:rate-rec-binning-2}
\end{align}
The proof for general $W > 1$ will follow along similar lines and will be sketched thereafter.

Assume that a burst erasure spans time indices ${j-B, \ldots, j-1}$. The decoder must recover 
\begin{equation}
\label{eq:decoder1}
{\hat\rvs}_{j+1}^n = \cG_{j+1}\left([\rvf]_0^{j-B-1}, \rvf_{j}, \rvf_{j+1}, \rvs_{-1}^n\right).
\end{equation} 

Furthermore if there is no erasure until time $j$ then \begin{align}\hat{\rvs}_j^n = \cG_j\left([\rvf]_0^{j}, \rvs_{-1}^n\right)\label{eq:decoder0}\end{align} must hold. 
Our aim is to 
establish the following lower bound on the sum-rate.
\begin{equation}
\label{eq:Rj_lb}
2R \ge H(\rvs_{j+1}|\rvs_j) + H(\rvs_{j}|\rvs_{j-B-1}).
\end{equation}
The lower bound~\eqref{eq:rate-rec-binning-2} then  follows since \begin{align}
R&\ge \frac{1}{2}(H(\rvs_{j+1}|\rvs_j) + H(\rvs_{j}|\rvs_{j-B-1}))\notag\\
&=\frac{1}{2}(H(\rvs_{j+1}|\rvs_j,\rvs_{j-B-1}) + H(\rvs_{j}|\rvs_{j-B-1}))\label{eq:Q1}\\
&=\frac{1}{2}H(\rvs_{j+1},\rvs_{j}|\rvs_{j-B-1}) =\frac{1}{2}H(\rvs_{B+1},\rvs_{B+2}|\rvs_0),\label{eq:inter}
\end{align}
where \eqref{eq:Q1} follows from the Markov chain property $\rvs_{j-B-1} \rightarrow \rvs_{j} \rightarrow \rvs_{j+1}$, and the last step in \eqref{eq:inter} follows from stationarity of the source model.

To establish~\eqref{eq:Rj_lb} we make a connection to a multi-terminal source coding problem in Fig.~\ref{fig:zigZag}(a). We accomplish this in several steps as outlined below.

\subsection{Multi-Terminal Source Coding}
Consider the multi-terminal source coding problem with side information illustrated in Fig.~\ref{fig:zigZag}(a). 
In this setup there are four source sequences drawn i.i.d.\ from a joint distribution $p(\rvs_{j+1},\rvs_j, \rvs_{j-1},\rvs_{j-B-1})$. The two source sequences $\rvs_j^n$ and $\rvs_{j+1}^n$ are revealed to the encoders $j$  and $j+1$ respectively and the two sources
$\rvs_{j-1}^n$ and $\rvs_{j-B-1}^n$ are revealed to the decoders $j$ and $j+1$ respectively. The encoders operate independently and compress the source sequences to $\rvf_j$ and $\rvf_{j+1}$ at rates $R_j$
and $R_{j+1}$ respectively. Decoder $j$ has access to $(\rvf_j, \rvs_{j-1}^n)$ while decoder ${j+1}$ has access to $(\rvf_j, \rvf_{j+1}, \rvs_{j-B-1}^n)$. The two decoders are required to reconstruct
\begin{align}
\tilde{\rvs}_j^n &= \tilde{\cG}_j(\rvf_j, \rvs_{j-1}^n) \label{eq:decoder0a}\\
\tilde{\rvs}_{j+1}^n&=\tilde{\cG}_{j+1}(\rvf_j, \rvf_{j+1}, \rvs_{j-B-1}^n)\label{eq:decoder1a}\end{align} respectively such that $\Pr(\rvs_i^n \neq \tilde{\rvs}_i^n) \le \eps_n$ for ${i=\{ j, j+1\}}$.

Note that the multi-terminal source coding setup in Fig.~\ref{fig:zigZag}(a) is similar to the setup in Fig.~\ref{fig:multiterminal}, except that the encoders do not cooperate and $\rvf_{i} =  \cF_{i}(\rvs^n_{i})$, due to the memoryless property. We exploit this property to directly show that a lower bound on the multi-terminal source coding setup in Fig.~\ref{fig:zigZag}(a) also constitutes a lower bound on the rate of the original streaming problem.

{
\begin{lemma}
\label{lem:memoryless}
For the class of memoryless encoding functions, i.e. $\rvf_j = \cF_j(\rvs_j^n)$, 
the decoding functions 
$\hat{\rvs}_j^n = {\cG}_{j}([\rvf]_0^j,\rvs_{-1}^n)$ and $\hat{\rvs}_{j+1}^n = {\cG}_{j+1}([\rvf]_0^{j-2}, \rvf_j, \rvf_{j+1},\rvs_{-1}^n)$
can be replaced by the following decoding functions
\begin{align}
&\tilde{\rvs}_j^n = \tilde{\cG}_{j}(\rvf_j,\rvs_{j-1}^n) \label{eq:Fano2a}\\
&\tilde{\rvs}_{j+1}^n = \tilde{\cG}_{j+1}( \rvf_j ,\rvf_{j+1}, \rvs^n_{j-2},) \label{eq:Fano1a}
\end{align} such that
\begin{align}
&\Pr(\tilde{\rvs}_j^n \neq \rvs_j^n) \le \Pr(\hat{\rvs}_j^n \neq \rvs_j^n)\\
&\Pr(\tilde{\rvs}_{j+1}^n \neq \rvs_{j+1}^n) \le \Pr(\hat{\rvs}_{j+1}^n \neq \rvs_{j+1}^n)  \label{eq:ErPr}.
\end{align} \hfill$\Box$
\end{lemma}

\begin{proof}
Assume that the extra side-informations $\rvs^n_{j-1}$ 
is revealed to the decoder $j$.
Now define the maximum a posteriori probability (MAP) decoder as follow. 
\begin{align}
&\tilde{\rvs}_j^n = {\overline{\cG}}_{j}([\rvf]_0^j,\rvs_{-1}^n, \rvs^n_{j-1}) \triangleq \underset{\rvs_j^n}{\operatorname{argmax}} \quad   p(\rvs_j^n | [\rvf]_0^{j},\rvs_{-1}^n, \rvs^n_{j-1}) \label{eq:MAP1}
\end{align} where we dropped the subscript in conditional probability density for sake of simplicity.
It is known that the MAP decoder is optimal and minimizes the decoding error probability, therefore
\begin{align}
&\Pr(\tilde{\rvs}_j^n \neq \rvs_j^n) \le \Pr(\hat{\rvs}_j^n \neq \rvs_j^n)  \label{eq:ineq1}
\end{align}
Also note that 
\begin{align}
\tilde{\rvs}_j^n = {\overline{\cG}}_{j}([\rvf]_0^j,\rvs_{-1}^n, \rvs_{j-1}^n)& = \underset{\rvs_j^n}{\operatorname{argmax}} \quad   p(\rvs_j^n | [\rvf]_0^{j},\rvs_{-1}^n,\rvs_{j-1}^n) \label{eq:MAPG1O}\\
& =\underset{\rvs_j^n}{\operatorname{argmax}} \quad   p(\rvs_j^n | \rvf_{j},\rvs_{j-1}^n) \label{eq:MAPGG1}\\
&\defeq  \tilde{\cG}_{j}(\rvf_{j},\rvs_{j-1}^n) \label{eq:De1}
\end{align} where \eqref{eq:MAPGG1} follows form the following Markov property. 
\begin{align}
\left([\rvf]_0^{j-1}, \rvs_{-1}^n\right) \rightarrow (\rvf_{j}, \rvs_{j-1}^n) \rightarrow \rvs_{j}^n.
\end{align} 
It can be shown through similar steps that the decoder defined in \eqref{eq:Fano1a} exists with the error probability satisfying \eqref{eq:ErPr}.  
This completes the proof.
\end{proof}
}

The conditions in~\eqref{eq:Fano2a} and~\eqref{eq:Fano1a} show that any rate that is achievable in the streaming problem in Fig.~\ref{fig:setup} is also achieved in the multi-terminal source coding setup in Fig.~\ref{fig:zigZag}(a). Hence a lower bound to this source network also constitutes a lower bound to the original problem. In the next section we find a lower bound on the rate for the setup in Fig.~\ref{fig:zigZag}(a).

\subsection{Lower Bound for Multi-terminal Source Coding Problem}
In this section, we establish a lower bound on the sum-rate of the multi-terminal source coding setup in Fig.~\ref{fig:zigZag}(a) i.e., $R \ge \frac{1}{2}H(\rvs_{B+1}, \rvs_{B+2}|\rvs_0)$. To this end, we observe the equivalence between the setup in 
Fig.~\ref{fig:zigZag}(a) and Fig.~\ref{fig:zigZag}(b) as stated below.

\begin{lemma}
The set of all achievable rate-pairs $(R_j, R_{j+1})$ for the problem in Fig.~\ref{fig:zigZag}(a) is identical to the set of all achievable rate-pairs for the problem in Fig.~\ref{fig:zigZag}(b)
where the side information sequence $\rvs_{j-1}^n$ at decoder 1 is replaced by the side information sequence $\rvs_{j+1}^n$.
\label{lem:SI}
\end{lemma}

The proof of Lemma~\ref{lem:SI} follows by observing that the capacity region for the problem in Fig.~\ref{fig:zigZag}(a) depends on the joint distribution $p(\rvs_j, \rvs_{j+1}, \rvs_{j-1},\rvs_{j-B-1})$ only via the {\em marginal}
distributions $p(\rvs_j, \rvs_{j-1})$ and $p(\rvs_{j+1}, \rvs_j, \rvs_{j-B-1})$.  Indeed the decoding error at decoder $j$ depends on the former whereas the decoding error at decoder ${j+1}$ depends on the latter. When the source is symmetric,  the joint distributions $p(\rvs_j, \rvs_{j-1})$ and $p(\rvs_j, \rvs_{j+1})$ are identical and thus exchanging $\rvs_{j-1}^n$ with $\rvs_{j+1}^n$ does not change the error probability at decoder $j$ and leaves the functions at all other terminals unchanged. The formal proof is straightforward and will be omitted.

Thus it suffices to lower bound the achievable sum-rate for the problem in Fig.~\ref{fig:zigZag}(b). First note that 
\begin{align}
nR_{j+1}&= H(\rvf_{j+1})\notag\\
&\ge I(\rvf_{j+1} ;\rvs_{j+1}^n | \rvs^n_{j-B-1}, \rvf_{j})\notag\\
&=  H(\rvs_{j+1}^n|\rvs_{j-B-1}^n, \rvf_{j}) -H(\rvs_{j+1}^n|\rvs_{j-B-1}^n, \rvf_{j}, \rvf_{j+1})\notag\\
&\ge H(\rvs_{j+1}^n|\rvs_{j-B-1}^n, \rvf_{j}) - n\eps_n\label{eq:sum1}
\end{align}where \eqref{eq:sum1} follows by applying Fano's inequality for decoder $j+1$ in Fig.~\ref{fig:zigZag}(b) since $\rvs_{j+1}^n$ can be recovered from $(\rvs_{j-B-1}^n, \rvf_{j}, \rvf_{j+1})$. To bound $R_j$
\begin{align}
&nR_{j}= H(\rvf_{j})\notag\\
&\ge I(\rvf_{j};\rvs^n_{j}|\rvs_{j-B-1}^n)\nonumber \\
&= H(\rvs_{j}^n|\rvs_{j-B-1}^n)\notag - H(\rvs_{j}^n|\rvs_{j-B-1}^n, \rvf_{j})\nonumber \\
&\ge nH(\rvs_{j}|\rvs_{j-B-1})- H(\rvs_{j}^n|\rvs_{j-B-1}^n, \rvf_{j}) \notag\\ &\quad +H(\rvs_{j}^n|\rvs_{j-B-1}^n,\rvs_{j+1}^n, \rvf_{j})-n\eps_n\label{eq:FanoApp_Sj}\\
&=nH(\rvs_{j}|\rvs_{j-B-1})  - I(\rvs_{j}^n; \rvs_{j+1}^n |\rvs_{j-B-1}^n, \rvf_{j})-n\eps_n\nonumber\\
&=nH(\rvs_{j}|\rvs_{j-B-1})-H(\rvs_{j+1}^n |\rvs_{j-B-1}^n, \rvf_{j}) \notag\\ &\quad + H( \rvs_{j+1}^n |\rvs_{j-B-1}^n,\rvs_{j}^n, \rvf_{j})-n\eps_n\nonumber\\
&=nH(\rvs_{j}|\rvs_{j-B-1})-H(\rvs_{j+1}^n |\rvs_{j-B-1}^n, \rvf_{j}) \notag\\ &\quad + nH(\rvs_{j+1}|\rvs_j) -n\eps_n\label{eq:sum2}
\end{align}
where~\eqref{eq:FanoApp_Sj} follows by applying Fano's inequality for decoder $j$ in Fig.~\ref{fig:zigZag}(b) since $\rvs_j^n$ can be recovered from $(\rvs_{j+1}^n, \rvf_j)$ and hence $H(\rvs_{j}^n|\rvs_{j-B-1}^n,\rvs_{j+1}^n, \rvf_{j}) \le n\eps_n$ holds and~\eqref{eq:sum2} follows from the Markov relation ${\rvs_{j+1}^n \rightarrow \rvs_j^n \rightarrow (\rvf_j, \rvs_{j-B-1}^n)}$. By summing~\eqref{eq:sum1} and~\eqref{eq:sum2} and using $R_j=R_{j+1}=R$, we have 
\begin{align}
R_{j}+R_{j+1} &\ge H(\rvs_{j+1}|\rvs_{j}) + H(\rvs_{j}|\rvs_{j-B-1}) \\
&= H(\rvs_{j},\rvs_{j+1}|\rvs_{j-B-1})\label{eq:Q2}.
\end{align}
which is equivalent to~\eqref{eq:Rj_lb}. 
\begin{remark}
\label{rem:suc}
One way to interpret the lower bound in \eqref{eq:Q2} is by observing that the decoder $j+1$ in Fig.~\ref{fig:zigZag}(b) is able to recover not only $\rvs_{j+1}^n$ but also $\rvs_{j}^n$. In particular, the decoder $j+1$ first recovers ${\rvs}^n_{j+1}$. Then, similar to decoder $j$, it also recovers $\rvs_{j}^n$ from $\rvf_{j}$ and ${\rvs}^n_{j+1}$ as side information. Hence, by only considering decoder $j+1$ and following standard source coding argument, the lower bound on the sum-rate satisfies \eqref{eq:Q2}.
 
\end{remark}
\subsection{Extension to Arbitrary $W>1$}

To extend the result for arbitrary $W$, we use the following result which is a natural generalization of Lemma \ref{lem:memoryless}. 
{
\begin{lemma}
\label{lem:genW} 
Consider memoryless encoding functions $\rvf_k = \cF_k(\rvs_k^n)$ for $k\in\{j,\ldots,j+W\}$. Any set of decoding functions 
\begin{align}
&\hat{\rvs}_k^n = {\cG}_k([\rvf]_0^k,\rvs_{-1}^n) \qquad k\in\{j,\ldots,j+W-1\}\\
&\hat{\rvs}_{j+W}^n = {\cG}_{j+W}([\rvf]_0^{j-B-1}, [\rvf]_j^{j+W},\rvs_{-1}^n)
\end{align} 
can be replaced by a new set of decoding functions as
\begin{align}
&\tilde{\rvs}_k^n = \tilde{\cG}_k(\rvf_k,\rvs_{k-1}^n) \qquad k\in\{j,\ldots,j+W-1\}\\
&\tilde{\rvs}_{j+W}^n = \tilde{\cG}_{j+W}(\rvs^n_{j-B-1}, [\rvf]_j^{j+W})
\end{align} where 
\begin{align}
\Pr(\tilde{\rvs}_{l}^n\neq {\rvs}_{l}^n) \le \Pr(\hat{\rvs}_{l}^n\neq {\rvs}_{l}^n) \qquad j\le l\le j+W.
\end{align} \hfill$\Box$
\end{lemma}

The proof is an immediate extension of Lemma~\ref{lem:memoryless} and is excluded here.}
The lemma suggests a natural multi-terminal problem for establishing the lower bound with $W+1$ encoders and decoders. 
For concreteness we discuss the case when $W = 2$. 
Consider three encoders $t \in \{j, j+1, j+2\}$. Encoder $t$ observes  $\rvs_t^n$ and compresses it into an index $\rvf_t \in [1,2^{nR_t}]$.  
$\rvs_{t-1}^n$ for $t \in \{j, j+1\}$ are revealed to the corresponding decoders and $\rvs_{j-B-1}^n$ is revealed to the decoder $j+2$.  
Using an argument analogous to Lemma~\ref{lem:SI} the rate region is equivalent to the case when $\rvs_{j+1}^n$ and $\rvs_{j+2}^n$ are instead revealed to decoders $j$ and $j+1$ respectively. 
For this new setup we can argue that decoder $j+2$ can always reconstruct $(\rvs_j^n, \rvs_{j+1}^n, \rvs_{j+2}^n)$ given $(\rvs_{j-B-1}^n, \rvf_j, \rvf_{j+1},\rvf_{j+2})$. In particular, following the same argument in Remark \ref{rem:suc}, the decoder $j+2$ first recovers $\rvs_{j+2}^n$, then using $\{\rvf_{j+1}, \rvs^n_{j+2}\}$ recovers $\rvs_{j+1}^n$ and finally using $\{\rvf_{j}, \rvs^n_{j+1}\}$ recovers $\rvs_{j}^n$. And hence if we only consider decoder ${j+2}$ with side information $\rvs_{j-B-1}^n$ the sum-rate must satisfy:
\begin{align}
3R = R_j+R_{j+1} + R_{j+2} \ge H(\rvs_j, \rvs_{j+1},\rvs_{j+2}|\rvs_{j-B-1}) \label{eq:lower3}.
\end{align} 
Using 
Lemma~\ref{lem:genW} for $W=2$ it follows that the proposed lower bound also continues to hold for the original streaming problem. This completes the proof. The extension to any arbitrary $W$ is completely analogous.

%
%
%

\section{Lossy Rate-Recovery for Gauss-Markov Sources}
\label{sec:GM}
We establish lower and upper bounds on the lossy rate-recovery function of Gauss-Markov sources when an immediate recovery following the burst erasure is required i.e., ${W=0}$. For the single burst erasure case, the proof of  the lower bound in Prop.~\ref{prop:GML} is presented in Section~\ref{sec:LowerGM} whereas the proof of the upper bound in Prop.~\ref{prop:GMAch} is presented in Section~\ref{sec:UpperGM}. The proof of Prop.~\ref{prop:GM-ME} for the multiple burst erasures case is presented in Section~\ref{sec:GM_ME}. Finally the proof of Corollary~\ref{corol:HR}, which establishes the lossy rate-recovery function in the high resolution regime is presented in Section~\ref{sec:HR}.  

\subsection{Lower Bound: Single Burst Erasure}
\label{sec:LowerGM}
Consider any rate $R$ code that satisfies an average distortion of $D$ as stated in~\eqref{eq:D-def}.
For each $i \ge 0$ we have
\begin{align}
nR & \ge H(\rvf_{i})\notag\\
& \ge H(\rvf_{i}|[\rvf]_{0}^{i-B-1}, \rvs^n_{-1})\label{eq:g1} \\
& = I(\rvs^n_{i} ; \rvf_{i} |  [\rvf]_{0}^{i-B-1}, \rvs^n_{-1}) + H(\rvf_{i} | \rvs^n_{i} , [\rvf]_{0}^{i-B-1}, \rvs^n_{-1})
\notag\\
&\ge h(\rvs^n_{i} |  [\rvf]_{0}^{i-B-1} , \rvs^n_{-1}) - h(\rvs^n_{i} | \rvf_{i} ,  [\rvf]_{0}^{i-B-1}, \rvs^n_{-1}) \label{eq:g2}
\end{align} where \eqref{eq:g1} follows from the fact that conditioning reduces the entropy. 

We now present an upper bound for the second term and a lower bound for the first term in \eqref{eq:g2}. We first establish an upper bound for the second term in \eqref{eq:g2}. Suppose that the burst erasure occurs in the interval ${[i-B, i-1]}$. The reconstruction sequence $\hat{\rvs}_i^n$ must be a function of $(\rvf_{i} ,  [\rvf]_{0}^{i-B-1}, \rvs^n_{-1})$. Thus we have

\begin{align}
h(\rvs_{i}^n|[\rvf]_{0}^{i-B-1},\rvf_{i}, \rvs^n_{-1})& = h(\rvs_{i}^n-\hat{\rvs}_{i}^n~|~[\rvf]_{0}^{i-B-1},\rvf_{i}, \rvs^n_{-1})\notag\\
&\le h(\rvs_{i}^n-\hat{\rvs}_{i}^n) \notag\\
&\le \frac{n}{2} \log(2\pi e D) \label{eq:3},
\end{align} where the last step uses the fact that the expected average distortion between $\rvs^n_i$ and $\hat{\rvs}^n_i$ is no greater than $D$, and applies standard arguments~\cite[Ch.~13]{coverThomas}.

To lower bound the first term in \eqref{eq:g2}, we successively use the Gauss-Markov relation~\eqref{eq:GM-Def} to express:
\begin{align}
\rvs_{i} =\rho^{(B+1)} \rvs_{i-B-1} + \tilde{n} \label{eq:GM-Successive}
\end{align} for each $i \ge B$ and $\tilde{n}\sim \cN(0,1-\rho^{2(B+1)})$ is independent of $\rvs_{i-B-1}$. 
Using the Entropy Power Inequality~\cite{coverThomas} we have
\begin{multline}
2^{\frac{2}{n}h(\rvs^n_{i}|[\rvf]_{0}^{i-B-1}, \rvs^n_{-1})}\ge \\ 2^{\frac{2}{n}h(\rho^{B+1} \rvs^n_{i-B-1}|[\rvf]_{0}^{i-B-1}, \rvs^n_{-1})}+ 2^{\frac{2}{n}h(\tilde{n}^n)}
\end{multline} This further reduces to
\begin{multline}
h(\rvs^n_{i}~|~[\rvf]_{0}^{i-B-1}, \rvs^n_{-1})  \ge \\ \frac{n}{2} \log \left( \rho^{2(B+1)}2^{\frac{2}{n}h(\rvs^n_{i-B-1}|[\rvf]_{0}^{i-B-1}, \rvs^n_{-1})} \!+\!2 \pi e (1\!-\!\rho^{2({B+1})}) \right).\label{eq:31}
\end{multline} 
It remains to lower bound the entropy term in the right hand side of~\eqref{eq:31}. We show the following in Appendix~\ref{ApA}. 
\begin{lemma}
\label{lem:1}
For any $k\ge 0$ 
\begin{align}
2^{\frac{2}{n}h(\rvs^n_{k}|[\rvf]_{0}^{k}, \rvs_{-1}^n)} \ge  \frac{2 \pi e(1-\rho^2)}{2^{2R}-\rho^2}\left(1-\left(\frac{\rho^2}{2^{2R}}\right)^k\right)  \label{eq:lem1}
\end{align}
$\hfill\Box$
\end{lemma} 

Upon substituting,~\eqref{eq:lem1}, \eqref{eq:31}, and \eqref{eq:3} into \eqref{eq:g2} we obtain that for each ${i \ge B+1}$

\begin{multline}
R \ge  \frac{1}{2} \log \bigg[ \frac{\rho^{2(B+1)}(1-\rho^2)}{D(2^{2R}-\rho^2)}\left(1-\left(\frac{\rho^2}{2^{2R}}\right)^{i-B-1}\right) \\ + \frac{1-\rho^{2(B+1)}}{D} \bigg].
\end{multline} 
{Selecting the largest value of $i$, i.e.  $\mathcal{L}$, yields the tightest lower bound. As mentioned earlier, we are interested in infinite horizon when $\mathcal{L}\to \infty$,
which yields the tightest lower bound, we have}
\begin{align}
R \ge \frac{1}{2} &\log \left( \frac{\rho^{2(B+1)}(1-\rho^2)}{D(2^{2R}-\rho^2)} + \frac{1-\rho^{2(B+1)}}{D} \right)~\label{eq:R-LB-i-infty}
\end{align}
Rearranging~\eqref{eq:R-LB-i-infty} we have that
\begin{align}
D2^{4R} - (D\rho^2 +1 -\rho^{2(B+1)}) 2^{2R} + \rho^2 (1-\rho^{2B}) \ge 0 \label{eq:quad}
\end{align} 

Since~\eqref{eq:quad} is a quadratic equation in $2^{2R},$ it can be easily solved. Keeping the root that yields $R>0$ results in the lower bound in \eqref{eq:thm-1} in Prop.~\ref{prop:GML}. This completes the proof.  

\begin{remark}
Upon examining the proof of the lower bound of Prop.~\ref{prop:GML}, we note that it applies to any source process that satisfies~\eqref{eq:GM-Def} and where the additive noise is i.i.d.\ $\cN(0, 1-\rho^2)$. We do not use the fact that the source process is itself a Gaussian process. 
\end{remark}

\subsection{Coding Scheme: Single Burst Erasure}
\label{sec:UpperGM}
The achievable rate is based on quantization and binning. 
For each $i\ge0$, we consider the test channel
\begin{align}
\rvu_{i} = \rvs_{i} + \rvz_{i},\label{eq:testchannel}
\end{align} where $\rvz_{i} \sim \cN(0,\sigma^2_{z})$ is independent Gaussian noise.
At time $i$ we sample a total of $2^{n{(I(\rvu_i;\rvs_i)+\eps)}}$ codeword sequences i.i.d.\ from $\cN(0, 1+\sigma_z^2)$. 
The codebook at each time is partitioned into $2^{nR}$ bins. The encoder finds the codeword  sequence~$\rvu_{i}^n$ typical with the source sequence $\rvs_{i}^n$ and transmits the bin index $\rvf_{i}$ assigned to $\rvu_{i}^n$.

The decoder, upon receiving $\rvf_i$  attempts to decode  $\rvu^n_{i}$ at time $i$, using all the previously recovered codewords $\{\rvu^n_j:0\le j \le i-1, g_{j}\neq \star\}$ and the source sequence $\rvs_{-1}^n$ as side information.  The reconstruction sequence $\hat{\rvs}_{i}^n$  is the minimum mean square error (MMSE) estimate of $\rvs_{i}^n$ given $\rvu_i^n$ and the past sequences.  The coding scheme presented here is based on binning, similar to lossless case discussed in Section~\ref{sec:UBLB}. The main difference in the analysis is that, unlike the lossless case, neither the recovered sequences $\rvu^n_{i}$ nor reconstructed source sequences $\hat{\rvs}^n_{i}$ inherit the Markov property of the original source sequences $\rvs^n_{i}$. Therefore, unlike the lossless case, the decoder does not reset following a burst erasure, once the error propagation is completed. Since the effect of a burst erasure persists throughout, the analysis of achievable rate is significantly more involved.
  
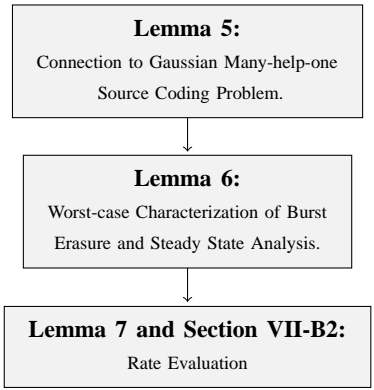
\begin{figure}
\begin{center}
\vspace{1em}
%
%
%
%
%
\begin{tikzpicture}


\draw [->] (0,-.5) -- (0,-1.2);
\draw [->] (0,-2.5) -- (0,-3.2);

\node[draw, fill=gray!10!white] at (0,0) {\begin{tabular}{c} \small{{\bf  Lemma~\ref{lem:New}:}} \\
\scriptsize{Connection to Gaussian Many-help-one }\\
\scriptsize{ Source Coding Problem.}
\end{tabular}};

\node[draw, fill=gray!10!white] at (0,-2) {\begin{tabular}{c} \small{{\bf  Lemma~\ref{lem:3Step}:}} \\
\scriptsize{Worst-case Characterization of Burst}\\
\scriptsize{Erasure and Steady State Analysis.}
\end{tabular}};

\node[draw, fill=gray!10!white] at (0,-3.8) {\begin{tabular}{c} \scriptsize{{\bf  \small{Lemma~\ref{claim:1} and Section~\ref{sec:NE}}:}}\\
\scriptsize{Rate Evaluation}
\end{tabular}};

\end{tikzpicture}
\caption{Flowchart summarizing the  proof steps of Prop.~\ref{prop:GMAch}.}
\label{fig:Chart}
\end{center}
\end{figure}

Fig.~\ref{fig:Chart} summarizes the main steps in proving Prop.~\ref{prop:GMAch}. In particular, in Lemma~\ref{lem:New}, we first derive necessary parametric  rate constraints associated with every possible erasure pattern. Second, through the Lemma~\ref{lem:3Step}, we characterize the  the worst-case erasure pattern that dominates the rate and distortion constraints.  Finally in Lemma~\ref{claim:1} and Section~\ref{sec:NE}, we evaluate the achievable rate to complete the proof of Prop.~\ref{prop:GMAch}.

\subsubsection{Analysis of Achievable Rate}
Given a collection of random variables ${\mathcal V}$, we let the MMSE estimate of $\rvs_i$ be denoted by $\hat{\rvs}_{i}({\mathcal V})$, and its associated estimation error is denoted by $\sigma^2_{i}({\mathcal V})$, i.e.,
\begin{align}
\hat{\rvs}_{i}({\mathcal V}) &\!= E\left[ \rvs_{i}~|~{\mathcal V} \right]\label{eq:s_est-def}\\
\sigma^2_{i}({\mathcal V}) &\!= E[\left(\rvs_{i} -\hat{\rvs}_{i}({\mathcal V})\right)^2] \label{eq:sig-def}.
\end{align}    

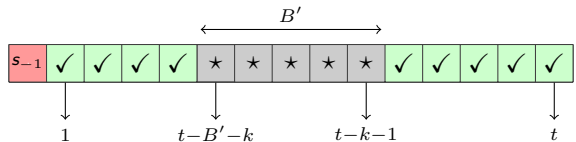
\begin{figure}
\begin{center}
\vspace{1em}
\begin{tikzpicture}
\draw[step=.6cm,color=gray!50!black] (-.5,0) grid (14*.5,.5);

\fill[color=gray!40!white] (3*.5,0) rectangle (5*.5,.5);
\draw[step=.5cm,color=gray!50!black] (-.5,0) grid (14*.5,.5);
\draw [<->] (4*.5+.05,.7)--(9*.5-.05 ,.7);
\draw [white] (6.5*.5,.9) -- (6.5*.5,.9) node {$\color{black}\scriptstyle  B'$};
\fill[color=gray!40!white] (5*.5,0) rectangle (9*.5,.5);

\foreach \t in {0}{
\fill[color=red!40!white] (\t*.5-.5,0) rectangle (\t*.5,.5);
\draw [white] (\t*.5-.5/2,0.25) -- (\t*.5-.5/2,0.25) node {$\color{black}\scriptstyle \rvs_{\scriptscriptstyle{-1}}$};
}

\foreach \t in {1,2,3,4,10,11,12,13,14}{
\fill[color=green!20!white] (\t*.5-.5,0) rectangle (\t*.5,.5);
\draw [white] (\t*.5-.5/2,0.25) -- (\t*.5-.5/2,0.25) node {$\color{black}\checkmark$};
}
\foreach \t in {5,6,7,8,9}{
\draw [white] (\t*.5-.5/2,0.25) -- (\t*.5-.5/2,0.25) node {$\color{black}\star$};
}

\foreach \t in {1}{
\draw [->] (\t*.5-.5/2,0) -- (\t*.5-.5/2,-0.5);
\draw [white] (\t*.5-.5/2,-.7) -- (\t*.5-.5/2,-.7) node {$\color{black} \scriptstyle{1}$};
}

\foreach \t in {5}{
\draw [->] (\t*.5-.5/2,0) -- (\t*.5-.5/2,-0.5);
\draw [white] (\t*.5-.5/2,-.7) -- (\t*.5-.5/2,-.7) node {$\color{black} \scriptstyle{t-B'-k}$};
}
\foreach \t in {9}{
\draw [->] (\t*.5-.5/2,0) -- (\t*.5-.5/2,-0.5);
\draw [white] (\t*.5-.5/2,-.7) -- (\t*.5-.5/2,-.7) node {$\color{black} \scriptstyle{t-k-1}$};
}
\foreach \t in {14}{
\draw [->] (\t*.5-.5/2,0) -- (\t*.5-.5/2,-0.5);
\draw [white] (\t*.5-.5/2,-.7) -- (\t*.5-.5/2,-.7) node {$\color{black} \scriptstyle{t}$};
}
\draw[step=.5cm,color=gray!50!black] (-.5,0) grid (14*.5,.5);
\end{tikzpicture}
\caption{{Schematic of single burst erasure channel model. The channel inputs in the interval $[t-B'-k, t-k-1]$ is erased for some $0\le B'\le B$ and $k \in [0,t-B']$. The rest are available at the decoder, as shown by check mark in the figure.}}
\label{fig:Channel_Single}
\end{center}
\end{figure}
We begin with a parametric characterization of the achievable rate. 

\begin{lemma}
\label{lem:New}
A rate-distortion pair $(R,D)$ is achievable, if for every $t\ge0$, $B'\in[0,B]$ and $k \in[0,t-B']$  we have 
\begin{align}
R  \ge \lambda_t(k,B') \triangleq I(\rvs_{t}; \rvu_{t}~|~[\rvu]_{0}^{t-B'-k-1}, [\rvu]_{t-k}^{t-1}, \rvs_{-1})  \label{1},
\end{align} and the test-channel~\eqref{eq:testchannel} satisfies
\begin{align}
\gamma_t(k,B') &\triangleq E\left[ \left(\rvs_{t} - \hat{\rvs}_{t}([\rvu]_{0}^{t-B'-k-1}, [\rvu]_{t-k}^{t},\rvs_{-1}) \right)^2\right]\notag\\
& =  \sigma^2_{t}([\rvu]_{0}^{t-B'-k-1}, [\rvu]_{t-k}^{t},\rvs_{-1}) \le D  \label{2}.
\end{align}
where $\sigma^2_t(\cdot)$ and $\hat{\rvs}_t(\cdot)$ are defined in~\eqref{eq:sig-def} and~\eqref{eq:s_est-def} respectively.
$\hfill\Box$
\end{lemma}
\begin{proof} 
Consider the decoder at any time $t\ge 0$ outside the error propagation window. 
Assume that a single burst erasure of length $B'\in[0,B]$ spans the interval $[t-B'-k, t-k-1]$ for some $k \in [0,t-B']$ i.e., 
\begin{align}
\rvg_{j} =\begin{cases}
\star, &j\in\{t-B'-k, \ldots, t-k-1\} \\
\rvf_j, &\text{ else}.
\end{cases} \label{eq:EP}
\end{align} 
The schematic of the erasure channel is illustrated in Fig.~\ref{fig:Channel_Single}. Notice that $k=0$ represents the case of the most recent burst erasure spanning the interval $[t-B'-1, t-1]$. The decoder is interested in first successfully recovering $\rvu^n_{t}$ and then reconstructing $\rvs_{t}^n$ within distortion $D$ by performing MMSE estimation of $\rvs_{t}^n$ from all the previously recovered sequences $\rvu_{i}^{n}$ where $i\le t$ and $\rvg_{i} \neq \star$. The decoder succeeds with high probability if the rate constraint satisfies~\eqref{1} (see e.g.,~\cite{tavildar:10}) and the distortion constraint satisfies~\eqref{2}. If these constraints hold for all the possible triplets $(t,B',k)$, the decoder is guaranteed to succeed in reproducing any source sequence within desired distortion $D$. 

Finally in the streaming setup, we can follow the argument similar to that in Section~\ref{sec:UBLB} to argue that the decoder succeeds in the entire horizon of $\cL$  provided we select the source length $n$ to be sufficiently large.
The formal proof is omitted here. 
\end{proof}

As a result of Lemma~\ref{lem:New}, in order to compute the achievable rate, we need to characterize the worst case values of  $(t,k,B')$ that simultaneously maximize $\lambda_t(k,B)$ and $\g_t(k,B)$. We present such a characterization next.

\begin{lemma}
\label{lem:3Step}
The functions $\lambda_t(k,B)$ and $\gamma_t(k,B)$ satisfy the following properties: \begin{enumerate}
\item  For all $t\ge B'$ and $k\in[0,t-B']$, $\lambda_t(k,B') \le \lambda_t(0,B')$ and $\gamma_t(k,B') \le \gamma_t(0,B')$, i.e. the worst-case erasure pattern contains the burst erasure in the interval $[t-B, t-1]$. 
\item  For all $t\ge B$ and $0\le B' \le B$, $\lambda_t(0,B') \le \lambda_t(0,B)$ and $\gamma_t(0,B') \le \gamma_t(0,B)$, i.e. the worse-case erasure pattern includes maximum burst length. 
\item  For a fixed $B$, the functions $\lambda_t(0,B)$ and $\gamma_t(0,B)$ are both  increasing with respect to $t,$ for $t\ge B$, i.e. the worse-case erasure pattern happens in steady state (i.e., $t\rightarrow\infty$) of the system.
\item  For all $t<B$, $0\le B' \le t$ and $k\in[0,t-B']$, $\lambda_t(k,B') \le \lambda_B(0,B)$ and $\gamma_t(k,B') \le \gamma_B(0,B)$ i.e., the burst erasure spanning $[0,B-1]$ dominates all burst erasures that terminate before time ${B-1}$.
\end{enumerate}

$\hfill\Box$
\end{lemma}

\begin{proof} 
Before establishing the proof, we state two inequalities which are established in Appendix~\ref{App:hStep2}. 
For each $k \in [1:t-B']$ we have that:
\begin{align}
&h( \rvu_{t} | [\rvu]_{0}^{t-B'-k-1}, [\rvu]_{t-k}^{t-1}, \rvs_{-1})\notag\\
&\qquad \qquad \le h( \rvu_{t} | [\rvu]_{0}^{t-B'-k}, [\rvu]_{t-k+1}^{t-1}, \rvs_{-1})\label{eq:hStep2}, \\
&h( \rvs_{t} | [\rvu]_{0}^{t-B'-k-1}, [\rvu]_{t-k}^{t}, \rvs_{-1})\notag\\
&\qquad \qquad \le h( \rvs_{t} | [\rvu]_{0}^{t-B'-k}, [\rvu]_{t-k+1}^{t}, \rvs_{-1})\label{eq:hStep2D}.
\end{align}
The above inequalities state that the conditional differential entropy of $\rvu_{t}$ and $\rvs_{t}$ is reduced if the variable $\rvu_{t-B'-k}$ is replaced by $\rvu_{t-k}$ in the conditioning and the remaining variables remain unchanged.
 Fig.~\ref{fig:LemmaFig} provides a schematic interpretation of the above inequalities. The proof in Appendix~\ref{App:hStep2} exploits the specific structure of the Gaussian test channel~\eqref{eq:testchannel} and Gaussian sources to establish these inequalities.

In the remainder of the proof, we establish each of the four properties separately.

1) We show that both $\lambda_t(k,B')$ and $\gamma_t(k,B')$ are decreasing functions of $k$ for $k \in [1:t-B']$. 

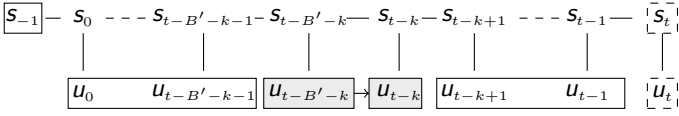
\begin{figure}
\begin{center}
\vspace{1em}
\begin{tikzpicture}

\draw[-] (.8,-.2) -- (.8,-.7);
\draw[-] (2.4,-.2) -- (2.4,-.7);
\draw[-] (3.8,-.2) -- (3.8,-.7);
\draw[-] (5,-.2) -- (5,-.7);
\draw[-] (6,-.2) -- (6,-.7);
\draw[-] (7.5,-.2) -- (7.5,-.7);
\draw[-] (8.5,-.2) -- (8.5,-.7);

\draw[-] (.3,0) -- (.5,0);
\draw[ dashed] (1.1,0) -- (1.7,0);
\draw[-] (3.08,0) -- (3.2,0);
\draw[-] (4.3,0) -- (4.6,0);
\draw[-] (5.3,0) -- (5.5,0);
\draw[dashed] (6.6,0) -- (7.2,0);
\draw[-] (7.8,0) -- (8.1,0);

\draw[->] (4.3,-1) -- (4.6,-1);

\draw [fill= gray!15!white](3.2,-1.2) rectangle (4.4,-.8);
\draw [fill= gray!15!white](4.6,-1.2) rectangle (5.3,-.8);

\draw [-](5.5,-1.2) rectangle (8,-.8);
\draw [-](.6,-1.2) rectangle (3.1,-.8);
\draw [-](-.25,-.2) rectangle (.25,.2);
\draw [dashed](8.3,-1.2) rectangle (8.7,-.8);
\draw [dashed](8.3,-.2) rectangle (8.7,.2);

\draw [white] (0,0) -- (0,0) node {$\color{black} \rvs_{\scriptscriptstyle{-1}}$};
\draw [white] (.8,0) -- (.8,0) node {$\color{black} \rvs_{\scriptscriptstyle{0}}$};
\draw [white] (2.4,0) -- (2.4,0) node {$\color{black} \rvs_{\scriptscriptstyle{t-B'-k-1}}$};
\draw [white] (3.8,0) -- (3.8,0) node {$\color{black} \rvs_{\scriptscriptstyle{t-B'-k}}$};
\draw [white] (5,0) -- (5,0) node {$\color{black} \rvs_{\scriptscriptstyle{t-k}}$};
\draw [white] (6,0) -- (6,0) node {$\color{black} \rvs_{\scriptscriptstyle{t-k+1}}$};
\draw [white] (7.5,0) -- (7.5,0) node {$\color{black} \rvs_{\scriptscriptstyle{t-1}}$};
\draw [white] (8.5,0) -- (8.5,0) node {$\color{black} \rvs_{\scriptscriptstyle{t}}$};

\draw [white] (.8,-1) -- (.8,-1) node {$\color{black} \rvu_{\scriptscriptstyle{0}}$};
\draw [white] (2.4,-1) -- (2.4,-1) node {$\color{black} \rvu_{\scriptscriptstyle{t-B'-k-1}}$};
\draw [white] (3.8,-1) -- (3.8,-1) node {$\color{black} \rvu_{\scriptscriptstyle{t-B'-k}}$};
\draw [white] (5,-1) -- (5,-1) node {$\color{black} \rvu_{\scriptscriptstyle{t-k}}$};
\draw [white] (6,-1) -- (6,-1) node {$\color{black} \rvu_{\scriptscriptstyle{t-k+1}}$};
\draw [white] (7.5,-1) -- (7.5,-1) node {$\color{black} \rvu_{\scriptscriptstyle{t-1}}$};
\draw [white] (8.5,-1) -- (8.5,-1) node {$\color{black} \rvu_{\scriptscriptstyle{t}}$};

\end{tikzpicture}
\caption{Replacing $\rvu_{t-B'-k}$ by $\rvu_{t-k}$ improves the estimate of $\rvs_{t}$ and $\rvu_{t}$.}
\label{fig:LemmaFig}
\end{center}
\end{figure}

\begin{align}
\lambda_t(k,B') &= I(\rvs_{t}; \rvu_{t} | [\rvu]_{0}^{t-B'-k-1}, [\rvu]_{t-k}^{t-1}, \rvs_{-1}) \notag\\
& = h( \rvu_{t} | [\rvu]_{0}^{t-B'-k-1}, [\rvu]_{t-k}^{t-1}, \rvs_{-1}) - h(\rvu_{t} | \rvs_{t})\notag\\
& \le h( \rvu_{t} | [\rvu]_{0}^{t-B'-k}, [\rvu]_{t-k+1}^{t-1}, \rvs_{-1}) - h(\rvu_{t} | \rvs_{t}) \label{6}\\
& = I(\rvs_{t}; \rvu_{t} | [\rvu]_{0}^{t-B'-k}, [\rvu]_{t-k+1}^{t-1}, \rvs_{-1}) \notag\\
& = \lambda_t(k-1,B')\label{eq:step1-r}
\end{align} where 
\eqref{6} follows from using~\eqref{eq:hStep2}. In a similar fashion since $$\gamma_t(k,B') = \sigma^2_t\left([\rvu]_{0}^{t-B'-k}, [\rvu]_{t-k+1}^{t}, \rvs_{-1}\right)$$ is the MMSE estimation error of $\rvs_t$ given $\left( [\rvu]_{0}^{t-B'-k}, [\rvu]_{t-k+1}^{t}, \rvs_{-1}\right)$, we have
\begin{align}
\frac{1}{2}\log\left(2\pi e \cdot\gamma_t(k,B')\right) &= h(\rvs_{t}| [\rvu]_{0}^{t-B'-k-1}, [\rvu]_{t-k}^{t}, \rvs_{-1}) \notag\\
& \le h( \rvs_{t} | [\rvu]_{0}^{t-B'-k}, [\rvu]_{t-k+1}^{t}, \rvs_{-1}) \label{7}\\
& =\frac{1}{2}\log\left(2\pi e \cdot\gamma_t(k-1,B')\right)\label{eq:step1-d}
\end{align} where \eqref{7} follows from using~\eqref{eq:hStep2D}. 
Since $f(x) = \frac{1}{2}\log(2\pi e x)$ is a monotonically increasing function it follows that ${\gamma_t(k,B') \le \gamma_t(k-1,B')}$. By recursively applying~\eqref{eq:step1-r} and~\eqref{eq:step1-d} until $k=1$, the proof of property (1) is complete.

2) We next show that the worst case erasure pattern also has the longest burst. This follows intuitively since the decoder can just ignore some of the symbols received over the channel. Thus any rate achieved with the longest burst is also achieved for the shorter burst. The formal justification is as follows. For any $B' \le B$ we have,
{\allowdisplaybreaks{\begin{align}
\lambda_t(0,B') &= I(\rvs_{t}; \rvu_{t} | [\rvu]_{0}^{t-B'-1}, \rvs_{-1}) \notag\\
&= h( \rvu_{t} | [\rvu]_{0}^{t-B'-1}, \rvs_{-1}) - h( \rvu_{t} | \rvs_{t}) \label{3}\\
&= h( \rvu_{t} | [\rvu]_{0}^{t-B-1}, [\rvu]_{t-B}^{t-B'-1}, \rvs_{-1}) - h( \rvu_{t} | \rvs_{t}) \notag\\
& \le h( \rvu_{t} | [\rvu]_{0}^{t-B-1}, \rvs_{-1}) - h( \rvu_{t} | \rvs_{t}) \label{4}\\
& = I(\rvs_{t}; \rvu_{t} | [\rvu]_{0}^{t-B-1},  \rvs_{-1})\label{5}\\
& = \lambda_t(0,B) \label{03}
\end{align}}} where \eqref{3} and \eqref{5} follows from the Markov chain property
\begin{align}
\rvu_{t} \rightarrow \rvs_{t}\rightarrow \{[\rvu]_{0}^{t-j-1}, \rvs_{-1}\}, \quad j\in\{B,B'\}
\end{align} and \eqref{4} follows from the fact that conditioning reduces differential entropy. 
In a similar fashion the inequality $\gamma_t(0,B') \le \gamma_t(0,B)$  follows from the fact that the estimation error can only be reduced by having more observations.

3) We show that both $\lambda_t(0,B)$ and $\gamma_t(0,B)$ are increasing functions with respect to $t$. Intuitively as $t$ increases the effect of having $\rvs_{-1}$ at the decoder vanishes and hence the required rate increases.
 Consider 
\begin{align}
\lambda_{t+1}(0,B) &= I(\rvs_{t+1}; \rvu_{t+1} | [\rvu]_{0}^{t-B}, \rvs_{-1})\notag\\
& = h(\rvu_{t+1} | [\rvu]_{0}^{t-B}, \rvs_{-1}) - h(\rvu_{t+1} | \rvs_{t+1})\notag\\
&=  h(\rvu_{t+1} | [\rvu]_{0}^{t-B}, \rvs_{-1}) -  h(\rvu_{t} | \rvs_{t})\label{eq:tinv_1}\\
& \ge  h(\rvu_{t+1} | [\rvu]_{0}^{t-B}, \rvs_{-1}, \rvs_{0}) -  h(\rvu_{t} | \rvs_{t})\label{8}\\
& =   h(\rvu_{t+1} | [\rvu]_{1}^{t-B},  \rvs_{0}) -  h(\rvu_{t} | \rvs_{t})\label{9}\\
& =   h(\rvu_{t} | [\rvu]_{0}^{t-B-1},  \rvs_{-1}) -  h(\rvu_{t} | \rvs_{t})\label{10}\\
& =  I(\rvs_{t}; \rvu_{t} | [\rvu]_{0}^{t-B-1}, \rvs_{-1}) \notag\\
& = \lambda_t(0,B)\label{eq:la_inc_1}
\end{align} where~\eqref{eq:tinv_1} and \eqref{10} follow from time-invariant property of the source model and the test channel,
\eqref{8} follows from the fact that conditioning reduces differential entropy and \eqref{9} uses the following Markov chain property
\begin{align}
\{\rvu_{0}, \rvs_{-1}\}  \rightarrow  \{[\rvu]_{1}^{t-B}, \rvs_{0}\}  \rightarrow \rvu_{t+1}.
\end{align}
Similarly, 
\begin{align}
\frac{1}{2}\log\left(2\pi e \cdot\gamma_{t+1}(0,B)\right) &= h(\rvs_{t+1}| [\rvu]_{0}^{t-B}, \rvu_{t+1}, \rvs_{-1}) \notag\\
&\ge h(\rvs_{t+1}| [\rvu]_{0}^{t-B}, \rvu_{t+1}, \rvs_{0},\rvs_{-1}) \notag\\
&= h(\rvs_{t+1}| [\rvu]_{1}^{t-B}, \rvu_{t+1}, \rvs_{0}) \label{11}\\
& = h(\rvs_{t}| [\rvu]_{0}^{t-B-1}, \rvu_{t}, \rvs_{-1})\notag\\
& =\frac{1}{2}\log\left(2\pi e \cdot\gamma_t(0,B)\right)\label{eq:d_inc_1}
\end{align} where \eqref{11} follows from the following Markov chain property
 \begin{align}
\{\rvu_{0}, \rvs_{-1}\}  \rightarrow  \{[\rvu]_{1}^{t-B}, \rvu_{t+1}, \rvs_{0}\}  \rightarrow \rvs_{t+1}.
\end{align} 
Since~\eqref{eq:la_inc_1} and~\eqref{eq:d_inc_1} hold for every $t \ge B$ the proof of property (3) is complete.

4) Note that for $t<B$ we have $0\le B' \le t$ and thus we can write 
\begin{align}
\lambda_{t}(k,B') &\le \lambda_{t}(0,B')\label{wh1}\\
&\le \lambda_{t}(0,t) \label{wh}\\
&= h(\rvu_{t} | \rvs_{-1}) -  h(\rvu_{t} | \rvs_{t}) \notag\\
&=h(\rvu_{t} | \rvs_{-1}) -  h(\rvu_{B} | \rvs_{B}) \notag\\
&=h(\rvu_{B} | \rvs_{B-t-1}) -  h(\rvu_{B} | \rvs_{B})\notag\\
& \le  h(\rvu_{B} | \rvs_{-1}) -  h(\rvu_{B} | \rvs_{B}) \label{022}\\
&  =\lambda_{B}(0,B)
\end{align} where \eqref{wh1} follows from part 1 of the lemma, \eqref{wh} is based on the fact that the worse-case erasure pattern contains most possible erasures and follows from the similar steps used in deriving \eqref{03} and using the fact that if $t<B$, the burst erasure length is at most $t$. Eq.~\eqref{022} follows from the fact that whenever $t<B$ the relation $\rvs_{-1}\rightarrow \rvs_{B-t-1} \rightarrow \rvu_B$ holds since $t<B$ is assumed. In a similar fashion we can show that
$\gamma_{t}(k,B') \le \gamma_{B}(0,B)$.

This completes the proof of lemma~\ref{lem:3Step}.
\end{proof}

Following the four parts of Lemma~\ref{lem:3Step}, it follows that the worst-case erasure pattern happens at steady state i.e. $t\to \infty$ when there is a burst of length $B$ which spans $[t-B,t-1]$. According to this and Lemma~\ref{lem:New}, any pair $(R,D)$ is achievable if 
\begin{align}
&R  \ge \lim _{t\to \infty} \lambda_t(0,B)\label{imp1}\\
&D \ge \lim _{t\to \infty} \gamma_t(0,B)\label{imp2} 
\end{align}

{\begin{lemma}
\label{claim:1}
Consider $\rvu_{i} = \rvs_{i} + \rvz_{i}$ and suppose the noise variance $\sigma^2_{z}$ satisfies
\begin{align}
\Gamma(B, \sigma^2_{z}) &\triangleq \lim_{t \to \infty} E\left[\left(\rvs_{t} - \hat{\rvs}_{t}([\rvu]_{0}^{t-B-1}, \rvu_{t}) \right)^2\right] \\
& = \lim_{t \to \infty} \sigma^2_{t} \left([\rvu]_{0}^{t-B-1}, \rvu_{t}\right)\le D. \label{eq:DC}
\end{align} 
The following rate is achievable:
\begin{align}
R = \Lambda(B,\sigma^2_{z}) \triangleq \lim_{t\to \infty} I(\rvs_{t} ; \rvu_{t} | [\rvu]_{0}^{t-B-1}). \label{eq:RateC}
\end{align}
$\hfill\Box$
\end{lemma}
\begin{proof}
It suffices to show that any test channel satisfying \eqref{eq:DC} also implies \eqref{imp2}
and any rate satisfying \eqref{eq:RateC} implies  \eqref{imp1}. These relations can be established in a straightforward manner as shown below. 
\begin{align}
R= \Lambda(B,\sigma^2_{z}) &= \lim_{t\to \infty} I(\rvs_{t} ; \rvu_{t} | [\rvu]_{0}^{t-B-1}) \notag\\
&= \lim_{t\to \infty} \left( h( \rvu_{t} | [\rvu]_{0}^{t-B-1}) - h( \rvu_{t} | \rvs_{t})\right)\\
&\ge \lim_{t\to \infty} \left( h( \rvu_{t} | [\rvu]_{0}^{t-B-1}, \rvs_{-1}) - h( \rvu_{t} | \rvs_{t})\right)\\
& =  \lim _{t\to \infty} \lambda_t(0,B)
\end{align} and 
\begin{align}
D\ge \Gamma(B,\sigma^2_{z}) &= \lim_{t \to \infty} E\left[\left(\rvs_{t} - \hat{\rvs}_{t}([\rvu]_{0}^{t-B-1}, \rvu_{t}) \right)^2\right] \\
& \ge \lim_{t \to \infty} E\left[\left(\rvs_{t} - \hat{\rvs}_{t}([\rvu]_{0}^{t-B-1}, \rvu_{t}, \rvs_{-1}) \right)^2\right]\\
&= \lim _{t\to \infty} \gamma_t(0,B) 
\end{align} 
\end{proof}

{ We conclude that $\Gamma(B, \sigma^2_{z}) =D$, the rate $R^{+}_{\textrm{GM-SE}}(B,D) = \Lambda(B,\sigma^2_{z})$ is achievable.}

\subsubsection{Numerical Evaluation} 
\label{sec:NE}
We derive an expression for  numerically evaluating the noise variance $\sigma_z^2$ in~\eqref{eq:testchannel} and also establish~\eqref{eq:GM-LB-R} and~\eqref{eq:GM-LB-RR}.

To this end it is helpful to consider the following single-variable discrete-time Kalman filter for $i \in[0, t-B-1]$,
\begin{align}
&\rvs_i=\rho \rvs_{i-1}+ \rvn_{i}, \qquad \rvn_{i} \sim N(0,1-\rho^2) \label{eq:Kalman1}\\
&\rvu_i= \rvs_{i}+ \rvz_{i},\qquad \rvz_{i} \sim N(0,\sigma^{2}_z) \label{eq:Kalman2}.
\end{align} 
Note that $\rvs_{i}$  can be viewed as the state of the system updated according a Gauss-Markov model and $\rvu_{i}$ as the output of the system at each time $i$, which is a noisy version of the state $\rvs_{i}$. Consider the system in steady state i.e. $t\to \infty$. The MMSE estimation error of $\rvs_{t-B}$ given {\em all} the previous outputs up to time $t-B-1$ i.e. $[\rvu]_{0}^{t-B-1}$ is expressed as (see, e.g., ~\cite[Example V.B.2]{poor94}):
\begin{align}
&\Sigma(\sigma^{2}_{z}) \triangleq \lim_{t\to \infty}\sigma^2_{t-B}([\rvu]_{0}^{t-B-1})\\
&~=\frac{1}{2}\sqrt{(1-\sigma^2_{z})^2(1-\rho^2)^2+4\sigma^2_{z}(1-\rho^2) } + \frac{1-\rho^2}{2}(1-\sigma^2_{z}) \label{eq:Sigma}
\end{align} Also using the orthogonality principle for MMSE estimation we have
\begin{align}
[\rvu]_{0}^{t-B-1} \rightarrow \hat{\rvs}_{t-B}([\rvu]_{0}^{t-B-1}) \rightarrow \rvs_{t-B}  \rightarrow \rvs_{t} \label{h4}
\end{align}

Thus we can express
\begin{align}
\rvs_{t-B} = \hat{\rvs}_{t-B} ([\rvu]_{0}^{t-B-1}) + \hat{\rve}\label{h2}
\end{align} where the noise $\hat{\rve} \sim \cN(0, \Sigma\left(\sigma^{2}_{z} \right))$ is independent of the observation set $[\rvu]_{0}^{t-B-1}$. 
Equivalently we can express (see e.g. \cite{willskyWornel03})
\begin{align}
\hat{\rvs}_{t-B}([\rvu]_{0}^{t-B-1}) = \tilde{\alpha}\rvs_{t-B} + \tilde{\rve} \label{eq:s-hat-rev}
\end{align}
where
\begin{align}
&\tilde{\alpha} \triangleq 1-\Sigma\left(\sigma^{2}_{z} \right)
\end{align}
and $\tilde{\rve} \sim \cN \left(0, \Sigma\left(\sigma^{2}_{z} \right) \left(1-\Sigma\left(\sigma^{2}_{z} \right)\right)\right)$ is independent of $\rvs_{t-B}$.
Thus we have
\begin{align}
\Lambda(B, \sigma^2_{z}) &= \lim_{t\to \infty}  I(\rvs_{t} ; \rvu_{t} | [\rvu]_{0}^{t-B-1})\notag\\
&= \lim_{t\to \infty}  I(\rvs_{t} ; \rvu_{t} | \hat{\rvs}_{t-B}([\rvu]_{0}^{t-B-1}))\notag\\
&=\lim_{t\to \infty}  I(\rvs_{t} ; \rvu_{t} | \tilde{\alpha}\rvs_{t-B} + \tilde{\rve})\notag\\
&=\lim_{t\to \infty}  I(\rvs_{t} ; \rvu_{t} | \rvs_{t-B} + {\rve})\notag\\
&=  I(\rvs_{t} ; \rvu_{t} | \tilde{\rvs}_{t-B})
\end{align}
where  we have used~\eqref{eq:s-hat-rev} and ${{\rve} \sim \cN(0, \Sigma\left(\sigma^{2}_{z} \right)/(1-\Sigma\left(\sigma^{2}_{z} \right)))}$. This establishes~\eqref{eq:GM-LB-R} in Prop.~\ref{prop:GMAch}.
In a similar manner,
\begin{align}
\Gamma(B,\sigma^{2}_{z})  &= \lim_{t\to \infty }\sigma^{2}_{t} ([\rvu]_{0}^{t-B-1}, \rvu_{t})\notag\\
& = \lim_{t\to \infty }\sigma^{2}_{t} (\hat{\rvs}_{t-B}([\rvu]_{0}^{t-B-1}), \rvu_{t})\notag\\
& = \lim_{t\to \infty }\sigma^{2}_{t} (\tilde{\alpha}{\rvs}_{t-B} + \tilde{\rve}, , \rvu_{t})\notag\\
& = \lim_{t\to \infty }\sigma^{2}_{t} ({\rvs}_{t-B}+ {{\rve}} , \rvu_{t})\notag\\
&= \sigma^{2}_{t} (\tilde{\rvs}_{t-B}, \rvu_{t})\label{h3_2}
\end{align} which establishes~\eqref{eq:GM-LB-RR}. Furthermore since
\begin{align}
\rvs_{t} = \rho^{B} \rvs_{t-B} + \tilde{\rvn}\label{h1}
\end{align} where $\tilde{\rvn} \sim \cN(0, 1-\rho^{2B}),$
\begin{align}
\Gamma(B,\sigma^{2}_{z})  &=\sigma^{2}_{t} (\tilde{\rvs}_{t-B}, \rvu_{t})\\
&= \left[ \frac{1}{\sigma^2_{z}} + \frac{1}{1-\rho^{2B}\left(1-\Sigma(\sigma^2_{z})\right)}\right]^{-1}\label{Gam}
\end{align} where \eqref{Gam} follows from the application of MMSE estimator and using \eqref{h1}, \eqref{h2} and the definition of the test channel in \eqref{eq:testchannel}. 
Thus the noise $\sigma_z^2$ in the test channel~\eqref{eq:testchannel} is obtained by setting
\begin{align}\Gamma(B,\sigma^{2}_{z}) = D. \label{eq:noise-eq}\end{align} 
This completes the proof of Prop.~\ref{prop:GMAch}.

\begin{remark}
\label{rem:GenW}
When $W>0$,  the generalization of Lemma~\ref{lem:3Step} appears to involve a rate-region corresponding to the Berger-Tung inner bound~\cite{tung:78} and the analysis is considerably more involved. Furthermore hybrid schemes involving predictive coding and binning  may lead to an improved performance over the binning-only scheme. Thus the scope of this problem is well beyond the results in this paper.
\end{remark}


\subsection{Coding Scheme: Multiple Burst Erasures with Guard Intervals}
\label{sec:GM_ME}
We study the achievable rate using the quantize and binning scheme with test channel~\eqref{eq:testchannel} when the channel introduces multiple burst erasures each of length no greater than $B$ and with a guard interval of at-least $L$ symbols separating consecutive burst erasures. While the coding scheme is the same as the single burst erasure channel model and is based on quantize and binning and MMSE estimation at the decoder, characterizing the worst case erasure pattern of the channel is main challenge and requires some additional steps.

\subsubsection{Analysis of Achievable Rate}
We introduce the following notation in our analysis.
Let $\Omega_t$ denote the set of time indices up to time $t-1$ when the channel packets are not erased i.e., 
\begin{align}
\Omega_{t} = \{i: 0\le i\le t-1, \rvg_{i} \neq \star\}, \label{eq:Omeg-def}
\end{align} 
and let us define
\begin{align}
& \rvbs_{\Omega} = \{\rvs_{i}:i\in \Omega\},\\
& \rvbu_{\Omega} = \{\rvu_{i}:i\in \Omega\}. 
\end{align}
 
Given the erasure sequence $\Omega_t$, and given $\rvg_t = \rvf_t$, the decoder can reconstruct $\rvu_t^n$ provided that the test channel is selected such that the rate satisfies (see e.g.,~\cite{tavildar:10})
\begin{align}
R\ge \lambda_{t}(\Omega_t) \triangleq I(\rvs_{t} ; \rvu_{t} | \rvbu_{\Omega_t}, \rvs_{-1} ). \label{rate_c}
\end{align} and the distortion constraint satisfies
\begin{align}
\gamma_{t}(\Omega_t) &\triangleq E\left[\left(\rvs_{t} - \hat{\rvs}_{t}(\rvbu_{\Omega_t}, \rvu_t, \rvs_{-1})\right)^2\right] \notag\\
&= \sigma^2_{t} \left(\rvbu_{\Omega_t}, \rvu_t, \rvs_{-1}\right) \le D \label{tchannel_c}
\end{align}
for each $t \ge 0$ and each feasible set $\Omega_t$. Thus we are again required to characterize the $\Omega_t$ for each value of $t$ corresponding to the worst-case erasure pattern. The following two lemmas are useful towards this end.


 
\begin{lemma}
\label{lem:help2}
Consider the two sets $A, B\subseteq \mathbb{N}$ each of size $r$ as $A=\{a_{1},a_{2},\cdots,a_{r}\}$, $B=\{b_{1},b_{2},\cdots,b_{r}\}$ such that $1\le a_1<a_2<\cdots<a_r$ and $1\le b_1<b_2<\cdots<b_r$ and for any $i \in \{1,\ldots,r\}$, $a_{i} \le b_{i}$. Then the test channel~\eqref{eq:testchannel} satisfies the following: 
\begin{align}
&h(\rvs_{t}|\rvbu_{A},\rvs_{-1}) \ge h(\rvs_{t}|\rvbu_{B},\rvs_{-1}), \quad \forall t \ge b_r \label{eq:lem2-help2}\\
&h(\rvu_{t}|\rvbu_{A},\rvs_{-1}) \ge h(\rvu_{t}|\rvbu_{B},\rvs_{-1}), \quad \forall t > b_r \label{eq:lem1-help2}. 
\end{align}
$\hfill\Box$
\end{lemma}

The proof of Lemma~\ref{lem:help2} is available in Appendix~\ref{App:help2}.


\begin{figure}
        \centering
        \begin{subfigure}[b]{0.405\textwidth}
                \centering
%
%
%
%
%

\begin{tikzpicture}

\fill[color=gray!40!white] (8*.4,0) rectangle (9*.4,.4);
\fill[color=gray!40!white] (12*.4,0) rectangle (14*.4,.4);
\fill[color=gray!40!white] (17*.4,0) rectangle (19*.4,.4);
\fill[color=red!40!white] (0*.4,0) rectangle (1*.4,.4);
\draw[step=0.4cm,color=gray!50!black] (0,0) grid (20*.4,.4);

\foreach \t in {12,17}{
\draw [<->] (\t*.4+.05,.5)--(\t*.4+2*.4-.05 ,.5);
\draw [white] (\t*.4+.4,.6) -- (\t*.4+.4,.6) node {$\color{black}\scriptstyle  B$};
}

\foreach \t in {9,14}{
\draw [<->] (\t*.4+.05,.5)--(\t*.4+3*.4-.05 ,.5);
\draw [white] (\t*.4+1.5*.4,.6) -- (\t*.4+1.5*.4,.6) node {$\color{black}\scriptstyle  L$};
}

%
%

\foreach \t in {20}{
\draw [->] (\t*.4-.4/2,0) -- (\t*.4-.4/2,-0.5);
\draw [white] (\t*.4-.4/2,-0.7) -- (\t*.4-.4/2,-0.7) node {$\color{black}\rvs_{\scriptscriptstyle{t}}$};
}
\foreach \t in {1}{
\draw [->] (\t*.4-.4/2,0) -- (\t*.4-.4/2,-0.5);
\draw [white] (\t*.4-.4/2,-0.7) -- (\t*.4-.4/2,-0.7) node {$\color{black}\rvs_{\scriptscriptstyle{-1}}$};
}
%

\foreach \t in {9,13,14,18,19}{
\draw [white] (\t*.4-1*.4+.4/2,.4/2) -- (\t*.4-1*.4+.4/2,.4/2) node {$\color{black}\scriptstyle \star$};
}

\foreach \t in {2,3,4,5,6,7,8,10,11,12,15,16,17,20}{
\draw [white] (\t*.4-1*.4+.4/2,.4/2) -- (\t*.4-1*.4+.4/2,.4/2) node {$\color{black}\scriptstyle \checkmark$};
}

\foreach \t in {0,1,2,3,4,5,6,8,9,10,13,14,15}{
\draw [white] (\t*.4+.4+.4/2,-.2) -- (\t*.4+.4+.4/2,-.2) node {$\color{black}\scriptscriptstyle (\t)$};
}

\end{tikzpicture}

                \caption{$\Omega^{\star}_{18}(13)\!=\! \{0,1,2,3,4,5,6,8,9,10,13,14,15\}$}
                \label{fig:D1}
        \end{subfigure}%
        \qquad \qquad
        \begin{subfigure}[b]{0.405\textwidth}
                \centering
%
%
%
%
%

\begin{tikzpicture}

\fill[color=gray!40!white] (2*.4,0) rectangle (4*.4,.4);
\fill[color=gray!40!white] (7*.4,0) rectangle (9*.4,.4);
\fill[color=gray!40!white] (12*.4,0) rectangle (14*.4,.4);
\fill[color=gray!40!white] (17*.4,0) rectangle (19*.4,.4);
\fill[color=red!40!white] (0*.4,0) rectangle (1*.4,.4);
\draw[step=0.4cm,color=gray!50!black] (0,0) grid (20*.4,.4);

\foreach \t in {2,7,12,17}{
\draw [<->] (\t*.4+.05,.5)--(\t*.4+2*.4-.05 ,.5);
\draw [white] (\t*.4+.4,.6) -- (\t*.4+.4,.6) node {$\color{black}\scriptstyle  B$};
}

\foreach \t in {4,9,14}{
\draw [<->] (\t*.4+.05,.5)--(\t*.4+3*.4-.05 ,.5);
\draw [white] (\t*.4+1.5*.4,.6) -- (\t*.4+1.5*.4,.6) node {$\color{black}\scriptstyle  L$};
}

%
%

\foreach \t in {20}{
\draw [->] (\t*.4-.4/2,0) -- (\t*.4-.4/2,-0.5);
\draw [white] (\t*.4-.4/2,-0.7) -- (\t*.4-.4/2,-0.7) node {$\color{black}\rvs_{\scriptscriptstyle{t}}$};
}
\foreach \t in {1}{
\draw [->] (\t*.4-.4/2,0) -- (\t*.4-.4/2,-0.5);
\draw [white] (\t*.4-.4/2,-0.7) -- (\t*.4-.4/2,-0.7) node {$\color{black}\rvs_{\scriptscriptstyle{-1}}$};
}
%

\foreach \t in {3,4,8,9,13,14,18,19}{
\draw [white] (\t*.4-1*.4+.4/2,.4/2) -- (\t*.4-1*.4+.4/2,.4/2) node {$\color{black}\scriptstyle \star$};
}

\foreach \t in {2,5,6,7,10,11,12,15,16,17,20}{
\draw [white] (\t*.4-1*.4+.4/2,.4/2) -- (\t*.4-1*.4+.4/2,.4/2) node {$\color{black}\scriptstyle \checkmark$};
}

\foreach \t in {0,3,4,5,8,9,10,13,14,15}{
\draw [white] (\t*.4+.4+.4/2,-.2) -- (\t*.4+.4+.4/2,-.2) node {$\color{black}\scriptscriptstyle (\t)$};
}

\end{tikzpicture}

                \caption{$\Omega^{\star}_{18}= \! \{0,3,4,5,8,9,10,13,14,15\}$ }
                \label{fig:D2}
        \end{subfigure}
        \caption{Schematic of the erasure patterns in Lemma~\ref{lem:GM-ME} for $t=18$, $L=3$ and $B=2$. Fig.~\ref{fig:D1} illustrates, $\Omega^{\star}_{t}(\theta)$ in part~1 of Lemma~\ref{lem:GM-ME}. The non-erased symbols are denoted by check-marks. Fig.~\ref{fig:D2} illustrates  $\Omega^{\star}_{t}$ as stated in part~2 of Lemma~\ref{lem:GM-ME}.}
        \label{fig:D1D2}
\end{figure}
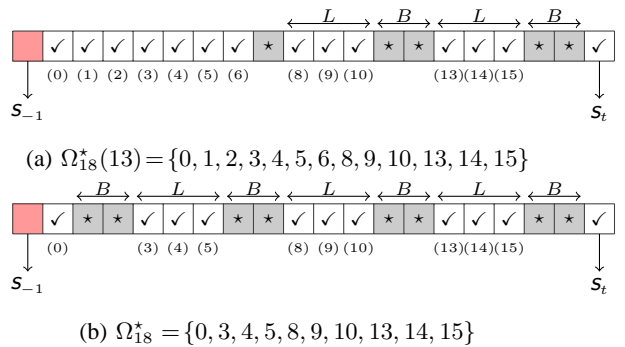


\begin{lemma}
\label{lem:GM-ME}
Assume that at time $t$, $\rvg_t =\rvf_t$ and let $\Omega_t$ be as defined in~\eqref{eq:Omeg-def} .
\begin{enumerate}
\item Among all feasible sets $\Omega_t$  of size  $|\Omega_t|=\theta$, 
$\lambda_{t}(\Omega_t)$ and $\gamma_{t}(\Omega_t)$ are maximized by a set $\Omega^{\star}_t(\theta)$
where all the erasures happen in the closest possible locations to time $t$. 
  

\item For each fixed $t$, both $\lambda_{t}(\Omega^\star_t(\theta))$ and $\gamma_t(\Omega_t^\star(\theta))$ are maximized by the minimum possible value of $\theta$. Equivalently, the maximizing set, denoted by $\Omega^\star_t$, corresponds to the erasure pattern with maximum number of erasures. 
\item Both $\lambda_{t}(\Omega^{\star}_t)$ and $\gamma_{t}(\Omega^{\star}_t)$ are increasing functions with respect to $t$. 
 
\end{enumerate}
$\hfill\Box$
\end{lemma}

The proof of Lemma~\ref{lem:GM-ME} is presented in Appendix~\ref{App:Lemma7}. We present an example  in Fig.~\ref{fig:D1D2} to illustrate Lemma~\ref{lem:GM-ME}. We assume $t=18$. The total number of possible erasures up to time $t=18$ is restricted to be $5$, or equivalently the number of non-erased packets is $\theta = 13$ in Fig~\ref{fig:D1}. The set $\Omega^{\star}_{18}(13)$ indicates the set of  non-erased indices associated with the worst case erasure pattern.  Based on part 2 of  Lemma~\ref{lem:GM-ME}, Fig.~\ref{fig:D2} shows the worst case erasure pattern for time $t=18$, which includes the maximum possible erasures.

Following  the three steps in Lemma~\ref{lem:GM-ME} a rate-distortion pair $(R,D)$ is achievable if
\begin{align}
&R\ge \lim_{t\to \infty} \lambda_{t}(\Omega^{\star}_t) \label{eq:limRate}\\
&D\ge \lim_{t\to \infty} \gamma_{t}(\Omega^{\star}_t) \label{eq:limtest}
\end{align}

{\begin{lemma}
\label{Claim:2}
Any test channel noise $\sigma^2_{z}$ satisfying~\eqref{eq:GM-ME-Ach} and~\eqref{eq:GM-ME-Ach-D} in Prop.~\ref{prop:GM-ME}, i.e. 
\begin{align}
&R\ge I(\rvs_{t};\rvu_{t} | \tilde{\rvs}_{t-L-B} ,  [\rvu]_{t-L-B+1}^{t-B-1})\\
&D\ge \sigma^{2}_{t}( \tilde{\rvs}_{t-L-B}, [\rvu]_{t-L-B+1}^{t-B-1},\rvu_{t}) \label{eq:Q3}
\end{align} 
where $\tilde{\rvs}_{t-L-B} = {\rvs}_{t-L-B} +\rve$, where $\rve\sim \mathcal{N}(0, D/(1-D))$,  also satisfies~\eqref{eq:limRate} and~\eqref{eq:limtest}. 
\end{lemma}}
\begin{proof}
See Appendix~\ref{App:Claim2}.
\end{proof}

This completes the proof of Prop~\ref{prop:GM-ME}.

\subsubsection{Numerical Evaluation} We derive the expression for numerically evaluating $\sigma_z^2$. To this end, first note that the estimation error of estimating $\rvs_{t-B-1}$ from $\{\tilde{\rvs}_{t-L-B}, [\rvu]_{t-L-B+1}^{t-B-1}\}$ can be computed as follows.
\begin{align}
&\eta (\sigma^{2}_{z})\triangleq \sigma^{2}_{t-B-1}(\tilde{\rvs}_{t-L-B}, [\rvu]_{t-L-B+1}^{t-B-1}) \notag\\
&= E\left[\rvs^2_{t-B-1}\right] - E\left[\rvs_{t-B-1}{\rvU}\right] \left(E\left[{\rvU}^{T}{\rvU}\right]\right)^{-1} E\left[\rvs_{t-B-1}{\rvU}^{T}\right] \label{eq:MMSE1}\\
& = 1- A_1 (A_2)^{-1} A_1^T
\end{align}  where we define  $${\rvU}\triangleq \begin{bmatrix}  \rvu_{t-B-1} & \rvu_{t-B-2} & \ldots & \rvu_{t-L-B+1} & \tilde{\rvs}_{t-L-B}\end{bmatrix}$$ and $(.)^T$ denotes the transpose operation. Also note that $A_1$ and $A_2$ can be computed as follows.
\begin{align}
&A_1= (1 , \rho , \rho^2 , \cdots , \rho^{L-1})\\
& A_2= \begin{pmatrix}
1+\sigma^2_{z} & \rho &\cdots & \rho^{L-2} & \rho^{L-1}\\
\rho & 1+\sigma^2_{z} &\cdots & \rho^{L-3} & \rho^{L-2}\\
\vdots & \vdots & \ddots & \vdots & \vdots \\
\rho^{L-2}  &\rho^{L-3}  &\cdots & 1+\sigma^2_{z} & \rho\\
\rho^{L-1}  &\rho^{L-2}  &\cdots & \rho &1+\frac{D}{1-D}\\
\end{pmatrix}
\end{align} According to \eqref{eq:Q3} we can write 
\begin{align}
D &= \sigma^{2}_{t}( \tilde{\rvs}_{t-L-B}, [\rvu]_{t-L-B+1}^{t-B-1}, \rvu_{t}) \\
 &=  \sigma^{2}_{t}\left( \hat{\rvs}_{t-B-1}(\tilde{\rvs}_{t-L-B}, [\rvu]_{t-L-B+1}^{t-B-1}), \rvu_{t}\right)\notag\\
& = \left[\frac{1}{\sigma^2_{z}}+ \frac{1}{1-\rho^{2(B+1)}(1-\eta (\sigma^{2}_{z})) }\right]^{-1} \label{eq:solveD}
\end{align} Therefore by solving \eqref{eq:solveD} the expression for $\sigma^{2}_{z}$ can be obtained.  Finally the achievable rate is computed as:
\begin{align}
& R^{+}_{\textrm{GM-ME}}(L,B,D)  = I(\rvs_{t};\rvu_{t} | \tilde{\rvs}_{t-L-B} ,  [\rvu]_{t-L-B+1}^{t-B-1})\notag\\
 &  =  h(\rvs_{t} | \tilde{\rvs}_{t-L-B} ,  [\rvu]_{t-L-B+1}^{t-B-1}) - h(\rvs_{t} | \tilde{\rvs}_{t-L-B} ,  [\rvu]_{t-L-B+1}^{t-B-1}, \rvu_{t})\notag\\
 & = h\left(\rvs_{t} | \hat{\rvs}_{t-B-1}(\tilde{\rvs}_{t-L-B}, [\rvu]_{t-L-B+1}^{t-B-1})\right) - \frac{1}{2} \log( 2 \pi e D)\notag\\
 & = \frac{1}{2} \log\left( 2 \pi e  \left(1-\rho^{2(B+1)}(1-\eta (\sigma^{2}_{z})) \right)\right) - \frac{1}{2} \log( 2 \pi e D)\notag\\
 & = \frac{1}{2} \log\left(\frac{ 1-\rho^{2(B+1)}(1-\eta (\sigma^{2}_{z}))}{D}\right).\label{eq:repD}
\end{align}  

\section{High Resolution Asymptotic}
\label{sec:HR}

We investigate the behavior of the lossy rate-recovery functions for Gauss-Markov sources for single and multiple burst erasure channel models, i.e. $R_{\textrm{GM-SE}}(B,D)$ and $R_{\textrm{GM-ME}}(L,B,D)$, in the high resolution regime and establish Corollary~\ref{corol:HR}. The following inequalities can be readily verified.
\begin{multline}
R^{-}_{\textrm{GM-SE}}(B,D) \le R_{\textrm{GM-SE}}(B,D) \le  \\ R_{\textrm{GM-ME}}(L,B,D) \le R^{+}_{\textrm{GM-ME}}(L,B,D) \label{eq:HRineqs}
\end{multline} The first and the last inequalities in \eqref{eq:HRineqs} are by definition and the second inequality follows from the fact that the rate achievable for multiple erasure model is also achievable for single burst erasure as the decoder can simply ignore the available codewords in reconstructing the source sequences.  According to \eqref{eq:HRineqs}, it suffices to characterize the high resolution limit of $R^{-}_{\textrm{GM-SE}}(B,D)$ and $R^{+}_{\textrm{GM-ME}}(L,B,D)$ in Prop.~\ref{prop:GML} and Prop.~\ref{prop:GM-ME} respectively.

For the lower bound note that as $D \rightarrow 0$ the expression for $\Delta$ in~\eqref{eq:thm-1} satisfies
$$\Delta \triangleq (D\rho^2+1-\rho^{2(B+1)})^2 - 4D\rho^2(1-\rho^{2B}) \rightarrow (1-\rho^{2(B+1)})^2.$$

Upon direct substitution in~\eqref{eq:thm-1} we have that
\begin{align}
\lim_{D\to 0} \left\{R^{-}_{\textrm{GM-SE}}(B,D) - \frac{1}{2} \log \left(\frac{1-\rho^{2(B+1)}}{D}\right)\right\} = 0, \label{hrl}
\end{align} as required.

To establish the upper bound note that
according to Prop.~\ref{prop:GM-ME} we can write 
\begin{align}
&R^{+}_{\textrm{GM-ME}}(L,B,D) \notag\\
&= I(\rvs_t ;\rvu_t | \tilde{\rvs}_{t-L-B}, [\rvu]_{t-L-B+1}^{t-B-1})\notag\\
&= h(\rvs_t | \tilde{\rvs}_{t-L-B}, [\rvu]_{t-L-B+1}^{t-B-1}) - h(\rvs_t | \tilde{\rvs}_{t-L-B}, [\rvu]_{t-L-B+1}^{t-B-1}, \rvu_t)\notag\\
&= h(\rvs_t | \tilde{\rvs}_{t-L-B}, [\rvu]_{t-L-B+1}^{t-B-1}) - \frac{1}{2} \log (2 \pi e D) \label{eq:limcal}
\end{align} 
where the last term follows from the definition of $\tilde{\rvs}_{t-L-B}$ in Prop.~\ref{prop:GM-ME}. Also we have 
\begin{align}
h(\rvs_{t}|\rvs_{t-B-1}) \le  h(\rvs_t | \tilde{\rvs}_{t-L-B}, [\rvu]_{t-L-B+1}^{t-B-1})  \le h(\rvs_t | \rvu_{t-B-1}) \label{sandwich1}
\end{align} where the left hand side inequality in \eqref{sandwich1} follows from the following Markov property,
\begin{align}
 \{ \tilde{\rvs}_{t-L-B}, [\rvu]_{t-L-B+1}^{t-B-1}\}  \rightarrow \rvs_{t-B-1} \rightarrow \rvs_{t}
\end{align} and the fact that conditioning reduces the differential entropy. Also, the right hand side inequality in \eqref{sandwich1} follows from the latter fact. By computing the upper and lower bounds in \eqref{sandwich1} we have 
\begin{multline}
\frac{1}{2}\log\left(2 \pi e (1-\rho^{2(B+1)})\right) \le \\ h(\rvs_t | \tilde{\rvs}_{t-L-B}, [\rvu]_{t-L-B+1}^{t-B-1})  \le \\ \frac{1}{2}\log\left(2 \pi e \left(1-\frac{\rho^{2(B+1)}}{1+\sigma^2_{z}}\right)\right) \label{sandwich2}
\end{multline}
Now note that 
\begin{align}
D\ge \sigma^2_{t}(\hat{\rvs}_{t-L-B}, [\rvu]_{t-L-B+1}^{t-B-1}, \rvu_{t}) &\ge \sigma^{2}_{t}(\rvu_{t}, \rvs_{t-1})\label{hr1g}\\
&=\left(\frac{1}{\sigma^{2}_{z}} + \frac{1}{1-\rho^2}\right)^{-1}.
\end{align} which equivalently shows that if $D\rightarrow 0$ we have that $\sigma_z^2 \rightarrow 0$. By computing the limit of the upper and lower bounds in  \eqref{sandwich2} as $D\to 0$, we can see that 
\begin{align}
&\lim _{D\to 0}  \left\{h(\rvs_t | \tilde{\rvs}_{t-L-B}, [\rvu]_{t-L-B+1}^{t-B-1}) \right.\notag\\
&\left.\quad- \frac{1}{2}\log \left(2 \pi e (1-\rho^{2(B+1)})\right)\right\} =0 \label{sandwich3}
\end{align} Finally \eqref{sandwich3} and   \eqref{eq:limcal} results in 
\begin{align}
\lim_{D\to 0} \left\{R^{+}_{\textrm{GM-ME}}(L,B,D) - \frac{1}{2}\log\left(\frac{1-\rho^{2(B+1)}}{D}\right)\right\} =0 \label{hr2}
\end{align} as required. Equations \eqref{hrl}, \eqref{hr2} and \eqref{eq:HRineqs} establishes the results of Corollary~\ref{corol:HR}.

%
%
%
%
%

\section{Independent Gaussian Sources with Sliding Window Recovery: Proof of Theorem~\ref{thm:gauss-rate}}
\label{sec:Gauss}
In this section we study the memoryless Gaussian source model discussed in Section~\ref{subsec:GSW}.
The source sequences are drawn i.i.d.\ both in spatial and temporal dimension according to a unit-variance, zero-mean, Gaussian distribution $\mathcal{N}(0,1)$. The rate-$R$ causal encoder sequentially compresses the source sequences and sends the codewords through the burst erasure channel. The channel erases a single burst of maximum length  $B$ and  perfectly reveals the rest of the packets to the decoder. The decoder at each time $i$ reconstructs $K+1$ past source sequences, i.e. $(\rvs^n_{i}, \rvs^n_{i-1}, \ldots, \rvs^n_{i-K})$ within a vector distortion measure $\bd = (d_0, \ldots, d_{K})$. More recent source sequences are required to be reconstructed within less distortion, i.e.  $d_0 \le d_1\le \ldots\le  d_{K}$. The decoder however is not interested in reconstructing the source sequences during the error propagation window, i.e. during the burst erasure and a window of length $W$ after the burst erasure ends. 

For this setup, we establish the rate-recovery function stated in Theorem~\ref{thm:gauss-rate}. We do this by presenting the coding scheme in Section~\ref{subsec:Gauss-Coding} and the converse in Section~\ref{subsec:G-Converse}. We also study some baseline schemes and compare their performance with the rate-recovery function at the end of this section.  

\begin{remark}
Our coding scheme in this section builds upon the technique introduced in \cite{etezadiKhistiDCC:12} for lossless recovery of deterministic sources. The example involving deterministic sources in  \cite{etezadiKhistiDCC:12} established that the lower bound in Theorem~\ref{thm:genUB_LB} can be attained for a certain class of deterministic sources. The binning based scheme is  suboptimal in general. The present paper does not include this example, but the reader is encouraged to see~\cite{etezadiKhistiDCC:12}.
\end{remark}

\subsection{Sufficiency of $K=B+W$}
\label{subsec:Sufficiency}
In our analysis we only consider the case $K=B+W$. The coding scheme can be easily extended to a general $K$ as follows. If $K<B+W$, we can assume that the decoder, instead of recovering the source $\rvbt_{i}=(\rvs_{i}, \rvs_{i-1}, \ldots, \rvs_{i-K})^{T}$ at time $i$ within distortion $\bd$,  aims to recover the source $\rvbt'_{i}=(\rvs_{i}, ..., \rvs_{i-K'})^{T}$ within distortion $\bd'$ where $K'=B+W$ and 
\begin{align}
d'_{j}=\begin{cases}
d_{j} &\mbox{for } j\in\{0, 1, ..., K\}\\
1&\mbox{for } j\in\{K+1, ..., K'\},
\end{cases}
\end{align}
and thus this case is a special case of ${K=B+W}$. Note that  layers $K+1,\ldots, K'$ require zero rate as the source sequences have unit variance.

If ${K>B+W}$, for each $j\in\{B+W+1, \ldots, K\}$ the decoder is required to reconstruct $\rvs^n_{i-j}$ within distortion $d_{j}$. However we note the rate associated with these layers is again zero. In particular there are two possibilities during the recovery at time $i$. Either, $\hat{\rvbt}_{i-1}^n$ or, $\hat{\rvbt}_{i-B-W-1}^n$ are guaranteed to have been reconstructed. In the former case
$\{\hat{\rvs}^n_{i-j}\}_{d_{j-1}}$ is\footnote{The notation $\{\hat{\rvs}_{i}^n\}_{d}$ indicates the reconstruction of $\hat{\rvs}_{i}^n$ within average distortion $d$.} available from time $i-1$ and $d_{j-1}\le d_{j}$. In the latter case
$\{\hat{\rvs}^n_{i-j}\}_{d_{j-W-B-1}}$ is available from time $i-B-W-1$ and again $d_{j-W-B-1}\le d_{j}$.
 Thus the reconstruction of any layer $j\ge B+W$  does not require any additional rate and it again suffices to assume ${K=B+W}$.

\subsection{Coding Scheme}
\label{subsec:Gauss-Coding}
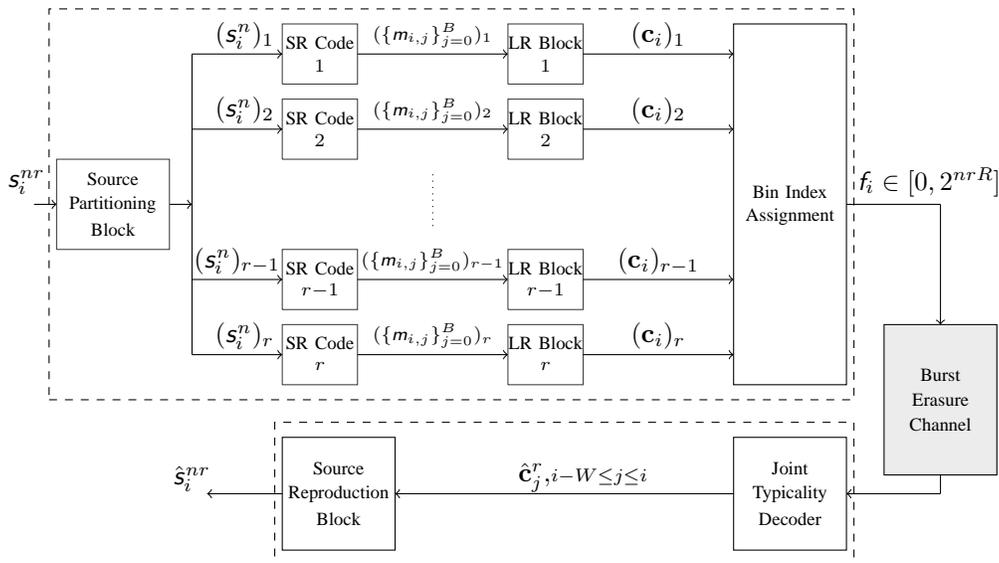
\begin{figure*}
\vspace{1em}
\begin{center}
%
%
%
%

%
\begin{tikzpicture}
\draw [->] (-1.5,-2.4) -- (-1.2,-2.4);
\draw [->] (-3.3,-2.4) -- (-3,-2.4);
\draw [white] (-3.4,-2.1) -- (-3.4,-2.1) node {$\color{black}\rvs_{i}^{nr}$};
\draw [white] (-1.2,-6.05) -- (-1.2,-6.05) node {$\color{black}\hat{\rvs}_{i}^{nr}$};
\fill [fill=white, draw=gray!50!black] (-3,-3) -- (-3,-1.8) -- (-1.5,-1.8) -- (-1.5,-3) -- (-3,-3);
\draw [white] (-2.25,-2.4+1/3) -- (-2.25,-2.4 + 1/3) node {$\color{black}\textrm{\scriptsize{Source}}$};
\draw [white] (-2.25,-2.4) -- (-2.25,-2.4) node {$\color{black}\textrm{\scriptsize{Partitioning}}$};
\draw [white] (-2.25,-2.4-1/3) -- (-2.25,-2.4 - 1/3) node {$\color{black}\textrm{\scriptsize{Block}}$};

\draw [-](6,-4.8) rectangle (7.5,0);
\draw [white] (6.75,-2.4+1/6) -- (6.75,-2.4+1/6) node {$\color{black}\textrm{\scriptsize{Bin Index}}$};
\draw [white] (6.75,-2.4-1/6) -- (6.75,-2.4 - 1/6) node {$\color{black}\textrm{\scriptsize{Assignment}}$};

\draw [-] (7.5,-2.4) -- (8.75,-2.4);
\draw [white] (8.6,-2.1) -- (8.6,-2.1) node {$\color{black}\rvf_{i}\in[0,2^{nrR}]$};

\draw [->] (8.75,-2.4) -- (8.75,-4);
\draw [fill= gray!15!white](8,-6) rectangle (9.5,-4);
\draw [white] (6.75,-6.25+1/3) -- (6.75,-6.25 + 1/3) node {$\color{black}\textrm{\scriptsize{Joint}}$};
\draw [white] (6.75,-6.25) -- (6.75,-6.25) node {$\color{black}\textrm{\scriptsize{Typicality}}$};
\draw [white] (6.75,-6.25-1/3) -- (6.75,-6.25 - 1/3) node {$\color{black}\textrm{\scriptsize{Decoder}}$};

\draw [white] (4,-6.03) -- (4,-6.03) node {$\color{black}\hat{\bc}^r_j, \scriptstyle{i-W\le j\le i}$};

\draw [dashed](-3.1,-5) rectangle (7.6,.2);

\draw [dashed](-0.1,-7.2) rectangle (7.6,-5.3);

\draw [-](6,-7) rectangle (7.5,-5.5);
\draw [white] (8.75,-5+1/3) -- (8.75,-5 + 1/3) node {$\color{black}\textrm{\scriptsize{Burst}}$};
\draw [white] (8.75,-5) -- (8.75,-5) node {$\color{black}\textrm{\scriptsize{Erasure}}$};
\draw [white] (8.75,-5-1/3) -- (8.75,-5 - 1/3) node {$\color{black}\textrm{\scriptsize{Channel}}$};

\draw [-] (8.75,-6) -- (8.75,-6.25);
\draw [->] (8.75,-6.25) -- (7.5,-6.25);

\draw [->] (6,-6.25) -- (1.5,-6.25);
\draw [<-] (-1,-6.25) --(0,-6.25);


\draw [-](0,-7) rectangle (1.5,-5.5);
\draw [white] (0.75,-6.25+1/3) -- (.75,-6.25 + 1/3) node {$\color{black}\textrm{\scriptsize{Source}}$};
\draw [white] (.75,-6.25) -- (.75,-6.25) node {$\color{black}\textrm{\scriptsize{Reproduction}}$};
\draw [white] (.75,-6.25-1/3) -- (.75,-6.25 - 1/3) node {$\color{black}\textrm{\scriptsize{Block}}$};

\def \a {0}{
\fill [fill=white, draw=gray!50!black] (0,-.8-\a) -- (0,-\a) -- (1,-\a) -- (1,-.8-\a) -- (0,-.8-\a);
\draw [white] (.5,-\a-0.25) -- (.5,-\a-0.25) node {$\color{black}\textrm{\scriptsize{SR Code}}$};
\draw [white] (.5,-\a-0.55) -- (.5,-\a-0.55) node {$\color{black}\scriptstyle{1}$};
\draw [->] (-1.2,-.4-\a) -- (0,-.4-\a);
\draw [->] (1,-.4-\a) -- (3,-.4-\a);
\fill [fill=white, draw=gray!50!black] (3,-.8-\a) -- (3,-\a) -- (4,-\a) -- (4,-.8-\a) -- (3,-.8-\a);
\draw [white] (3.5,-\a-0.25) -- (3.5,-\a-0.25) node {$\color{black}\textrm{\scriptsize{LR Block}}$};
\draw [white] (3.5,-\a-0.55) -- (3.5,-\a-0.55) node {$\color{black}\scriptstyle{1}$};
\draw [->] (4,-.4-\a) -- (6,-.4-\a);
\draw [white] (-.5,-\a-0.15) -- (-.5,-\a-0.15) node {$\color{black}(\rvs^{n}_{i})_{1}$};
\draw [white] (5,-\a-0.15) -- (5,-\a-0.15) node {$\color{black}(\bc_{i})_{1}$};
\draw [white] (2,-\a-0.15) -- (2,-\a-0.15) node {$\color{black}\scriptstyle(\{\rvm_{i,j}\}_{j=0}^{B})_{1}$};
}
\def \a {1}{
\fill [fill=white, draw=gray!50!black] (0,-.8-\a) -- (0,-\a) -- (1,-\a) -- (1,-.8-\a) -- (0,-.8-\a);
\draw [white] (.5,-\a-0.25) -- (.5,-\a-0.25) node {$\color{black}\textrm{\scriptsize{SR Code}}$};
\draw [white] (.5,-\a-0.55) -- (.5,-\a-0.55) node {$\color{black}\scriptstyle{2}$};
\draw [->] (-1.2,-.4-\a) -- (0,-.4-\a);
\draw [->] (1,-.4-\a) -- (3,-.4-\a);
\fill [fill=white, draw=gray!50!black] (3,-.8-\a) -- (3,-\a) -- (4,-\a) -- (4,-.8-\a) -- (3,-.8-\a);
\draw [white] (3.5,-\a-0.25) -- (3.5,-\a-0.25) node {$\color{black}\textrm{\scriptsize{LR Block}}$};
\draw [white] (3.5,-\a-0.55) -- (3.5,-\a-0.55) node {$\color{black}\scriptstyle{2}$};
\draw [->] (4,-.4-\a) -- (6,-.4-\a);
\draw [white] (-.5,-\a-0.15) -- (-.5,-\a-0.15) node {$\color{black}(\rvs^{n}_{i})_{2}$};
\draw [white] (5,-\a-0.15) -- (5,-\a-0.15) node {$\color{black}(\bc_{i})_{2}$};
\draw [white] (2,-\a-0.15) -- (2,-\a-0.15) node {$\color{black}\scriptstyle(\{\rvm_{i,j}\}_{j=0}^{B})_{2}$};
}

\def \a {3}{
\fill [fill=white, draw=gray!50!black] (0,-.8-\a) -- (0,-\a) -- (1,-\a) -- (1,-.8-\a) -- (0,-.8-\a);
\draw [white] (.5,-\a-0.25) -- (.5,-\a-0.25) node {$\color{black}\textrm{\scriptsize{SR Code}}$};
\draw [white] (.5,-\a-0.55) -- (.5,-\a-0.55) node {$\color{black}\scriptstyle{r-1}$};
\draw [->] (-1.2,-.4-\a) -- (0,-.4-\a);
\draw [->] (1,-.4-\a) -- (3,-.4-\a);
\fill [fill=white, draw=gray!50!black] (3,-.8-\a) -- (3,-\a) -- (4,-\a) -- (4,-.8-\a) -- (3,-.8-\a);
\draw [white] (3.5,-\a-0.25) -- (3.5,-\a-0.25) node {$\color{black}\textrm{\scriptsize{LR Block}}$};
\draw [white] (3.5,-\a-0.55) -- (3.5,-\a-0.55) node {$\color{black}\scriptstyle{r-1}$};
\draw [->] (4,-.4-\a) -- (6,-.4-\a);
\draw [white] (-.6,-\a-0.15) -- (-.6,-\a-0.15) node {$\color{black}(\rvs^{n}_{i})_{r-1}$};
\draw [white] (5,-\a-0.15) -- (5,-\a-0.15) node {$\color{black}(\bc_{i})_{r-1}$};
\draw [white] (2,-\a-0.15) -- (2,-\a-0.15) node {$\color{black}\scriptstyle(\{\rvm_{i,j}\}_{j=0}^{B})_{r-1}$};
}
\def \a {4}{
\fill [fill=white, draw=gray!50!black] (0,-.8-\a) -- (0,-\a) -- (1,-\a) -- (1,-.8-\a) -- (0,-.8-\a);
\draw [white] (.5,-\a-0.25) -- (.5,-\a-0.25) node {$\color{black}\textrm{\scriptsize{SR Code}}$};
\draw [white] (.5,-\a-0.55) -- (.5,-\a-0.55) node {$\color{black}\scriptstyle{r}$};
\draw [->] (-1.2,-.4-\a) -- (0,-.4-\a);
\draw [->] (1,-.4-\a) -- (3,-.4-\a);
\fill [fill=white, draw=gray!50!black] (3,-.8-\a) -- (3,-\a) -- (4,-\a) -- (4,-.8-\a) -- (3,-.8-\a);
\draw [white] (3.5,-\a-0.25) -- (3.5,-\a-0.25) node {$\color{black}\textrm{\scriptsize{LR Block}}$};
\draw [white] (3.5,-\a-0.55) -- (3.5,-\a-0.55) node {$\color{black}\scriptstyle{r}$};
\draw [->] (4,-.4-\a) -- (6,-.4-\a);
\draw [white] (-.5,-\a-0.15) -- (-.5,-\a-0.15) node {$\color{black}(\rvs^{n}_{i})_{r}$};
\draw [white] (5,-\a-0.15) -- (5,-\a-0.15) node {$\color{black}(\bc_{i})_{r}$};
\draw [white] (2,-\a-0.15) -- (2,-\a-0.15) node {$\color{black}\scriptstyle(\{\rvm_{i,j}\}_{j=0}^{B})_{r}$};
}

\draw [-] (-1.2,-.4) -- (-1.2,-4.4);
\draw [dotted] (2,-2) -- (2,-2.7);
%

\end{tikzpicture}
%
\caption{Schematic of encoder and decoder for i.i.d.\ Gaussian with sliding window recovery constraint. $\textrm{SR}$ and $\textrm{LR}$ indicate successive refinement and layer rearrangement (Sections \ref{subsec:sr} and \ref{subsec:lr}), respectively.}
\label{fig:Ach-GS}
\end{center}
\end{figure*}

Throughout our analysis, we assume the source sequences are of length $n\cdot r$ where both $n$ and $r$ will be assumed to be arbitrarily large. The block diagram of the scheme is shown in  Fig.~\ref{fig:Ach-GS}.  We partition $\rvs_i^{n\cdot r}$ into $r$ blocks each consisting of $n$ symbols $(\rvs_i^n)_l$. 
We then apply a successive refinement quantization codebook to each such block to generate ${B+1}$ indices $\left(\{\rvm_{i,j}\}_{j=0}^B\right)_l$ as discussed in section~\ref{subsec:sr}. Thereafter these indices are carefully rearranged in time to generate $(\bc_i)_l$ as discussed in Section~\ref{subsec:lr}.  At each time we thus have a length $r$ sequence  $\bc_i^r \defeq \left\{(\bc_i)_1,\ldots, (\bc_i)_r\right\}$   We transmit the  bin index of each  sequence  over the channel as in Section~\ref{sec:UBLB}. At the receiver the sequence $\hat{\bc}_i^r$ is first reconstructed by the inner decoder. Thereafter upon rearranging the refinement layers in each packet, the required reconstruction sequences are produced.  We provide the details of the encoding and decoding below. 

\subsubsection{Successive Refinement (SR) Encoder}
\label{subsec:sr}
\begin{figure}
\vspace{1em}
\begin{center}
\noindent
\resizebox{3in}{1.7in}{
\begin{tikzpicture}
\draw [white] (-1.2,-2.4) -- (-1.2,-2.4) node {$\color{black}\bs_{i}^{n}$};
\draw [-] (-1,-2.4) -- (-.7,-2.4);
\foreach \a in {0,1,3,4}{
\fill [fill=white, draw=gray!50!black] (0,-.8-\a) -- (0,-\a) -- (1,-\a) -- (1,-.8-\a) -- (0,-.8-\a);
\draw [white] (.5,-\a-0.25) -- (.5,-\a-0.25) node {$\color{black}\textrm{\scriptsize{Encoder}}$};
\draw [->] (-.7,-.4-\a) -- (0,-.4-\a);
\draw [->] (1,-.4-\a) -- (4.5,-.4-\a);
\fill [fill=white, draw=gray!50!black] (4.5,-.8-\a) -- (4.5,-\a) -- (5.5,-\a) -- (5.5,-.8-\a) -- (4.5,-.8-\a);
\draw [white] (5,-\a-0.25) -- (5,-\a-0.25) node {$\color{black}\textrm{\scriptsize{Decoder}}$};
\draw [->] (5.5,-.4-\a) -- (6,-.4-\a);
}

\foreach \a in {0}{
\draw [white] (.5,-\a-0.55) -- (.5,-\a-0.55) node {$\color{black}\scriptstyle{B}$};
\draw [white] (5,-\a-0.55) -- (5,-\a-0.55) node {$\color{black}\scriptstyle{B}$};
\draw [white] (6.9,-\a-0.4) -- (6.9,-\a-0.4) node {$\color{black}\{\hat{\bs}_{i}^n\}_{d_{B+W}}$};
\draw [white] (1.5,-\a-0.15) -- (1.5,-\a-0.15) node {$\color{black}\scriptstyle\rvm_{i,B}$};
\draw [white] (3.5,-\a-0.15) -- (3.5,-\a-0.15) node {$\color{black}\scriptstyle M_{i,B}$};
}

\def \a {1}{
\draw [white] (.5,-\a-0.55) -- (.5,-\a-0.55) node {$\color{black}\scriptstyle{B-1}$};
\draw [white] (5,-\a-0.55) -- (5,-\a-0.55) node {$\color{black}\scriptstyle{B-1}$};
\draw [white] (6.9,-\a-0.4) -- (6.9,-\a-0.4) node {$\color{black}\{\hat{\bs}_{i}^n\}_{d_{B+W-1}}$};
\draw [white] (1.5,-\a-0.15) -- (1.5,-\a-0.15) node {$\color{black}\scriptstyle\rvm_{i,B-1}$};
\draw [white] (3.5,-\a-0.12) -- (3.5,-\a-0.12) node {$\color{black}\scriptstyle M_{i,B-1}$};
\draw (3.5,-\a-0.35) ellipse (.05cm and .1cm);
}

\def \a {3}{
\draw [white] (.5,-\a-0.55) -- (.5,-\a-0.55) node {$\color{black}\scriptstyle{1}$};
\draw [white] (5,-\a-0.55) -- (5,-\a-0.55) node {$\color{black}\scriptstyle{1}$};
\draw [white] (6.9,-\a-0.4) -- (6.9,-\a-0.4) node {$\color{black}\{\hat{\bs}_{i}^n\}_{d_{W+1}}$};
\draw [white] (1.5,-\a-0.15) -- (1.5,-\a-0.15) node {$\color{black}\scriptstyle\rvm_{i,1}$};
\draw [white] (3.5,-\a-0.04) -- (3.5,-\a-0.04) node {$\color{black}\scriptstyle M_{i,1}$};
\draw (3.5,-\a-0.3) ellipse (.05cm and .15cm);
}

\def \a {4}{
\draw [white] (.5,-\a-0.55) -- (.5,-\a-0.55) node {$\color{black}\scriptstyle{0}$};
\draw [white] (5,-\a-0.55) -- (5,-\a-0.55) node {$\color{black}\scriptstyle{0}$};
\draw [white] (6.9,-\a-0.4) -- (6.9,-\a-0.4) node {$\color{black}\{\hat{\bs}_{i}^n\}_{d_{0}}$};
\draw [white] (1.5,-\a-0.15) -- (1.5,-\a-0.15) node {$\color{black}\scriptstyle\rvm_{i,0}$};
\draw [white] (3.5,-\a+0.05) -- (3.5,-\a+0.05) node {$\color{black}\scriptstyle M_{i,0}$};
\draw (3.5,-\a-0.25) ellipse (.05cm and .2cm);
}

\draw [dotted] (.5,-2-.6)--(.5,-2-.2);
\draw [dotted] (3.5,-2-.6)--(3.5,-2-.2);

\draw [dashed] (-.9, -5) rectangle (4.2,.2);

\draw [-] (-.7,-.4) -- (-.7,-4.4);
\draw [-] (2.8,-.4) -- (2.8,-4.1);
\draw [-] (2.6,-1.4) -- (2.6,-4.2);
\draw [-] (2.4,-3.4) -- (2.4,-4.3);

\draw [->] (2.8,-1.3) -- (4.5,-1.3);
\draw [->] (2.8,-3.2) -- (4.5,-3.2);
\draw [->] (2.6,-3.3) -- (4.5,-3.3);
\draw [->] (2.8,-4.1) -- (4.5,-4.1);
\draw [->] (2.6,-4.2) -- (4.5,-4.2);
\draw [->] (2.4,-4.3) -- (4.5,-4.3);

\end{tikzpicture}}
\caption{($B+1$)-layer coding scheme based on successive refinement (SR). Note that for each $k\in[0,B]$, $\rvm_{i,k}$ is of rate $\tilde{R}_{k}$ and $M_{i,k}$ is of rate $R_{k}$. The dashed box represents the SR code.}
\label{Suc-Ref}
\end{center}
\end{figure}
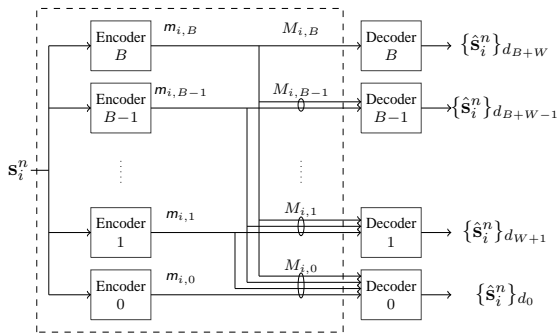
 
The encoder at time $i$, first partitions the source sequence $\rvs_{i}^{nr}$ into $r$ source sequences $(\rvs_{i}^{n})_{l}, l\in[1,r]$. As shown in Fig.~\ref{Suc-Ref}, we encode each source signal $(\rvs^n_{i})_{l}$ using a $(B+1)$-layer successive refinement codebook~\cite{equitzCover:91, rimoldi:94} to generate $(B+1)$ codewords whose indices are given by $\{(\rvm_{i,0})_{l}, (\rvm_{i,1})_{l}, \ldots,  (\rvm_{i,B})_{l}\}$ where $(\rvm_{i,j})_{l}\in \{1,2,\ldots,2^{n\tilde{R}_j}\}$ for $j\in\{0,1,\dots,B\}$ and
\begin{align}
&\tilde{R}_{j}=\begin{cases}
\frac{1}{2}\log(\frac{d_{W+1}}{d_{0}})& \mbox{for } j=0\\
\frac{1}{2}\log(\frac{d_{W+j+1}}{d_{W+j}})& \mbox{for } j\in\{1,2,\ldots,B-1\}\\
\frac{1}{2}\log(\frac{1}{d_{W+B}})& \mbox{for }  j=B,
\end{cases}
\end{align}

The $j$-th layer uses indices 
\begin{align}
(M_{i,j})_{l} \defeq \{(\rvm_{i,j})_{l},\ldots,(\rvm_{i,B})_{l}\}
\label{eq:Mdef}
\end{align} for reproduction and the associated rate with layer $j$ is given by:
\begin{align}
R_{j}=\begin{cases}
\sum_{k=0}^{B}\tilde{R}_{k}= \frac{1}{2}\log(\frac{1}{d_{0}})& \mbox{for } j=0\\
\sum_{k=j}^{B}\tilde{R}_{k}= \frac{1}{2}\log(\frac{1}{d_{W+j}})& \mbox{for } j\in\{1,2,\ldots,B\},
\end{cases}
\label{eq:Rdef}
\end{align}
and the corresponding distortion associated with layer $j$ equals $d_{0}$ for $j=0$ and $d_{W+j}$ for $j\in\{1,2,\ldots,B\}$.

From Fig.~\ref{Suc-Ref} it is clear that for any $i$ and $j\in \{0, \ldots, B\}$, the $j$-th layer $M_{i,j}$ is a subset of $j-1$-th layer $M_{i,j-1}$, i.e. $M_{i,j}\subseteq M_{i,j-1}$.

\subsubsection{Layer Rearrangement (LR) and Binning}
\label{subsec:lr}
In this stage the encoder rearranges the outputs of the SR blocks associated with different layers to produce an auxiliary set of sequences as follows\footnote{We suppress the index $l$ in~\eqref{eq:Mdef} for compactness.}.
\begin{align}
{\bc}_{i}\defeq\!
\begin{pmatrix}
{M_{i,0}}\\
{M_{i-1,1}}\\
{M_{i-2,2}}\\
\vdots\\
{M_{i-B,B}}
\end{pmatrix} \label{Slide1}
\end{align}

In the definition of~\eqref{Slide1} we note that $M_{i,0}$ consists of all the refinement layers associated with the source sequence at time $i$. It can be viewed as the ``innovation symbol" since it is independent of all past  symbols. It results in a distortion of $d_0$. The symbol $M_{i-1,1}$ consists of all refinement layers of the source sequence at time $i-1$, except the last layer and results in a distortion of $d_1$.  Recall that $M_{i-1,1} \subseteq M_{i-1,0}$. In a similar fashion $M_{i-B,B}$ is associated with the source sequence at time $i-B$ and results in a distortion of $d_B$.  Fig.~\ref{Layer} illustrates a schematic of these auxiliary codewords.

Note that as shown in  Fig.~\ref{Suc-Ref} the encoder at each time generates $r$ independent auxiliary codewords $(\bc_{i})_1,\ldots, (\bc_i)_r$.  Let $\bc_{i}^r$ be the set of all $r$ codewords.  In the final step, the encoder generates $\rvf_{i}$, the bin index associated with the codewords $\bc^r_{i}$ and transmit this through the channel. The bin indices are randomly and independently assigned to all the codewords beforehand and are revealed to both encoder and decoder. 

\begin{figure*}
\vspace{1em}
\begin{center}
\begin{tikzpicture}
\def \a {0}{
\draw [white] (\a,0) -- (\a,0) node {$\color{black}\scriptstyle{\bc_{i}}$};
\draw [fill=red!10!white]  (\a-.5,-.2-.5)  rectangle (\a+.5, .2-.5);
\draw [-] (\a-.5,-.25) -- (\a+.5,-.25);
\draw [white] (\a,-.5) -- (\a,-.5) node {$\color{black}\scriptstyle{M_{i,0}}$};
\draw [white] (\a,-1) -- (\a,-1) node {$\color{black}\scriptstyle{M_{i-1,1}}$};
\draw [white] (\a,-1.5) -- (\a,-1.5) node {$\color{black}\scriptstyle{M_{i-2,2}}$};
\draw [dashed] (\a,-2) -- (\a,-2.4);
\draw [white] (\a,-3) -- (\a,-3) node {$\color{black}\scriptstyle{M_{i-B,B}}$};
}

\def \a {-2}{
\draw [white] (\a,0) -- (\a,0) node {$\color{black}\scriptstyle{\bc_{i-1}}$};
\draw [fill=red!10!white]  (\a-.5,-.2-.5)  rectangle (\a+.5, .2-.5);
\draw [-] (\a-.5,-.25) -- (\a+.5,-.25);
\draw [white] (\a,-.5) -- (\a,-.5) node {$\color{black}\scriptstyle{M_{i-1,0}}$};
\draw [white] (\a,-1) -- (\a,-1) node {$\color{black}\scriptstyle{M_{i-2,1}}$};
\draw [white] (\a,-1.5) -- (\a,-1.5) node {$\color{black}\scriptstyle{M_{i-3,2}}$};
\draw [dashed] (\a,-2) -- (\a,-2.4);
\draw [white] (\a,-3) -- (\a,-3) node {$\color{black}\scriptstyle{M_{i-B-1,B}}$};
}

\def \a {-4}{
\draw [white] (\a,0) -- (\a,0) node {$\color{black}\scriptstyle{\bc_{i-2}}$};
\draw [fill=red!10!white]  (\a-.5,-.2-.5)  rectangle (\a+.5, .2-.5);
\draw [-] (\a-.5,-.25) -- (\a+.5,-.25);
\draw [white] (\a,-.5) -- (\a,-.5) node {$\color{black}\scriptstyle{M_{i-2,0}}$};
\draw [white] (\a,-1) -- (\a,-1) node {$\color{black}\scriptstyle{M_{i-3,1}}$};
\draw [white] (\a,-1.5) -- (\a,-1.5) node {$\color{black}\scriptstyle{M_{i-4,2}}$};
\draw [dashed] (\a,-2) -- (\a,-2.4);
\draw [white] (\a,-3) -- (\a,-3) node {$\color{black}\scriptstyle{M_{i-B-2,B}}$};
}

\def \a {-8}{
\draw [white] (\a,0) -- (\a,0) node {$\color{black}\scriptstyle{\bc_{i-W+1}}$};
\draw [fill=red!10!white]  (\a-.75,-.2-.5)  rectangle (\a+.75, .2-.5);
\draw [-] (\a-.5,-.25) -- (\a+.5,-.25);
\draw [white] (\a,-.5) -- (\a,-.5) node {$\color{black}\scriptstyle{M_{i-W+1,0}}$};
\draw [white] (\a,-1) -- (\a,-1) node {$\color{black}\scriptstyle{M_{i-W,1}}$};
\draw [white] (\a,-1.5) -- (\a,-1.5) node {$\color{black}\scriptstyle{M_{i-W-1,2}}$};
\draw [dashed] (\a,-2) -- (\a,-2.4);
\draw [white] (\a,-3) -- (\a,-3) node {$\color{black}\scriptstyle{M_{i-B-W+1,B}}$};
}

\def \a {-10}{
\draw [white] (\a,0) -- (\a,0) node {$\color{black}\scriptstyle{\bc_{i-W}}$};
\draw [fill=red!10!white]  (\a-.75,-.2-.5)  rectangle (\a+.75, .2-.5);
\draw [fill=red!10!white]  (\a-.75,-.2-1.5)  rectangle (\a+.75, .2-1.5);
\draw [fill=red!10!white]  (\a-.75,-.2-1)  rectangle (\a+.75, .2-1);
\draw [fill=red!10!white]  (\a-.75,-.2-3)  rectangle (\a+.75, .2-3);
\draw [-] (\a-.5,-.25) -- (\a+.5,-.25);
\draw [white] (\a,-.5) -- (\a,-.5) node {$\color{black}\scriptstyle{M_{i-W,0}}$};
\draw [white] (\a,-1) -- (\a,-1) node {$\color{black}\scriptstyle{M_{i-W-1,1}}$};
\draw [white] (\a,-1.5) -- (\a,-1.5) node {$\color{black}\scriptstyle{M_{i-W-2,2}}$};
\draw [dashed] (\a,-2) -- (\a,-2.4);
\draw [white] (\a,-3) -- (\a,-3) node {$\color{black}\scriptstyle{M_{i-B-W,B}}$};
}
\draw [ shift= {(-1,-.75)}, rotate=75] (0,0) circle (.25 and 1.6);
\draw [ shift= {(-2,-1)}, rotate=76] (0,0) circle (.25 and 2.7);
\draw [dashed] (-6-.5,-.25) -- (-6+.5,-.25);
\end{tikzpicture}
\caption{{Schematic of the auxiliary codewords defined in \eqref{Slide1}. The codewords are temporally correlated in a diagonal form depicted using ellipses. In particular, as shown in Fig.~\ref{Suc-Ref}, $M_{i-j,j} \subseteq M_{i-j,j-1}$. Based on this diagonal correlation structure, the codewords depicted in the boxes are sufficient to know all the codewords .}}
\label{Layer}
\end{center}
\end{figure*}
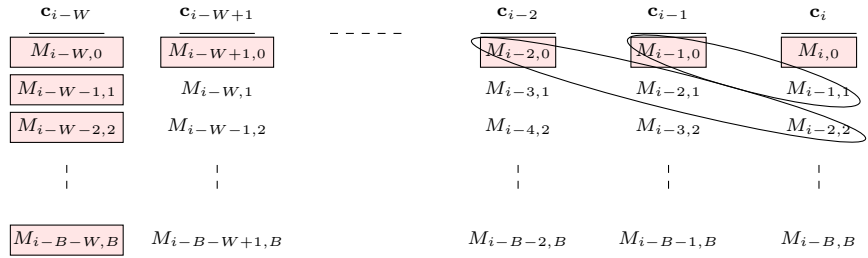

\subsubsection{Decoding and Rate Analysis}
{
To analyze the decoding process, first consider the simple case where the actual codewords $\bc^r_{i}$ defined in \eqref{Slide1}, and not the assigned bin indices, are transmitted through the channel. In this case, whenever the channel packet is not erased by the channel, the decoder has access to the codewords $\bc^r_{i}$. According to the problem setup, at anytime $i$ outside the error propagation window, when the decoder is interested in reconstructing the original source sequences, it has access to the past $W+1$ channel packets, i.e. $(\rvf_{i-W}, \ldots, \rvf_{i-1}, \rvf_{i})$. Therefore, the codewords $(\bc^r_{i}, \bc^r_{i-1},\ldots, \bc^r_{i-W})$ are known to the decoder. Now consider the following claim. 

\begin{claim}
\label{Claim}
The decoder at each time $i$ is able to reconstruct the source sequences within required distortion vector if either the sequences $(\bc^r_{i}, \bc^r_{i-1},\ldots, \bc^r_{i-W})$ or $(\hat{\rvbt}^{n\cdot r}_{i-1}, \bc^{r}_{i})$ is available.
$\hfill\Box$
\end{claim}

\begin{proof}
Fig.~\ref{Layer} shows a schematic of the codewords. First consider the case where $(\bc^r_{i}, \bc^r_{i-1},\ldots, \bc^r_{i-W})$ is known. According to \eqref{Slide1} the decoder also knows $(M^r_{i,0}, M^r_{i-1,0},\ldots, M^r_{i-W,0}).$  Therefore, according to SR structure depicted in Fig.~\ref{Suc-Ref}, the source sequences $(\rvs^{nr}_{i}, \rvs^{nr}_{i-1}, \ldots, \rvs^{nr}_{i-W})$ are each known within distortion $d_{0}$. This satisfies the original distortion constraint as $d_{0}\le d_{k}$ for each $k\in\{1, \ldots, W\}$. In addition, since $\bc_{i-W}$ is known,  according to \eqref{Slide1}, $(M^r_{i-W-1,1}, M^r_{i-W-2,2},\ldots, M^r_{i-B-W,B})$ is known and according to SR structure depicted in Fig.~\ref{Suc-Ref} the source sequences $(\rvs^{nr}_{i-W-1}, \rvs^{nr}_{i-W-2}, \ldots, \rvs^{nr}_{i-B-W})$ are known within distortion $(d_{W+1}, d_{W+2}, \ldots, d_{B+W})$ which satisfies the distortion constraint. Now consider the case where $\hat{\rvbt}_{i-1}^n$ and $\bc^r_{i}$ are available, i.e. $\rvbt^{nr}_{i-1}$ is already reconstructed within the required distortion vector, the decoder is able to reconstruct $\hat{\rvbt}^{nr}_{i}$ from $\hat{\rvbt}^{nr}_{i-1}$ and $\bc^r_{i}$. In particular, from $M^r_{i}$ the source sequence $\rvs^{nr}_{i}$ is reconstructed within distortion $d_0$. Also reconstruction of $\rvs^{nr}_{i-k}$ within distortion $d_{k-1}$ is already available from $\hat{\rvbt}_{i}$ for $k\in [1,B+W]$ which satisfies the distortion constraint as $d_{k-1}\le d_{k}$. 
\end{proof}

Thus we have shown that if actual codewords $\bc^r_{i}$ defined in \eqref{Slide1} are transmitted the required distortion constraints are satisfied. It can be verified from \eqref{Slide1} and~\eqref{eq:Rdef}
that the rate associated with the  $\bc_{i}^r$ is given by 
\begin{align}
R_{\cal C}= \sum_{k=0}^{B} R_{k}  = \frac{1}{2}\log\left(\frac{1}{d_0}\right) + \sum_{j=1}^B \frac{1}{2}\log\left(\frac{1}{d_{W+j}}\right) \label{eq:Rc}
\end{align}
Thus compared to the achievable rate~\eqref{eq:det-rate-Gaussian} in Theorem~\ref{thm:gauss-rate} we are missing the factor of $\frac{1}{W+1}$ in the second term. To reduce the rate, note that, as shown in Fig.~\ref{Layer} and based on definition of the auxiliary codewords in \eqref{Slide1}, there is a strong temporal correlation among the consecutive codewords.  We therefore bin the set of all sequences $\bc_i^r$ into $2^{nrR}$ bins as in Section~\ref{sec:UBLB}. The encoder, upon observing $\bc_{i}^r$, only transmits its bin index $\rvf_{i}$ through the channel.  We next describe the decoder and compute the minimum rate required to reconstruct  $\bc_i^r$.

%
%
Outside the error propagation window, one of the following cases can happen as discussed below. We claim that in either case the decoder is able to reconstruct $\bc_i^r$ as follows.

\begin{itemize}
\item In the first case, the decoder has already recovered  $\bc_{i-1}^{r}$ and attempts to recover $\bc_i^r$ given $(\rvf_i, \bc_{i-1}^r)$. This succeeds with high probability if
{\allowdisplaybreaks{\begin{align}
&nR \ge H(\bc_{i} | \bc_{i-1}) \\
&= H(M_{i,0}, M_{i-1,1}, \ldots, M_{i-B,B}|\bc_{i-1})\label{eq:M-sub-1}\\
&= H(M_{i,0}, M_{i-1,1}, \ldots, M_{i-B,B}| M_{i-1,0}, M_{i-2,1}, \ldots, \notag\\
&\qquad \ldots M_{i-B,B-1}, M_{i-B-1,B})\label{eq:M-sub-2}\\
&= H(M_{i,0}) \label{expl}\\
&= nR_{0} \label{SlideR2}
\end{align}}} where we use~\eqref{Slide1} in~\eqref{eq:M-sub-1} and~\eqref{eq:M-sub-2}, and the fact that layer $j$ is a subset of layer $j-1$ i.e., $M_{i-j, j} \subseteq M_{i-j, j-1}$ in~\eqref{expl}. Thus the reconstruction of $\bc_i^r$ follows since the choice of~\eqref{eq:det-rate-Gaussian} satisfies~\eqref{SlideR2}. Thus according to the second part of Claim~\ref{Claim}, the decoder is able to reconstruct $\hat{\rvbt}^{n\cdot r}_{i}$. 

\item In the second case we assume that the decoder has not yet successfully reconstructed $\bc_{i-1}^r$ but is required to reconstruct $\bc_i^r$. In this case $\bc_i^r$ is the first sequence to be recovered following the end of the error propagation window. Our proposed decoder uses $(\rvf_{i}, \rvf_{i-1},\ldots, \rvf_{i-W})$ to simultaneously reconstruct $( \bc_i^r,\ldots, \bc_{i-W}^r)$.
This succeeds with high probability provided:
\begin{align}
&n(W+1)R \notag\\
&\ge H(\bc_{i-W}, \bc_{i-W+1}, \ldots, \bc_{i})\notag\\
& = H(\bc_{i-W}, M_{i-W+1,0}, M_{i-W+2,0}, \ldots, M_{i,0})\label{Slide2}\\
& = H(\bc_{i-W}) + \sum_{k=1}^{W}H(M_{i-W+k,0})\label{Slide3}\\
&= H(M_{i-W,0},M_{i-W-1,1}, \ldots, M_{i-B-W, B}) \notag\\ &\qquad +\sum_{k=1}^{W}H(M_{i-W+k,0})\notag\\
& = n\sum_{k=1}^{B}R_{k} + n(W+1)R_{0} \label{SlideR1} 
\end{align}
where in \eqref{Slide2} we use the fact that 
the sub-symbols satisfy ${M}_{i,j+1} \subseteq {M}_{i,j}$ as illustrated in Fig.~\ref{Layer}.
In particular, in computing the rate in \eqref{Slide2} all the sub-symbols in $\bc_{i-W}$ and the sub-symbols $M_{j,0}$ for $j\in[i-W+1, i]$ need to be considered.
From \eqref{SlideR2}, \eqref{SlideR1} and \eqref{eq:Rdef}, the rate $R$ is achievable if
\begin{align}
R &\ge R_{0} + \frac{1}{W+1} \sum_{k=1}^{B} R_{k}\\
&=\frac{1}{2} \log\left( \frac{1}{d_0}\right) + \frac{1}{2(W+1)}\sum_{k=1}^{B} \log\left(\frac{1}{d_{W+k}}\right)\label{eq:R_gauss_layered_achiev}.
\end{align}
as required.
Thus, the rate constraint in \eqref{eq:R_gauss_layered_achiev} is sufficient for the decoder to recover the codewords $( \bc_i^r,\ldots, \bc_{i-W}^r)$ right after the error propagation window and to reconstruct $\hat{\rvbt}_{i}^{n\cdot r}$ according to Claim~\ref{Claim}.  
\end{itemize}
}

Thus, the rate constraint in \eqref{eq:R_gauss_layered_achiev} is sufficient for the decoder to succeed in reconstructing the source sequences within required distortion constraints at the anytime $i$ outside error propagation window.
This completes the justification of the upper bound in Theorem~\ref{thm:gauss-rate}.  

\subsection{Converse for Theorem~\ref{thm:gauss-rate}}
\label{subsec:G-Converse}
We need to show that for any sequence of codes that achieve a distortion tuple $(d_0,\ldots, d_{W+B})$ the rate is lower bounded by~\eqref{eq:R_gauss_layered_achiev}.
As in the proof of Theorem~\ref{thm:genUB_LB}, we consider a burst erasure  of length $B$ spanning the time interval $[t-B-W, t-W-1]$. Consider,

\begin{align}
(W+1)n R &\ge H([\rvf]_{t-W}^t)\notag\\
&\ge H([\rvf]_{t-W}^t| [\rvf]_{0}^{t-B-W-1}, \rvs_{-1}^n) \label{eq:Gauss-LB}
\end{align}
where the last step follows from the fact that conditioning reduces entropy. We need to lower bound the entropy term in \eqref{eq:Gauss-LB}. Consider
\begin{align}
&H([\rvf]_{t-W}^{t}|[\rvf]_{0}^{{t-B-W-1}}, \rvs_{-1}^n) \notag \\
&=I([\rvf]_{t-W}^{t};\rvbt_{t}^n|[\rvf]_{0}^{{t-B-W-1}}, \rvs_{-1}^n) + \notag\\ &\qquad H([\rvf]_{t-W}^{t}|[\rvf]_{0}^{{t-B-W-1}}, \rvbt_{t}^n, \rvs_{-1}^n)\\
&=h(\rvbt_{t}^n|[\rvf]_{0}^{{t-B-W-1}}, \rvs_{-1}^n) -h(\rvbt_{t}^n|[\rvf]_{0}^{{t-B-W-1}}, [\rvf]_{t-W}^{t}, \rvs_{-1}^n)  + \notag\\ &\qquad H([\rvf]_{t-W}^{t}|[\rvf]_{0}^{{t-B-W-1}}, \rvbt_{t}^n, \rvs_{-1}^n)\notag\\
&=h(\rvbt_{t}^n) -h(\rvbt_{t}^n|[\rvf]_{0}^{{t-B-W-1}}, [\rvf]_{t-W}^{t}, \rvs_{-1}^n) + \notag\\ &\qquad  H([\rvf]_{t-W}^{t}|[\rvf]_{0}^{{t-B-W-1}}, \rvbt_{t}^n, \rvs_{-1}^n) \label{eq:exp1}
\end{align}
where~\eqref{eq:exp1} follows since ${\rvbt_{t}^n = (\rvs_{t-B-W}^n,\ldots, \rvs_{t}^n)}$ is independent of $([\rvf]_0^{{t-B-W-1}}, \rvs_{-1}^n)$ as the source sequences $\rvs_i^n$ are generated i.i.d.\ . By expanding $\rvbt_{t}^n$ we have that
\begin{align}
h(\rvbt_{t}^n)  &= h(\rvs_{t-B-W}^n, \ldots, \rvs_{t-W-1}^n)  + h(\rvs_{t-W}^n, \ldots, \rvs_{t}^n), \label{eq:rel1}
\end{align}and
\begin{align}
&h(\rvbt_{t}^n|[\rvf]_{0}^{{t-B-W-1}}, [\rvf]_{t-W}^{t}, \rvs_{-1}^n) \notag\\&=h(\rvs_{t-B-W}^n, \ldots, \rvs_{t-W-1}^n|[\rvf]_{0}^{{t-B-W-1}}, [\rvf]_{t-W}^{t}, \rvs_{-1}^n) + \notag \\ &\quad h\left(\rvs_{t-W}^n, \ldots, \rvs_{t}^n|[\rvf]_{0}^{{t-B-W-1}}, [\rvf]_{t-W}^{t}  ,\rvs_{t-B-W}^n,\right.\notag\\
&\left.\quad\quad \quad  \ldots, \rvs_{t-W-1}^n, \rvs_{-1}^n\right) \label{eq:rel2}
\end{align}

We next establish the following claim whose proof is in Appendix~\ref{app:NEWapp}.

\begin{lemma}
\label{claim:converse}
The following two inequalities holds. 
\begin{multline}
h(\rvs_{t-B-W}^n, \ldots, \rvs_{t-W-1}^n)- \\ h(\rvs_{t-B-W}^n, \ldots, \rvs_{t-W-1}^n|[\rvf]_{0}^{{t-B-W-1}}, [\rvf]_{t-W}^{t}, \rvs_{-1}^n)\\
 \ge \sum_{i=1}^{B}\frac{n}{2}\log{(\frac{1}{d_{W+i}})}\label{eq:Gauss_LB_T1}
\end{multline}

\begin{align}
&h(\rvs_{t-W}^n, \ldots, \rvs_{t}^n)- \notag\\
&\quad h\left(\rvs_{t-W}^n, \ldots, \rvs_{t}^n| [\rvf]_{0}^{{t-B-W-1}},[\rvf]_{t-W}^{t},\rvs_{t-B-W}^n, \right.\notag\\
&\left.\quad \quad \ldots, \rvs_{t-W-1}^n, \rvs_{-1}^n\right)\notag\\ 
&\quad+ H([\rvf]_{t-W}^{t}|[\rvf]_{0}^{{t-B-W-1}}, \rvbt_{t}^n, \rvs_{-1}^n) \notag\\
&\quad \quad \ge \frac{n(W+1)}{2}\log(\frac{1}{d_0})
\label{eq:Gauss_LB_T2}
\end{align}
$\hfill\Box$
\end{lemma}

\begin{proof}
See Appendix~\ref{app:NEWapp}.
\end{proof}

From~\eqref{eq:exp1},~\eqref{eq:rel1},~\eqref{eq:rel2},~\eqref{eq:Gauss_LB_T1} and~\eqref{eq:Gauss_LB_T2}, we can write
\begin{multline}
H\left([\rvf]_{t-W}^{t} | [\rvf]_{0}^{t-B-W-1}, \rvs_{-1}^n\right)  \ge \\ \frac{n}{2}\sum_{i=1}^{B} \log \left(\frac{1}{ d_{W+i}}\right) + \frac{n(W+1)}{2}\log\left(\frac{1}{d_{0}}\right) \label{eq:gauss-Ent-LB}.
\end{multline}

Substituting~\eqref{eq:gauss-Ent-LB} into~\eqref{eq:Gauss-LB} 
and taking $n\rightarrow \infty$, we recover 
\begin{align}
R \ge \frac{1}{2}\log_{2}\left(\frac{1}{d_{0}}\right) + \frac{1}{2(W+1)}\sum_{j=1}^{B} \log_2 \left(\frac{1}{ d_{W+j}}\right).
\end{align} as required.

\subsection{Illustrative Suboptimal Schemes}
\label{subsec:Comparison}
We compare the optimal lossy rate-recovery function with  the following suboptimal schemes. 
\subsubsection{Still-Image Compression} In this scheme, the encoder ignores the decoder's memory and at time $i\ge 0$ encodes the source $\rvbt_{i}$  in a memoryless manner and sends the codewords through the channel. The rate associated with this scheme is 
\begin{align}
R_{\textrm{SI}}(\bd)&=I(\rvbt_{i};\hat{\rvbt}_{i})=\sum_{k=0}^{K}\frac{1}{2}\log \bigg(\frac{1}{d_{k}}\bigg)
\end{align} In this scheme, the decoder is able to recover the source whenever its codeword is available, i.e. at all the times except when the erasure happens.   
\subsubsection{Wyner-Ziv Compression with Delayed Side Information} At time $i$ the encoders assumes that $\rvbt_{i-B-1}$ is already reconstructed at the receiver within distortion $\bd$. With this assumption, it compresses the source $\rvbt_{i}$ according to Wyner-Ziv scheme and transmits the codewords through the channel. The rate of this scheme is 
\begin{align}
R_{\textrm{WZ}}(B,\bd)&=I(\rvbt_{i};\hat{\rvbt}_{i}|\hat{\rvbt}_{i-B-1})=\sum_{k=0}^{B}\frac{1}{2}\log \bigg(\frac{1}{d_{k}}\bigg)
\end{align} Note that, if at time $i$, $\hat{\rvbt}_{i-B-1}$ is not available, $\hat{\rvbt}_{i-1}$ is available and the decoder can still use it as side-information to construct $\hat{\rvbt}_{i}$ since $I(\rvbt_{i};\hat{\rvbt}_{i}|\hat{\rvbt}_{i-B-1})\ge I(\rvbt_{i};\hat{\rvbt}_{i}|\hat{\rvbt}_{i-1})$. 

As in the case of Still-Image Compression, the Wyner-Ziv scheme also enables the recovery of each source sequence except those with erased codewords.

\subsubsection{Predictive Coding plus FEC} This scheme consists of predictive coding followed by a Forward Error Correction (FEC) code to compensate the effect of packet losses of the channel. As the contribution of $B$ erased codewords need to be recovered using $W+1$ available codewords, the rate of this scheme can be computed as follows. 
\begin{align}
R_{\textrm{FEC}}(B,W,\bd)&=\frac{B+W+1}{W+1}I(\rvbt_{i};\hat{\rvbt}_{i}|\hat{\rvbt}_{i-1})\\&=\frac{B+W+1}{2(W+1)}\log \bigg(\frac{1}{d_{0}}\bigg)
\end{align} 

\subsubsection{GOP-Based Compression} 
This scheme consists of predictive coding where the synchronization sources (I-frames) are inserted periodically to prevent error propagation. The synchronization frames are transmitted with the rate $R_1=I(\rvbt_{i} ; \hat{\rvbt}_{i})$ and the rest of the frames are transmitted at the rate $R_2=I(\rvbt_{i} ; \hat{\rvbt}_{i} | \hat{\rvbt}_{i-1})$ using predictive coding. Whenever the erasure happens the decoder fails to recover the source sequences until the next synchronization source and then the decoder becomes synced to the encoder. In order to guarantee the recovery of the sources, 
the synchronization frames have to be inserted with the period of at most $W+1$. This results in the following average rate expression.

\begin{align}
R&= \frac{1}{(W+1)} I(\rvbt_{i} ; \hat{\rvbt}_{i})+ \frac{W}{(W+1)} I(\rvbt_{i} ; \hat{\rvbt}_{i} | \hat{\rvbt}_{i-1})\\
&= \frac{1}{2(W+1)} \sum_{k=0}^{K}\log (\frac{1}{d_{k}})+ \frac{W}{2(W+1)}\log(\frac{1}{d_{0}})
\end{align}

\begin{figure}
\centering
\includegraphics[scale=0.6]{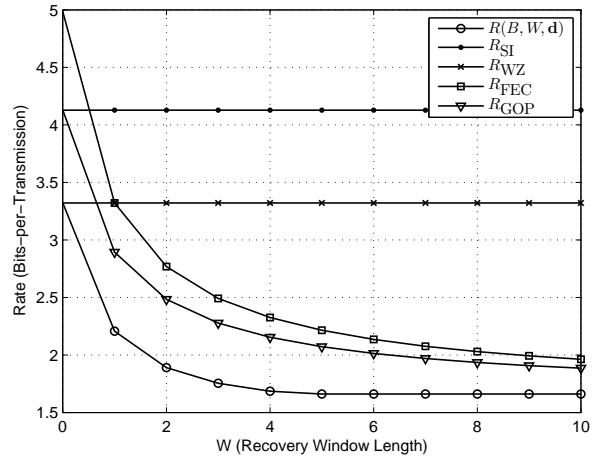}
\caption{Comparison of rate-recovery of suboptimal systems to minimum possible rate-recovery function for different recovery window length $W$. We assume $K=5$, ${B=2}$ and a distortion vector ${\bd=(0.1,0.25,0.4,0.55,0.7,.85)}$.  }
\label{fig:comparison}
\end{figure} 

In Fig.~\ref{fig:comparison},  we compare the result in Theorem~\ref{thm:gauss-rate} with the 
described schemes. 
It can be observed from Fig.~\ref{fig:comparison} that except when $W=0$ none of the other schemes are optimal. The \emph{Predictive Coding plus FEC} scheme, which is a natural {separation based scheme}  and the \emph{GOP-based} compression scheme are  suboptimal even for relatively large values of $W$.  Also note that the \emph{GOP-based} compression scheme reduces to \emph{Still-Image} compression for $W=0$.
 

\section{Conclusions}
\label{sec:Conclusion}

We presented a real-time streaming scenario where a sequence of source vectors must be sequentially encoded, and transmitted over a burst erasure channel.  The source vectors must be reconstructed with zero delay at the destination. However those sequences that occur during the erasure burst or a period of length $W$ following the burst need not be reconstructed. We assume that the source vectors are sampled i.i.d.\ across the spatial dimension and from a first-order, stationary, Markov process across the temporal dimension. We study the minimum achievable compression rate, which we define to be the rate-recovery function in our setup. 

For the case of discrete sources and lossless recovery, we establish upper and lower bounds on the rate-recovery function and observe that they coincide for the special cases when $W=0$ and $W\to \infty$. More generally both our upper and lower bound expressions can be expressed as the rate of  predictive coding plus another term that decreases at-least inversely with $W$. For the restricted  class of memoryless encoders and symmetric sources, we establish that a binning based scheme is optimal. For the case of Gauss-Markov sources and a quadratic distortion measure, we establish upper and lower bounds on the minimum rate when $W=0$ and observe that these bounds coincide in the high-resolution regime. The achievability is based on a quantize and binning scheme, but the analysis is a non-trivial extension of the lossless case as the reconstruction sequences at the destination do not form a Markov chain. We also study another setup involving independent Gaussian sources and a sliding-window reconstruction constraint where the rate-recovery function is attained using a successive refinement coding scheme. 

We believe that the present work can be extended in a number of directions. The focus in this paper has been on {causal encoders and zero-delay decoders. It is interesting to consider more general encoders with finite-lookahead and decoders with delay constraint. Some such extensions have been recently considered in~\cite{etezadi-khisti-14}. Secondly, the paper considers} only the case of burst-erasure channels. It may be interesting to consider channels that introduce both burst erasures and isolated erasures as considered recently in the channel coding context~\cite{badrinfo:13}. Thirdly, our present setup assumes that within the recovery period, a complete outage is declared, and no reconstruction is necessary. Finally our paper only addressed the case of lossless recovery for discrete sources. Extensions to lossy reconstruction, analogous to the case of Gaussian sources, will require characterization of the worst-case erasure sequence for a general source model, which appears challenging. This may be further generalized by considering partial recovery with a higher distortion during the recovery period. Such extensions will undoubtedly lead to a finer understanding of tradeoffs between compression rate and error propagation  in video transmission systems. 

\appendices
\section{Proof of Corollary~\ref{corol:genUB}}
\label{app:Cor1}
{
We want to show the following equality. 
\begin{align}
R^{+}(B,W) &= H(\rvs_{1}|\rvs_{0}) + \frac{1}{W+1}I(\rvs_{B};\rvs_{B+1} |\rvs_{0})\notag\\
&=\frac{1}{W+1}H(\rvs_{B+1}, \rvs_{B+2},\ldots, \rvs_{B+W+1}|\rvs_{0}) \label{eq:Cor1_h1}
\end{align}
}
According to the chain rule of entropies, the term in \eqref{eq:Cor1_h1} can be written as 
{\allowdisplaybreaks{\begin{align}
&H(\rvs_{B+1}, \rvs_{B+2},\ldots, \rvs_{B+W+1}|\rvs_{0})\\
&=H(\rvs_{B+1}|\rvs_{0})+ \sum_{k=1}^{W}H(\rvs_{B+k+1}|\rvs_{0}, \rvs_{B+1},\ldots,\rvs_{B+k})\notag\\
&=H(\rvs_{B+1}|\rvs_{0})+ W H(\rvs_{1}|\rvs_{0})\label{eq:cr1}\\
&=H(\rvs_{B+1}|\rvs_{0})-H(\rvs_{B+1}|\rvs_{B}, \rvs_{0})+H(\rvs_{B+1}|\rvs_{B}, \rvs_{0}) \notag\\ &\qquad+ W H(\rvs_{1}|\rvs_{0})\label{eq:cr2}\\
&=H(\rvs_{B+1}|\rvs_{0})-H(\rvs_{B+1}|\rvs_{B}, \rvs_{0})+H(\rvs_{B+1}|\rvs_{B}) \notag\\ &\qquad + W H(\rvs_{1}|\rvs_{0})\label{eq:cr3}\\
&=I(\rvs_{B+1};\rvs_{B}|\rvs_{0})+(W+1)H(\rvs_{1}|\rvs_{0})\label{eq:app-cor}\\
&=(W+1)R^{+}(B,W)
\end{align}}} where \eqref{eq:cr1} follows from the Markov property
\begin{align}
\left(\rvs_{0}, \rvs_{B+1}, \ldots,\rvs_{B+k-1}\right)\rightarrow \rvs_{B+k} \rightarrow  \rvs_{B+k+1} \label{eq:Mar}
\end{align} for any $k$ and from the stationarity of the sources which for each $k$ implies that
\begin{align}
H(\rvs_{B+k+1}|\rvs_{B+k})=H(\rvs_{1}|\rvs_{0}).
\end{align} Note that in \eqref{eq:cr2} we add and subtract the same term and \eqref{eq:cr3} also follows from the Markov property of \eqref{eq:Mar} for $k=0$.

%
%
\section{Proof of Lemma~\ref{lem:1}}
\label{ApA}
Define $q_{k} \triangleq 2^{\frac{2}{n}h(\rvs^n_{k}|\rvf^{k}, \rvs^n_{-1})}$. We need to show that 
\begin{align}
q_k &\ge \frac{2\pi e (1-\rho^2)}{2^{2R}-\rho^2}\left(1-\left(\frac{\rho^2}{2^{2R}}\right)^k\right) 
\end{align}
Consider the following entropy term.
\begin{align}
&h(\rvs^n_{k}|[\rvf]_{0}^{k}, \rvs^n_{-1}) = h(\rvs^n_{k}|[\rvf]_{0}^{k-1}, \rvs^n_{-1})- I(\rvf_{k};\rvs_{k}^n|[\rvf]_{0}^{k-1}, \rvs^n_{-1})\notag\\
&= h(\rvs^n_{k}|[\rvf]_{0}^{k-1}, \rvs^n_{-1})- \notag\\ &\qquad H(\rvf_{k}|[\rvf]_{0}^{k-1}, \rvs^n_{-1}) +  H(\rvf_{k}|\rvs^n_{k}, [\rvf]_{0}^{k-1}, \rvs^n_{-1})\notag\\
& \ge h(\rvs^n_{k}|[\rvf]_{0}^{k-1}, \rvs^n_{-1})- H(\rvf_{k})\label{eq:4.5}\\
& \ge \frac{n}{2}\log\left( \rho^{2}2^{\frac{2}{n}h(\rvs^n_{k-1}|[\rvf]_{0}^{k-1}, \rvs^n_{-1})}+ 2\pi e (1-\rho^{2})\right) - nR \label{eq:5}
\end{align}
where \eqref{eq:4.5} follows from the fact that conditioning reduces entropy and \eqref{eq:5} follows from the Entropy Power Inequality similar to \eqref{eq:31}. Thus
\begin{align}
q_k \ge \frac{\rho^2}{2^{2R}} q_{k-1} + \frac{2\pi e (1-\rho^2)}{2^{2R}}. \label{eq:5.1}
\end{align} 
By repeating the iteration in \eqref{eq:5.1}, we have
\begin{align}
q_k &\ge (\frac{\rho^{2}}{2^{2R}})^k q_{0} + \frac{2\pi e (1-\rho^2)}{2^{2R}} \sum_{l=0}^{k-1} (\frac{\rho^{2}}{2^{2R}})^{l}\label{eq:5.2} \\
&\ge \frac{2\pi e (1-\rho^2)}{2^{2R}-\rho^2}\left(1-\left(\frac{\rho^2}{2^{2R}}\right)^k\right) \label{eq:5.3}, 
\end{align} where \eqref{eq:5.3} follows from the fact $0 <\frac{\rho^2}{2^{2R}} < 1$ for any $\rho \in (0,1)$ and $R>0$. This completes the proof.

%
%
%
%
%
%
%

\section{Proof of~\eqref{eq:hStep2App} and~\eqref{eq:hStep2DApp}}
\label{App:hStep2}

We need to show~\eqref{eq:hStep2} and~\eqref{eq:hStep2D}, i.e. we need to establish the following two inequities for each $k \in [1:t-B']$
\begin{multline}
h( \rvu_{t} | [\rvu]_{0}^{t-B'-k-1}, [\rvu]_{t-k}^{t-1}, \rvs_{-1})\le \\ h( \rvu_{t} | [\rvu]_{0}^{t-B'-k}, [\rvu]_{t-k+1}^{t-1}, \rvs_{-1})\label{eq:hStep2App}\end{multline}\begin{multline}
h( \rvs_{t} | [\rvu]_{0}^{t-B'-k-1}, [\rvu]_{t-k}^{t}, \rvs_{-1})\le \\ h( \rvs_{t} | [\rvu]_{0}^{t-B'-k}, [\rvu]_{t-k+1}^{t}, \rvs_{-1})\label{eq:hStep2DApp}.
\end{multline}

We first establish the following Lemmas.

\begin{lemma}
\label{lem:AchGM}
Consider  random variables $\{\rvX_{0}, \rvX_{1}, \rvX_{2}, \rvY_{1} , \rvY_{2}\}$ that are jointly Gaussian, $\rvX_{k} \sim \cN(0,1), k\in\{0,1,2\}$,  $\rvX_0 \rightarrow \rvX_1 \rightarrow \rvX_2$ and that for $j\in\{1,2\}$ we have: 
\begin{align}
\rvX_{j} &\!= \rho_{j} \rvX_{j-1} + \rvN_{j}, \\
\rvY_{j} &\!= \rvX_{j} + \rvZ_{j}.
\end{align}
 Assume that $\rvZ_{j} \sim \cN(0,\sigma^2_{z})$ are independent of all random variables and likewise  $\rvN_{j} \sim \cN(0,1-\rho^{2}_{j})$ for $j\in\{1,2\}$ are also independent of all random variables.  The structure of correlation is sketched in Fig.~\ref{fig:five}. Then we have that:
\begin{align}
&\sigma^{2}_{\rvX_2}(\rvX_0,\rvY_{2}) \le \sigma^{2}_{\rvX_2}(\rvX_0,\rvY_{1}) \label{eqvar}
\end{align} where $\sigma^{2}_{\rvX_2}(\rvX_0,\rvY_{j})$ denotes the minimum mean square error of estimating $\rvX_2$ from $\{\rvX_0,\rvY_{j}\}$. 
\end{lemma}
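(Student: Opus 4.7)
Since every variable in sight is jointly Gaussian, the MMSE coincides with the conditional variance, so the inequality to establish is $\mathrm{Var}(\rvX_2 \mid \rvX_0, \rvY_2) \le \mathrm{Var}(\rvX_2 \mid \rvX_0, \rvY_1)$. The plan is to remove the conditioning on $\rvX_0$ by an orthogonal decomposition, apply the one-dimensional Gaussian MMSE formula to both sides, and then reduce the claim to an elementary algebraic inequality in $\rho_1,\rho_2,\sigma_z^2$.

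First, using the Markov chain I would write $\rvX_2 = \rho_1\rho_2 \rvX_0 + \widetilde{\rvX}_2$ with $\widetilde{\rvX}_2 = \rho_2 \rvN_1 + \rvN_2$ independent of $\rvX_0$ and of variance $a := 1 - \rho_1^2\rho_2^2$. Similarly I would set $\widetilde{\rvY}_1 := \rvY_1 - \rho_1 \rvX_0 = \rvN_1 + \rvZ_1$ (variance $b + \sigma_z^2$, where $b := 1-\rho_1^2$) and $\widetilde{\rvY}_2 := \rvY_2 - \rho_1\rho_2 \rvX_0 = \widetilde{\rvX}_2 + \rvZ_2$ (variance $a + \sigma_z^2$). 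Since each $\widetilde{\rvY}_j$ is independent of $\rvX_0$, conditioning on $\rvX_0$ reduces the claim to $\mathrm{Var}(\widetilde{\rvX}_2 \mid \widetilde{\rvY}_2) \le \mathrm{Var}(\widetilde{\rvX}_2 \mid \widetilde{\rvY}_1)$.

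Second, I would invoke the scalar Gaussian formula $\mathrm{Var}(X\mid Y) = \mathrm{Var}(X) - \mathrm{Cov}(X,Y)^2/\mathrm{Var}(Y)$. A direct computation gives $\mathrm{Cov}(\widetilde{\rvX}_2,\widetilde{\rvY}_2) = a$ and $\mathrm{Cov}(\widetilde{\rvX}_2,\widetilde{\rvY}_1) = \rho_2 b$. Substituting, the required inequality collapses to
\[
\frac{\rho_2^{\,2} b^2}{b + \sigma_z^2} \;\le\; \frac{a^2}{a + \sigma_z^2}.
\]

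Third, I would clear denominators to obtain the equivalent statement $ab(a - \rho_2^{\,2} b) + \sigma_z^2(a^2 - \rho_2^{\,2} b^2) \ge 0$, and verify each summand is nonnegative: the identity $a - \rho_2^{\,2} b = (1 - \rho_1^2\rho_2^2) - \rho_2^{\,2}(1-\rho_1^2) = 1 - \rho_2^{\,2} \ge 0$ handles the first term, while $a \ge b \ge 0$ (because $\rho_2^{\,2}\le 1$) together with $\rho_2^{\,2} \le 1$ gives $a^2 \ge b^2 \ge \rho_2^{\,2} b^2$, handling the second. I do not expect a substantive obstacle: the only mild care needed is to keep everything in terms of squared correlation coefficients so the argument works for $\rho_j \in [-1,1]$ of either sign.
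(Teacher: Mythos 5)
Your proof is correct, and every computation checks out: $\widetilde{\rvX}_2=\rho_2\rvN_1+\rvN_2$ has variance $a=1-\rho_1^2\rho_2^2$, the covariances $a$ and $\rho_2 b$ are right, and the final inequality $ab(a-\rho_2^2b)+\sigma_z^2(a^2-\rho_2^2b^2)\ge 0$ follows as you say from $a-\rho_2^2 b=1-\rho_2^2\ge 0$ and $a\ge b\ge 0$. The route is genuinely different from the paper's: the paper never orthogonalizes against $\rvX_0$; it computes both MMSEs directly from the $2\times 2$ covariance-matrix-inversion formula, obtains two explicit rational expressions in $\rho_1^2,\rho_2^2,\sigma_z^2$, and then massages the comparison through a chain of equivalent inequalities until it reaches one that can be verified by inspecting numerators and denominators separately. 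Your innovations-style decomposition buys a reduction to a one-dimensional Gaussian estimation problem (no matrix inverse at all) and a final inequality that is a sum of two manifestly nonnegative terms, which is arguably more transparent; the paper's version buys explicit closed forms for both error variances, which it does not, however, reuse elsewhere. One shared caveat: both arguments implicitly assume the relevant denominators are nonzero (the paper restricts to $\rho_i^2\in(0,1)$ at its last step; in your version the degenerate case $b+\sigma_z^2=0$, i.e. $\rho_1^2=1$ and $\sigma_z^2=0$, would need a one-line separate remark, though the inequality is trivial there).
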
 
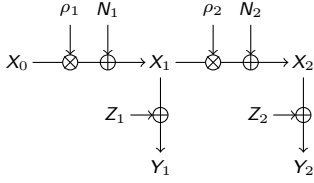
\begin{figure}
\begin{center}
\vspace{1em}
\begin{tikzpicture}
\foreach \t in{0,1.9}{
\draw[-] (0.2+\t,0) -- (.6+\t,0);
\draw[->] (0.8+\t,0) -- (1.7+\t,0);
\draw (0.7+\t,0) circle (.1);
\draw (1.2+\t,0) circle (.1);
\draw[-] (1.2+\t,-.1)--(1.2+\t,.1);
\draw[-] (.7 - .08+\t, -.08) -- (0.7+.08+\t,.08);
\draw[-] (.7 - .08+\t, +.08) -- (0.7+.08+\t,-.08);

\draw[->] (0.7+\t,0.5) -- (.7+\t,0.1);
\draw[->] (1.2+\t,0.5) -- (1.2+\t,0.1);

\draw[->] (1.9+\t,-.20) -- (1.9+\t,-1.2);
\draw (1.9+\t,-.7) circle (.1);
\draw[-] (1.9+\t-.1,-.7) -- (1.9+\t+.1,-.7) ;

\draw[->] (1.5+\t,-0.7) -- (1.8+\t,-0.7);
}
%
\draw [white] (0,0) -- (0,0) node {$\color{black}\scriptstyle \rvX_0$};
\draw [white] (1.9,0) -- (1.9,0) node {$\color{black}\scriptstyle \rvX_1$};
\draw [white] (3.8,0) -- (3.8,0) node {$\color{black}\scriptstyle \rvX_2$};
\draw [white] (1.9,-1.4) -- (1.9,-1.4) node {$\color{black}\scriptstyle \rvY_1$};
\draw [white] (3.8,-1.4) -- (3.8,-1.4) node {$\color{black}\scriptstyle \rvY_2$};

\draw [white] (1.3,-.7) -- (1.3,-.7) node {$\color{black}\scriptstyle \rvZ_1$};
\draw [white] (3.2,-.7) -- (3.2,-.7)  node {$\color{black}\scriptstyle \rvZ_2$};

\draw [white] (1.2,.7) -- (1.2,.7) node {$\color{black}\scriptstyle \rvN_1$};
\draw [white] (3.1,.7) -- (3.1,.7)  node {$\color{black}\scriptstyle \rvN_2$};

\draw [white] (.7,.7) -- (.7,.7) node {$\color{black}\scriptstyle \rho_1$};
\draw [white] (2.6,.7) -- (2.6,.7)  node {$\color{black}\scriptstyle \rho_2$};

\end{tikzpicture}
\caption{Relationship of the Variables for Lemma \ref{lem:AchGM}.}
\label{fig:five}
\end{center}
\end{figure}

\begin{proof}
By applying the standard relation for the MMSE estimation error we have (see e.g. \cite{willskyWornel03})
\begin{align}
&\sigma^{2}_{\rvX_2}(\rvX_0,\rvY_{1}) \notag\\
&= E[X_2^2] -\!  \notag\\ 
& \begin{pmatrix}  \! E[X_2 Y_1]\! & \!E[X_2 X_0]\! \end{pmatrix} \begin{pmatrix}
E[Y_1^2]  & E[X_0 Y_1] \\ E[X_0 Y_1] & E[X_0^2]
\end{pmatrix}^{-1} 
\begin{pmatrix}   E[X_2 Y_1] \\ E[X_2 X_0] \end{pmatrix} \\
&=1 - \rho_2^2 \begin{pmatrix}
 1 & \rho_{1} \end{pmatrix}\begin{pmatrix}
1+\sigma^2_{z} & \rho_{1} \\
 \rho_{1} &  1 \end{pmatrix}^{-1}\begin{pmatrix}
1\\
\rho_{1} \end{pmatrix}\\
&= 1- \frac{\rho_{1}^2\rho_{2}^2\sigma^{2}_{z}-\rho_{1}^2\rho_{2}^2+\rho_{2}^2}{1+\sigma^{2}_{z}-\rho^{2}_{1}}\label{eqlem11}
\end{align} 
where we use the fact that $E[X_0^2]=1$, $E[Y_1^2]=1 + \sigma_z^2$,
$E[X_0 Y_1] = \rho_1$, $E[X_2 X_0] = \rho_0 \rho_1$ and $E[X_2 Y_1]= \rho_2$. In a similar fashion it can be shown that:
\begin{align}
\sigma^{2}_{\rvX_2}(\rvX_0,\rvY_{2}) &= 1 -  \begin{pmatrix}
 1 & \rho_{1} \rho_{2} \end{pmatrix}\begin{pmatrix}
1+\sigma^2_{z} & \rho_{1}\rho_{2} \\
 \rho_{1}\rho_{2} &  1 \end{pmatrix}^{-1}\begin{pmatrix}
1\\
\rho_{1}\rho_{2} \end{pmatrix}\\
&= 1- \frac{\rho_{1}^2\rho_{2}^2\sigma^{2}_{z}-\rho_{1}^{2}\rho_{2}^2+1}{1+\sigma^{2}_{z}-\rho^{2}_{1}\rho^{2}_{2}} \label{eqlem22}
\end{align} 
To establish~\eqref{eqvar} we only need to show that, 
\begin{align}
\frac{\rho_{1}^2\rho_{2}^2\sigma^{2}_{z}-\rho_{1}^{2}\rho_{2}^2+1}{1+\sigma^{2}_{z}-\rho^{2}_{1}\rho^{2}_{2}}  \ge  \frac{\rho_{1}^2\rho_{2}^2\sigma^{2}_{z}-\rho_{1}^2\rho_{2}^2+\rho_{2}^2}{1+\sigma^{2}_{z}-\rho^{2}_{1}} \label{eqlem33}
\end{align} It is equivalent to showing 
\begin{align}
\frac{1+\sigma^{2}_{z}-\rho^{2}_{1}}{1+\sigma^{2}_{z}-\rho^{2}_{1}\rho^{2}_{2}} \ge  \frac{\rho_{1}^2\rho_{2}^2\sigma^{2}_{z}-\rho_{1}^2\rho_{2}^2+\rho_{2}^2}{\rho_{1}^2\rho_{2}^2\sigma^{2}_{z}-\rho_{1}^{2}\rho_{2}^2+1}   \label{eqlem44}
\end{align} or equivalently
\begin{align}
1- \frac{\rho^2_{1}(1-\rho^2_{2})}{1+\sigma^{2}_{z}-\rho^{2}_{1}\rho^{2}_{2}} \ge 
1- \frac{1-\rho^2_{2}}{\rho_{1}^2\rho_{2}^2\sigma^{2}_{z}-\rho_{1}^{2}\rho_{2}^2+1}  
\end{align} which is equivalent to showing
\begin{align}
\frac{\rho^2_{1}}{1+\sigma^{2}_{z}-\rho^{2}_{1}\rho^{2}_{2}} \le
\frac{1}{\rho_{1}^2\rho_{2}^2\sigma^{2}_{z}-\rho_{1}^{2}\rho_{2}^2+1}. \label{eq:immediate-verify}
\end{align}
However note that~\eqref{eq:immediate-verify} can be immediately verified since the left hand side
has the numerator smaller than the right hand side and the denominator greater than the right hand side whenever
$\rho_i^2 \in (0,1)$.
This completes the proof.
\end{proof}

\begin{lemma}
\label{lem:AKSuggest}
Consider the Gauss-Markov source model~\eqref{eq:GM-Def} and the test channel in Prop.~\ref{prop:GM-ME}.
For a fixed $t$, $k\in[1, t]$ and a set $\Omega \subseteq [t-k,t]$, consider two sets of random variables $\cW_1$ and $\cW_{2}$ each  jointly Gaussian with $\rvs_{t-k}$  such that the following Markov property holds:
\begin{align}
&\cW_1 \rightarrow \rvs_{t-k} \rightarrow \{\rvs_{t}, \rvbu_{\Omega}\}\\
&\cW_2 \rightarrow \rvs_{t-k} \rightarrow \{\rvs_{t}, \rvbu_{\Omega}\}
\end{align} If the MMSE error in $\rvs_{t-k}$ satisfies , $\sigma^2_{t-k}(\cW_1)\le \sigma^2_{t-k}(\cW_2)$ then we have
\begin{align}
&h(\rvs_{t} | \cW_1, \rvbu_{\Omega}) \le h(\rvs_{t} | \cW_2, \rvbu_{\Omega}), \qquad \forall \Omega \subseteq [t-k,t] \label{lem:AKs1}\\
&h(\rvu_{t} | \cW_1, \rvbu_{\Omega}) \le h(\rvu_{t} | \cW_2, \rvbu_{\Omega}), \qquad \forall \Omega \subseteq [t-k,t-1] \label{lem:AKs2}.
\end{align}
\end{lemma}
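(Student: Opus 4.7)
The plan is to reduce the problem, via joint Gaussianity and the Markov hypothesis, to a physical-degradation comparison between two scalar AWGN observations of $\rvs_{t-k}$. First I will normalize each $\cW_i$ to a canonical observation $\hat s_i=\rvs_{t-k}+n_i$, where $n_i\sim\cN(0,\tau_i^2)$ is independent of everything else and $\tau_i^2$ is chosen so that the posterior variance of $\rvs_{t-k}$ given $\hat s_i$ equals $\sigma^2_{t-k}(\cW_i)$, i.e.\ $\tau_i^2=\sigma^2_{t-k}(\cW_i)/(1-\sigma^2_{t-k}(\cW_i))$. Because $\cW_i\to\rvs_{t-k}\to\{\rvs_t,\rvbu_\Omega\}$ is Markov and all variables are jointly Gaussian, the posterior of $\rvs_{t-k}$ given $\cW_i$ is $\cN(E[\rvs_{t-k}\mid\cW_i],\sigma^2_{t-k}(\cW_i))$, and $E[\rvs_{t-k}\mid\cW_i]$ is a sufficient statistic for all downstream variables. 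A short second-moment computation then shows that an appropriate scalar multiple of $\hat s_i$ has the same joint Gaussian law with $(\rvs_{t-k},\rvs_t,\rvbu_\Omega)$ as $E[\rvs_{t-k}\mid\cW_i]$. Since rescaling a conditioning variable leaves conditional differential entropy invariant, this gives
\begin{align*}
h(\rvs_t\mid\cW_i,\rvbu_\Omega)=h(\rvs_t\mid\hat s_i,\rvbu_\Omega),
\end{align*}
so the conditional entropy on the left depends on $\cW_i$ only through $\sigma^2_{t-k}(\cW_i)$.

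Second, I will invoke physical degradation. The hypothesis $\sigma^2_{t-k}(\cW_1)\le\sigma^2_{t-k}(\cW_2)$ becomes $\tau_1^2\le\tau_2^2$, so I can couple the two canonical observations by writing $\hat s_2=\hat s_1+n'$ with $n'\sim\cN(0,\tau_2^2-\tau_1^2)$ fresh and independent of everything. Since $n'$ is independent of $(\rvs_t,\rvbu_\Omega)$, appending $\hat s_2$ to the conditioning $(\hat s_1,\rvbu_\Omega)$ contributes no additional information about $\rvs_t$, so $h(\rvs_t\mid\hat s_1,\hat s_2,\rvbu_\Omega)=h(\rvs_t\mid\hat s_1,\rvbu_\Omega)$. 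Combining this identity with the bound $h(\rvs_t\mid\hat s_1,\hat s_2,\rvbu_\Omega)\le h(\rvs_t\mid\hat s_2,\rvbu_\Omega)$ from ``conditioning reduces entropy'' yields~\eqref{lem:AKs1}. The argument for~\eqref{lem:AKs2} is identical after substituting $\rvu_t$ for $\rvs_t$; the required Markov chain $\cW_i\to\rvs_{t-k}\to\{\rvu_t,\rvbu_\Omega\}$ follows from the assumed one because the test-channel noise producing $\rvu_t$ from $\rvs_t$ is independent of $\cW_i$ (and of $\rvbu_\Omega$ when $\Omega\subseteq[t-k,t-1]$).

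The step I expect to require the most care is the canonical-observation reduction in the first paragraph: one must verify, using only joint Gaussianity and the Markov hypothesis, that the covariance of $E[\rvs_{t-k}\mid\cW_i]$ with $(\rvs_t,\rvbu_\Omega)$ is determined by $\sigma^2_{t-k}(\cW_i)$ up to an overall scalar that drops out of conditional entropy. Once this normalization is in hand, the physical-degradation step is routine and is in the same spirit as the MMSE comparison of Lemma~\ref{lem:AchGM}.
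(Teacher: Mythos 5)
Your proof is correct and follows essentially the same route as the paper's: both replace $\cW_i$ by the scalar sufficient statistic $E[\rvs_{t-k}\mid\cW_i]$ (equivalently your canonical observation $\rvs_{t-k}+n_i$ with $\tau_i^2=\sigma^2_{t-k}(\cW_i)/(1-\sigma^2_{t-k}(\cW_i))$, which is exactly the paper's ratio $E_i/\alpha_i^2$) and then compare the two AWGN observations of $\rvs_{t-k}$. Your explicit physical-degradation coupling just makes rigorous the step the paper disposes of with ``clearly we obtain a better estimate,'' and handling~\eqref{lem:AKs2} by direct substitution rather than via the variance-addition identity~\eqref{eq:Ak6} is an equally valid minor variation.
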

\begin{proof}
Since the underlying random variables are jointly Gaussian, we can express the MMSE estimates of $\rvs_{t-k}$ from $\cW_j$, $j\in\{1,2\}$ as follows (see e.g. \cite{willskyWornel03})
\begin{align}
&\hat{\rvs}_{t-k}(\cW_1) = \alpha_1 \rvs_{t-k} +\rve_1 \label{eq:AKlem1}\\
&\hat{\rvs}_{t-k}(\cW_2) = \alpha_2 \rvs_{t-k} +\rve_2 \label{eq:AKlem2}
\end{align} where $\rve_1 \sim \cN(0, E_1)$ and $\rve_2 \sim \cN(0, E_2)$ are  Gaussian  random variables both independent of $\rvs_{t-k}$. Furthermore the constants in~\eqref{eq:AKlem1} and~\eqref{eq:AKlem2} are given by 
\begin{align}
&\al_j = 1-\sigma^2_{t-k}(\cW_j)\\
& E_j = \sigma^2_{t-k}(\cW_j)(1-\sigma^2_{t-k}(\cW_j))
\end{align} for $j=1,2$. To establish~\eqref{lem:AKs1}, we have
\begin{align}
h(\rvs_{t} | \cW_1, \rvbu_{\Omega}) &= h(\rvs_{t} | \hat{\rvs}_{t-k}(\cW_1), \rvbu_{\Omega})\label{eq:Ak1}\\
 & = h(\rvs_{t} | \alpha_1 \rvs_{t-k} +\rve_1, \rvbu_{\Omega}) \label{eq:Ak2}\\
 & \le h(\rvs_{t} | \alpha_2 \rvs_{t-k} +\rve_2, \rvbu_{\Omega}) \label{eq:Ak3}\\
 &= h(\rvs_{t} | \hat{\rvs}_{t-k}(\cW_2), \rvbu_{\Omega})\label{eq:Ak4}\\
 &= h(\rvs_{t} | \cW_2, \rvbu_{\Omega})\label{eq:Ak5}
\end{align} where \eqref{eq:Ak1} and \eqref{eq:Ak5} follows from the following Markov property. 
\begin{align}
&\cW_1 \rightarrow \hat{\rvs}_{t-k}(\cW_{1}) \rightarrow \{\rvs_{t}, \rvu_{\Omega}\}\\
&\cW_2 \rightarrow \hat{\rvs}_{t-k}(\cW_{2}) \rightarrow \{\rvs_{t}, \rvu_{\Omega}\}
\end{align} \eqref{eq:Ak2} and \eqref{eq:Ak4} follows from \eqref{eq:AKlem1} and~\eqref{eq:AKlem2} and \eqref{eq:Ak3} follows from the fact that  $\sigma^2_{t-k}(\cW_1)\le \sigma^2_{t-k}(\cW_2)$ implies that 
\begin{align}
\frac{E_1}{\alpha^2_{1}} \le \frac{E_2}{\alpha^2_{2}}
\end{align} Thus the only difference between~\eqref{eq:Ak2} and~\eqref{eq:Ak3} is that the variance of the independent noise component in the first term is smaller in the former. Clearly we obtain a better estimate of $\rvs_t$ in~\eqref{eq:Ak2}, which justifies the inequality in~\eqref{eq:Ak3}.

Eq.~\eqref{lem:AKs2} can be established as an immediate consequence of~\eqref{lem:AKs1}.
Since the noise $\rvz_{t}$ in the test channel is Gaussian and independent of all other random variables, we have
\begin{align}
\textrm{Var}(\rvu_{t}|\cW_j, \rvbu_{\Omega} ) = \textrm{Var}(\rvs_{t}|\cW_j, \rvbu_{\Omega} ) + \sigma^2_{z}
\end{align} where the notation $\textrm{Var}(\rva|\mathcal{W})$ indicates the noise variance of estimating $\rva$ from $\mathcal{W}$. As a result, 
\begin{align}
h(\rvu_{t} | \cW_j, \rvbu_{\Omega} ) = \frac{1}{2} \log\left(2^{2h(\rvs_{t}|\cW_j, \rvbu_{\Omega} )}+ 2 \pi e \sigma^2_{z}\right). \label{eq:Ak6}
\end{align} Thus~\eqref{lem:AKs1} immediately imples~\eqref{lem:AKs2}. 
\end{proof} 

We now establish~\eqref{eq:hStep2App} and subsequently establish~\eqref{eq:hStep2DApp} in a similar fashion. Consider the following two steps. 

1) First by applying Lemma~\ref{lem:AchGM} we show.
\begin{multline}\sigma^{2}_{{t-k}}([\rvu]_{0}^{t-B'-k-1},\rvu_{t-k}, \rvs_{-1}) \le \\ \sigma^{2}_{{t-k}}([\rvu]_{0}^{t-B'-k-1},\rvu_{t-B'-k}, \rvs_{-1}),
 \label{lesnoise}
\end{multline}
i.e., knowing $\{[\rvu]_{0}^{t-B'-k-1},\rvu_{t-k}, \rvs_{-1} \}$ rather than 
$\{[\rvu]_{0}^{t-B'-k-1},\rvu_{t-B'-k}, \rvs_{-1}\}$,  improves the estimate of the source $\rvs_{t-k}$.
Let $\hat{\rvs}_{t-B'-k}([\rvu]_{0}^{t-B'-k-1}, \rvs_{-1})$ be the MMSE estimator of ${\rvs}_{t-B'-k}$ given $([\rvu]_{0}^{t-B'-k-1}, \rvs_{-1})$. 

Note that $\hat{\rvs}_{t-B'-k}([\rvu]_{0}^{t-B'-k-1}, \rvs_{-1})$ is a sufficient statistic of $\rvs_{t-B'-k}$ given $\{[\rvu]_{0}^{t-B'-k-1}, \rvs_{-1}\}$ and thus we have that:
\begin{multline}
\{[\rvu]_{0}^{t-B'-k-1}, \rvs_{-1}\} \rightarrow \\ \hat{\rvs}_{t-B'-k}([\rvu]_{0}^{t-B'-k-1}, \rvs_{-1})  \rightarrow {\rvs}_{t-B'-k}  \rightarrow {\rvs}_{t-k}. \label{useM3}
\end{multline}  
Therefore, by application of Lemma~\ref{lem:AchGM} for $\rvX_0 = \hat{\rvs}_{t-B'-k}([\rvu]_{0}^{t-B'-k-1}, \rvs_{-1})$, $\rvX_1 = \rvs_{t-B'-k}$, $\rvY_1 = \rvu_{t-B'-k}$, $\rvX_2 = \rvs_{t-k}$ and $\rvY_2 = \rvu_{t-k}$, we have 
\begin{align}
&\sigma^{2}_{{t-k}}([\rvu]_{0}^{t-B'-k-1},\rvu_{t-k}, \rvs_{-1})\notag\\
& = \sigma^{2}_{{t-k}}(\hat{\rvs}_{t-B'-k}([\rvu]_{0}^{t-B'-k-1}, \rvs_{-1}), \rvu_{t-k} ) \label{useM1}\\
& \le  \sigma^{2}_{{t-k}}(\hat{\rvs}_{t-B'-k}([\rvu]_{0}^{t-B'-k-1}, \rvs_{-1}), \rvu_{t-B'-k} )\label{useM2}\\
& = \sigma^{2}_{{t-k}}([\rvu]_{0}^{t-B'-k-1},\rvu_{t-B'-k}, \rvs_{-1}). \label{lesnoise2}
\end{align} where \eqref{useM1} and \eqref{lesnoise2} both follow from \eqref{useM3}. 
This completes the claim in~\eqref{lesnoise}.

2) In the second step, we apply Lemma~\ref{lem:AKSuggest} for 
\begin{align}
&\cW_{1}=  \{[\rvu]_{0}^{t-B'-k-1},\rvu_{t-k}, \rvs_{-1} \}\label{eq:sari1}\\
&\cW_{2}=  \{[\rvu]_{0}^{t-B'-k-1},\rvu_{t-B'-k}, \rvs_{-1} \}\label{eq:sari2}\\
& \Omega = [t-k+1,t-1]
\end{align} we have 
\begin{multline}
h\left(\rvu_{t}~|~[\rvu]_{0}^{t-B'-k-1},[\rvu]_{t-k}^{t-1},\rvs_{-1}\right) \le \\   h\left(\rvu_{t}~|~[\rvu]_{0}^{t-B'-k},[\rvu]_{t-k+1}^{t-1},\rvs_{-1}\right) 
\end{multline} and again by applying Lemma~\ref{lem:AKSuggest} for $\cW_{1}$ and $\cW_{2}$ in \eqref{eq:sari1} and \eqref{eq:sari2} and $\Omega = [t-k+1,t]$, we have 
\begin{multline}
h\left(\rvs_{t}~|~[\rvu]_{0}^{t-B'-k-1},[\rvu]_{t-k}^{t},\rvs_{-1}\right) \le \\ h\left(\rvs_{t}~|~[\rvu]_{0}^{t-B'-k},[\rvu]_{t-k+1}^{t},\rvs_{-1}\right)\label{aftercritD} 
\end{multline} 

This establishes~\eqref{eq:hStep2App} and \eqref{eq:hStep2DApp} and equivalently \eqref{eq:hStep2} and \eqref{eq:hStep2D}.

\section{Proof of Lemma~\ref{lem:help2}}
\label{App:help2}

{For reader's convenience, we first repeat the statement of the Lemma. Consider the two sets $A, B\subseteq \mathbb{N}$ each of size $r$ as $A=\{a_{1},a_{2},\cdots,a_{r}\}$, $B=\{b_{1},b_{2},\cdots,b_{r}\}$ such that $1\le a_1<a_2<\cdots<a_r$ and $1\le b_1<b_2<\cdots<b_r$ and for any $i \in \{1,\ldots,r\}$, $a_{i} \le b_{i}$. Then the test channel~\eqref{eq:testchannel} satisfies the following: 
\begin{align}
&h(\rvs_{t}|\rvbu_{A},\rvs_{-1}) \ge h(\rvs_{t}|\rvbu_{B},\rvs_{-1}), \quad \forall t \ge b_r \label{eq:lem2-help2App}\\
&h(\rvu_{t}|\rvbu_{A},\rvs_{-1}) \ge h(\rvu_{t}|\rvbu_{B},\rvs_{-1}), \quad \forall t > b_r \label{eq:lem1-help2App}. 
\end{align}
}  
 
We first prove \eqref{eq:lem2-help2App} by induction as follows. The proof of~\eqref{eq:lem1-help2App}  follows directly from~\eqref{eq:lem2-help2App} as discussed at the end of this section. 


$\bullet$ First we show that \eqref{eq:lem2-help2App} is true for $r=1$,  i.e. given $0\le a_1\le b_1$ and for all $t\ge b_1$ we need to show 
\begin{align}
&h(\rvs_{t}~|~ \rvu_{a_{1}}, \rvs_{-1}) \ge h(\rvs_{t}~|~ \rvu_{b_{1}}, \rvs_{-1}).\label{ni3sec}
\end{align} We apply Lemma~\ref{lem:AchGM} in Appendix~\ref{App:hStep2} for $\{\rvX_{0}, \rvX_{1}, \rvX_{2}, \rvY_1, \rvY_2\}= \{\rvs_{-1}, \rvs_{a_1}, \rvs_{b_1}, \rvu_{a_1}, \rvu_{b_1}\}$ which results in 
\begin{align}
h(\rvs_{b_1}|\rvu_{a_1}, \rvs_{-1}) \ge h(\rvs_{b_1}|\rvu_{b_1}, \rvs_{-1}) \label{eq:fori1}
\end{align} Thus~\eqref{ni3sec} holds for $t=b_1$. For any $t > b_1$ we can always express $\rvs_{t} = \rho^{t-b_1}\rvs_{b_1} +\tilde{n}$ where $\tilde{n} \sim \cN(0,1- \rho^{2(t-b_1)})$ and also we can express $\rvs_{b_1} = \hat{\rvs}_{b_1} (\rvu_{j}, \rvs_{-1}) + \rvw_j$ for $j\in \{a_1 , b_1\}$ where $\rvw_j \sim \cN(0, \sigma^2_{b_1}(\rvu_{j}, \rvs_{-1}))$ is the MMSE estimation error. For $j\in\{a_1, b_1\}$, we have
\begin{align}
\rvs_{t} = \rho^{t-b_1} \hat{\rvs}_{b_1}(\rvu_{j}, \rvs_{-1}) + \rho^{t-b_1} \rvw_{j} + \tilde{n}.
\end{align} Then we have
\begin{align}
\sigma^2_t(\rvu_{a_1}, \rvs_{-1}) &= \rho^{2(t-b_1)} \sigma^2_{b_1}(\rvu_{a_1}, \rvs_{-1}) + 1- \rho^{2(t-b_1)}\\
&\ge \rho^{2(t-b_1)} \sigma^2_{b_1}(\rvu_{b_1}, \rvs_{-1}) + 1- \rho^{2(t-b_1)} \label{eq:Asu}\\
&= \sigma^2_t(\rvu_{b_1}, \rvs_{-1})\label{eq:Asu1}
\end{align}  where \eqref{eq:Asu} immediately follows from \eqref{eq:fori1}. Thus \eqref{eq:Asu1} establishes \eqref{ni3sec} and the proof of the base case is now complete.


$\bullet$ Now assume that \eqref{eq:lem2-help2App} is true for $r$, i.e. for the sets $A_r,B_r$ of size $r$ satisfying $a_{i}\le b_{i}$ for $i \in \{1,\cdots, r\}$ and any $t\ge b_{r}$, 
\begin{align}
&h(\rvs_{t}|\rvbu_{A_r}, \rvs_{-1}) \ge h(\rvs_{t}|\rvbu_{B_r}, \rvs_{-1}) \label{ni5sec}
\end{align} We  show that  the lemma is also true for the sets $A_{r+1} = \{A_{r}, a_{r+1}\}$ and $B_{r+1}=\{B_{r}, b_{r+1}\}$ where $a_{r} \le a_{r+1}$, $b_{r} \le b_{r+1}$ and $a_{r+1} \le b_{r+1}$.  We establish this in two steps.

1) We show that 
\begin{align}
&h(\rvs_{t}| \rvbu_{A_{r+1}}, \rvs_{-1}) \ge h(\rvs_{t}| \rvbu_{A_{r}}, \rvu_{b_{r+1}}, \rvs_{-1})\label{help2Step1sec}.
\end{align} By application of Lemma~\ref{lem:AchGM} for \begin{multline}\{\rvX_{0}, \rvX_{1}, \rvX_{2}, \rvY_1, \rvY_2\}= \\ \{\hat{\rvs}_{a_{r}}(\rvbu_{A_{r}}, \rvs_{-1}), \rvs_{a_{r+1}}, \rvs_{b_{r+1}}, \rvu_{a_{r+1}}, \rvu_{b_{r+1}}\}\end{multline} we have 
\begin{multline}
h(\rvs_{b_{r+1}}| \hat{\rvs}_{a_{r}}(\rvbu_{A_{r}}, \rvs_{-1}), \rvu_{a_{r+1}})  \\ \ge h(\rvs_{b_{r+1}}| \hat{\rvs}_{a_{r}}(\rvbu_{A_{r}}, \rvs_{-1}), \rvu_{b_{r+1}}) \label{eq:ju1}
\end{multline} 
Thus~\eqref{help2Step1sec} holds for $t=b_{r+1}$. For $t \ge b_{r+1}$ we can use the argument analogous to that leading to~\eqref{eq:Asu1}. We omit the details as they are completely analogous. This establishes \eqref{help2Step1sec}.

2) Next we show that 
\begin{align}
h(\rvs_{t}| \rvbu_{A_{r}}, \rvu_{b_{r+1}}, \rvs_{-1}) \ge \\ h(\rvs_{t}| \rvbu_{B_{r+1}}, \rvs_{-1}) \label{help2Step2}.
\end{align}

First note that based on the induction hypothesis in \eqref{ni5sec} for $t=b_{r+1}$ we have 
\begin{align}
&h(\rvs_{b_{r+1}}|\rvbu_{A_r}, \rvs_{-1}) \ge h(\rvs_{b_{r+1}}|\rvbu_{B_r}, \rvs_{-1}) 
\end{align} and equivalently 
\begin{align}
&\sigma^{2}_{b_{r+1}}(\rvbu_{A_r}, \rvs_{-1}) \ge \sigma^2_{b_{r+1}}(\rvbu_{B_r}, \rvs_{-1}) 
\end{align} Now by application of Lemma~\ref{lem:AKSuggest} for $k= t-b_{r}$ and 
\begin{align}
&\cW_{1}=  \{\rvbu_{B_r}, \rvs_{-1}\}\label{eq:sari11}\\
&\cW_{2}=  \{\rvbu_{A_r}, \rvs_{-1} \}\label{eq:sari22}\\
& \Omega = \{b_{r+1}\}
\end{align} and noting that $\cW_j \rightarrow \rvs_{b_r} \rightarrow (\rvs_{b_{r+1}}, \rvu_\Omega)$ for $j=1,2$
we have
\begin{align}
h(\rvs_{t} | \rvbu_{A_r},\rvu_{b_{r+1}},  \rvs_{-1}) \ge h(\rvs_{t} |\rvbu_{B_r}, \rvu_{b_{r+1}}, \rvs_{-1})
\end{align} which is equivalent to \eqref{help2Step2}.

Combining \eqref{help2Step1sec} and \eqref{help2Step2} we have $h(\rvs_{t}| \rvbu_{A_{r+1}}, \rvs_{-1}) \ge h(\rvs_{t}| \rvbu_{B_{r+1}}, \rvs_{-1})$ which shows that \eqref{eq:lem2-help2App} is also true for $r+1$. 

This completes the induction and the proof of \eqref{eq:lem2-help2App} for general $r$. 

Finally note that \eqref{eq:lem2-help2App} implies \eqref{eq:lem1-help2App}  as follows. 
\begin{align}
h(\rvu_{t}| \rvbu_{A_r}, \rvs_{-1}) &=  \frac{1}{2} \log \left(2^{2h(\rvs_{t}| \rvbu_{A_r}, \rvs_{-1})} +  2 \pi e \sigma^2_{z}   \right) \label{eq:foa1} \\
& \ge \frac{1}{2} \log \left(2^{2h(\rvs_{t}| \rvbu_{B_r}, \rvs_{-1})} +  2 \pi e \sigma^2_{z}   \right) \label{eq:foa2}\\
& = h(\rvu_{t}| \rvbu_{B_r}, \rvs_{-1}) 
\end{align} where \eqref{eq:foa1} follows from the fact that the noise in the test channel is independent. Also \eqref{eq:foa2} follows from \eqref{eq:lem2-help2App}. 
This completes the proof.

\section{Proof of Lemma~\ref{lem:GM-ME}}
\label{App:Lemma7}

We prove each part separately as follows.

1) For any feasible set  $\Omega_t$ with size $\theta$ we have
\begin{align}
\lambda_{t}(\Omega_t) &= I(\rvs_{t}; \rvu_{t} | \rvbu_{\Omega_t}, \rvs_{-1})\notag\\
&=  h(\rvu_{t} | \rvbu_{\Omega_t}, \rvs_{-1})- h( \rvu_{t} | \rvs_{t})\notag\\
& \le h(\rvu_{t} | \rvbu_{\Omega^{\star}_t(\theta)}, \rvs_{-1})- h( \rvu_{t} | \rvs_{t})\label{eq:lemp1}\\
& = I(\rvs_{t}; \rvu_{t} | \rvbu_{\Omega^{\star}_t(\theta)}, \rvs_{-1}) \notag\\
& = \lambda_{t}(\Omega^{\star}_t(\theta)) \end{align} 
where \eqref{eq:lemp1} follows from the application of Lemma~\ref{lem:help2} with $A = \Omega^\star_t(\theta)$ and $B=\Omega_t$, which by construction of $\Omega^\star_t(\theta)$ clearly satisfy the required condition. Also note that 
\begin{align}
\frac{1}{2}\log\left(2 \pi e \gamma_{t}(\Omega_t)\right) &=  h (\rvs_{t} | \rvu_{t}, \rvbu_{\Omega_t}, \rvs_{-1})\notag\\
& \le h (\rvs_{t} | \rvu_{t}, \rvbu_{\Omega^{\star}_{t}(\theta)}, \rvs_{-1})\label{eq:lemmApp}\\
& =  \frac{1}{2}\log\left(2 \pi e \gamma_{t}(\Omega^{\star}_t(\theta))\right) 
\end{align} where~\eqref{eq:lemmApp} follows from Lemma~\ref{lem:help2} for the sets $A=\{\Omega^{\star}_{t}(\theta), t\}$ and $B= \{\Omega_t, t\}$. Thus we have $ \gamma_{t}(\Omega_t) \le \gamma_{t}(\Omega^{\star}_t(\theta))$.

2) We next argue that both $\lambda_{t}(\Omega^\star_t(\theta))$ and $\gamma_t(\Omega_t^\star(\theta))$ attain their maximum values with the minimum possible $\theta$.
Recall from Part 1 that when the number of erasures $n_e=t-\theta$ is fixed, the worst case sequence must have all erasure positions as close to $t$ as possible. Thus if $n_e \le B$ the worst case sequence consists of a single burst spanning $[t-n_e, t-1]$. If $B < n_e \le 2B$, the worst case sequence must have two burst erasures spanning $ [t-n_e-L, t-B-L-1] \cup [t-B, t-1]$. More generally the worst case sequence will consist of a sequence of burst erasures each (except possibly the first one) of length $B$ separated by a guard interval of length $L$. Thus the non-erased indices associated with decreasing values of $\theta$ are nested, i.e. $\theta_1 \le \theta_2$ implies that $\Omega_t^\star(\theta_1) \subseteq \Omega_t^\star(\theta_2)$. 
Further note that adding more elements in the non-erased indices $\Omega_t^\star(\cdot)$ can only decrease both $\lambda_t(\cdot)$ and $\gamma_t(\cdot)$, i.e. $\Omega_t^\star(\theta_1) \subseteq \Omega_t^\star(\theta_2)$ implies that $\lambda_{t}(\Omega^\star_t(\theta_1)) \ge \lambda_{t}(\Omega^\star_t(\theta_2))$ and $\gamma_{t}(\Omega^\star_t(\theta_1)) \ge \gamma_{t}(\Omega^\star_t(\theta_2))$.
Thus the worst case $\Omega_t^\star(\theta)$ must constitute the minimum possible value of $\theta$. The formal proof, which is analogous to the second part of Lemma~\ref{lem:3Step} will be skipped.

3) This property follows from the fact that in steady state the effect of knowing $\rvs_{-1}$ vanishes.  In particular we show below that $\lambda_{t+1}(\Omega^{\star}_{t+1}) \ge \lambda_{t}(\Omega^{\star}_{t}) $ and
$\gamma_{t+1}(\Omega^{\star}_{t+1}) \ge \gamma_t(\Omega^\star_t)$.

{\allowdisplaybreaks{\begin{align}
&\lambda_{t+1}(\Omega^{\star}_{t+1}) \notag\\
&= I(\rvs_{t+1}; \rvu_{t+1} | \rvbu_{\Omega^{\star}_{t+1}}, \rvs_{-1})\notag\\
& =  h(\rvu_{t+1} | \rvbu_{\Omega^{\star}_{t+1}}, \rvs_{-1}) - h(\rvu_{t+1} | \rvs_{t+1})\notag\\
& \ge h(\rvu_{t+1} | \rvbu_{\Omega^{\star}_{t+1}}, \rvs_{-1}, \rvs_{0}) - h(\rvu_{t+1} | \rvs_{t+1})\label{eq:lem3-1}\\
& = h(\rvu_{t+1} | \rvbu_{\Omega^{\star}_{t+1}\backslash \{0\}}, \rvs_{0}) -  h(\rvu_{t+1} | \rvs_{t+1}) \label{eq:lem3-2}\\
& = h(\rvu_{t} | \rvbu_{\Omega^{\star}_{t}}, \rvs_{-1}) -  h(\rvu_{t} | \rvs_{t}) \label{eq:lem3-3}\\
&= I(\rvs_{t}; \rvu_{t} | \rvbu_{\Omega^{\star}_{t}}, \rvs_{-1}) \notag\\
&= \lambda_{t}(\Omega^{\star}_{t}) 
\end{align}}} where \eqref{eq:lem3-1} follows from the fact that conditioning reduces the differential entropy. Also in \eqref{eq:lem3-2} the notation $\Omega^{\star}_{t+1}\backslash \{0\}$ indicates the set $\Omega^{\star}_{t+1}$ when the index $0$ is excluded if $0 \in \Omega^{\star}_{t+1}$. It can be easily verified that the set $\Omega^{\star}_{t}$ is equivalent to the set obtained by left shifting the elements of the set $\Omega^{\star}_{t+1}\backslash \{0\}$ by one. Then \eqref{eq:lem3-2} follows from this fact and the following Markov property.
\begin{align}
\{\rvu_{0}, \rvs_{-1}\} \rightarrow \{\rvbu_{\Omega^{\star}_{t+1}\backslash \{0\}}, \rvs_{0} \}\rightarrow \rvu_{t+1} 
\end{align} Eq.~\eqref{eq:lem3-3} follows from the time-invariant property of source model and the test channel. Also note that
\begin{align}
\frac{1}{2}\log\left(2 \pi e \gamma_{t+1}(\Omega^{\star}_{t+1})\right) &=  h (\rvs_{t+1} | \rvu_{t+1}, \rvbu_{\Omega^{\star}_{t+1}}, \rvs_{-1})\notag\\
& \ge  h(\rvs_{t+1} | \rvu_{t+1}, \rvbu_{\Omega^{\star}_{t+1}}, \rvs_{-1}, \rvs_{0}) \label{eq:lem3-4}\\
& = h(\rvs_{t+1} | \rvu_{t+1}, \rvbu_{\Omega^{\star}_{t+1}\backslash \{0\}}, \rvs_{0}) \label{eq:lem3-5}\\
& = h(\rvs_{t} | \rvu_{t}, \rvbu_{\Omega^{\star}_{t}}, \rvs_{-1})\label{eq:lem3-6}\\
& = \frac{1}{2}\log\left(2 \pi e \gamma_{t}(\Omega^{\star}_{t})\right)
\end{align} where \eqref{eq:lem3-4} follows from the fact that conditioning reduces the differential entropy, \eqref{eq:lem3-5} follows from the following Markov property  
\begin{align}
\{\rvu_{0}, \rvs_{-1}\} \rightarrow \{\rvbu_{\Omega^{\star}_{t+1}\backslash \{0\}}, \rvu_{t+1}, \rvs_{0} \}\rightarrow \rvs_{t+1} 
\end{align} and \eqref{eq:lem3-6} again follows from the time-invariant property of source model and the test channel. 
%
\section{Proof of Lemma~\ref{Claim:2}}
\label{App:Claim2}

We need to show 
\begin{multline}
I(\rvs_{t};\rvu_{t} | \tilde{\rvs}_{t-L-B} ,  [\rvu]_{t-L-B+1}^{t-B-1})\ge \lim_{t\to \infty} \lambda_{t}(\Omega^{\star}_t)\\ = \lim_{t\to \infty} I(\rvs_{t};\rvu_{t} | \rvbu_{\Omega^{\star}_{t}}, \rvs_{-1})\label{AppEq1}\end{multline}
\begin{multline} \sigma^{2}_{t}( \tilde{\rvs}_{t-L-B}, [\rvu]_{t-L-B+1}^{t-B-1},\rvu_{t})\ge \lim_{t\to \infty} \gamma_{t}(\Omega^{\star}_t) \\ = \lim_{t\to \infty} \sigma^{2}_{t}( \rvbu_{\Omega^{\star}_{t}}, \rvu_{t}, \rvs_{-1})\label{AppEq2}
\end{multline}

For any $t> L+B$, we can write
\begin{align}
 \lambda_{t}(\Omega^{\star}_t) &=  I(\rvs_{t};\rvu_{t} | \rvbu_{\Omega^{\star}_{t}}, \rvs_{-1})\\ 
&= I(\rvs_{t};\rvu_{t} | \rvbu_{\Omega^{\star}_{t-L-B}}, [\rvu]_{t-L-B+1}^{t-B-1}, \rvs_{-1}) \label{eq:omeg-tilde0}\\
&= I(\rvs_{t};\rvu_{t} | \hat{\rvs}_{t-L-B}(\rvbu_{\Omega^{\star}_{t-L-B}}, \rvs_{-1}) ,  [\rvu]_{t-L-B+1}^{t-B-1})\label{tilde-1}\\
& \le  I(\rvs_{t};\rvu_{t} | \tilde{\alpha} {\rvs}_{t-L-B} + \tilde{\rve} ,  [\rvu]_{t-L-B+1}^{t-B-1}) \label{tild_exp}\\
& = I(\rvs_{t};\rvu_{t} | {\rvs}_{t-L-B} + {\rve} ,  [\rvu]_{t-L-B+1}^{t-B-1})\notag\\
& = I(\rvs_{t};\rvu_{t} | \tilde{\rvs}_{t-L-B} ,  [\rvu]_{t-L-B+1}^{t-B-1})\label{tilde-2}\\
& = R^{+}_{\textrm{GM-ME}}(L,B,D) \label{eq:expre}
\end{align} where~\eqref{eq:omeg-tilde0} follows from the structure of $\Omega_t^\star$ in Lemma~\ref{lem:GM-ME},
~\eqref{tilde-1} follows from the Markov relation\begin{multline}
\{\rvbu_{\Omega^{\star}_{t-L-B}}, \rvs_{-1}\} \rightarrow \{\hat{\rvs}_{t-L-B}(\rvbu_{\Omega^{\star}_{t-L-B}}, \rvs_{-1} ), [\rvu]_{t-L-B+1}^{t-B-1} \}\\\rightarrow \rvs_{t} \label{eq:G-ME-Markov-1}
\end{multline}and in \eqref{tild_exp} we introduce ${\tilde{\alpha} = 1-D}$ and ${\tilde{\rve} \sim \cN(0, D(1-D))}$. This follows from the fact that the estimate $\hat{\rvs}_{t-L-B}(\rvbu_{\Omega^{\star}_{t-L-B}}, \rvs_{-1})$ satisfies the average distortion constraint of $D$. In \eqref{tilde-2} we re-normalize the test channel so that $\rve \sim \cN(0,D/(1-D))$. Taking the limit of \eqref{eq:expre} when $t\to\infty$, results in \eqref{AppEq1}. Also note that 

\begin{align}
 \gamma_{t}(\Omega^{\star}_t)  &= \sigma^{2}_{t}( \rvbu_{\Omega^{\star}_{t}}, \rvu_{t}, \rvs_{-1}) \notag\\
&=   \sigma^{2}_{t}( \rvbu_{\Omega^{\star}_{t-L-B}}, [\rvu]_{t-L-B+1}^{t-B-1},\rvu_{t}, \rvs_{-1})\notag\\
&=   \sigma^{2}_{t}( \hat{\rvs}_{t-L-B}(\rvbu_{\Omega^{\star}_{t-L-B}}, \rvs_{-1}), [\rvu]_{t-L-B+1}^{t-B-1}, \rvu_{t})\label{eq:lem3-7}\\
&\le   \sigma^{2}_{t}( \tilde{\alpha} {\rvs}_{t-L-B} + \tilde{\rve}, [\rvu]_{t-L-B+1}^{t-B-1}, \rvu_{t}) \label{eq:lem3-8}\\
&=   \sigma^{2}_{t}({\rvs}_{t-L-B} + {\rve}, [\rvu]_{t-L-B+1}^{t-B-1}, \rvu_{t}) \label{eq:lem3-9}\\
&=   \sigma^{2}_{t}( \tilde{\rvs}_{t-L-B}, [\rvu]_{t-L-B+1}^{t-B-1},\rvu_{t}) \label{eq:Disto}
\end{align} where \eqref{eq:lem3-7} follows from the following Markov property~\eqref{eq:G-ME-Markov-1}
 and \eqref{eq:lem3-8} again follows from the fact that the estimate $\hat{\rvs}_{t-L-B}(\rvbu_{\Omega^{\star}_{t-L-B}}, \rvs_{-1})$ satisfies the distortion constraint. All the constants and variables in \eqref{eq:lem3-8} and \eqref{eq:lem3-9} are as defined before. Again, taking the limit of \eqref{eq:Disto} when $t\to \infty$ results in \eqref{AppEq2}.

According to \eqref{eq:expre} and \eqref{eq:Disto} if we choose the noise in the test channel $\sigma^2_{z}$ to satisfy 
\begin{align}
\sigma^{2}_{t}( \tilde{\rvs}_{t-L-B}, [\rvu]_{t-L-B+1}^{t-B-1},\rvu_{t}) =D
\end{align} then the test channel and the rate $R^{+}_{\textrm{GM-ME}}(L,B,D)$ defined in \eqref{eq:expre} both satisfy rate and distortion constraints in \eqref{eq:limRate} and \eqref{eq:limtest} and therefore $R^{+}_{\textrm{GM-ME}}(L,B,D)$ is achievable.

\section{Proof of Lemma~\ref{claim:converse}}
\label{app:NEWapp}

\begin{figure*}
\begin{align}
h(\rvs_{t-B-W}^n, \ldots, \rvs_{t-W-1}^n)-h(\rvs_{t-B-W}^n, \ldots, \rvs_{t-W-1}^n|[\rvf]_{0}^{{t-B-W-1}}, [\rvf]_{t-W}^{t}, \rvs_{-1}^n)
 \ge \sum_{i=1}^{B}\frac{n}{2}\log{(\frac{1}{d_{W+i}})}\label{eq:Gauss_LB_T1App}
\end{align}
\end{figure*}

We first show that \eqref{eq:Gauss_LB_T1} which is repeated in~\eqref{eq:Gauss_LB_T1App} at the top of next page.
From the fact that conditioning reduces the differential entropy, we can lower bound the left hand side in~\eqref{eq:Gauss_LB_T1App} by
\begin{align}
& \sum_{i=0}^{B-1}\left(h(\rvs_{t-B-W+i}^n)\right.\notag\\&\left.\quad-h(\rvs_{t-B-W+i}^n|[\rvf]_{0}^{{t-B-W-1}}, [\rvf]_{t-W}^{t}, \rvs_{-1}^n)\right)\label{eq:exp3}
\end{align}
We show that for each $i=0,1,\ldots, B-1$
\begin{multline}
h(\rvs_{t-B-W+i}^n)-h(\rvs_{t-B-W+i}^n|[\rvf]_{0}^{{t-B-W-1}}, [\rvf]_{t-W}^{t}, \rvs_{-1}^n)\ge \\  \frac{n}{2}\log \left(\frac{1}{d_{B+W-i}}\right),\label{eq:Gauss_RD_Bound}
\end{multline}
which then establishes~\eqref{eq:Gauss_LB_T1App}.
Recall that since there is a burst erasure between time $t \in [t-B-W, t-W-1]$ the receiver is required to reconstruct 
\begin{align}
\hat{\rvbt}^n_{t} = \left[\hat{\rvs}_{t}^n, \ldots, \hat{\rvs}_{t-B-W}^n\right]
\end{align}
with a distortion vector $(d_0,\ldots, d_{B+W})$ i.e., a reconstruction of $\hat{\rvs}_{t-B-W+i}^n$ is desired with a distortion of $d_{B+W-i}$ for $i=0,1,\ldots, B+W$ when the decoder is revealed $([\rvf]_{0}^{{t-B-W-1}}, [\rvf]_{t-W}^{t})$. Hence
\begin{align}
&h(\rvs_{t-B-W+i}^n) - h(\rvs_{t-B-W+i}^n|[\rvf]_{0}^{{t-B-W-1}}, [\rvf]_{t-W}^{t}, \rvs_{-1}^n)\notag\\
&\quad =h(\rvs_{t-B-W+i}^n) - h\left(\rvs_{t-B-W+i}^n|[\rvf]_{0}^{{t-B-W-1}},\right.\notag\\&\left.\quad \quad \quad \quad \quad [\rvf]_{t-W}^{t}, \rvs_{-1}^n,\{\hat{\rvs}^n_{t-B-W+i}\}_{d_{B+W-i}}\right)\\
&\quad \ge h(\rvs_{t-B-W+i}^n) - h(\rvs_{t-B-W+i}^n|\{\hat{\rvs}^n_{t-B-W+i}\}_{d_{B+W-i}})\\
&\quad \ge h(\rvs_{t-B-W+i}^n) - h(\rvs_{t-B-W+i}^n-\{\hat{\rvs}^n_{t-B-W+i}\}_{d_{B+W-i}})\label{eq:Gauss_toSub}
\end{align}
Since we have that 
\begin{align}
E\left[ \frac{1}{n}\sum_{j=1}^n (\rvs_{t-B-W+i,j}-\hat{\rvs}_{t-B-W+i,j})^2 \right] \le d_{B+W-i}
\end{align}
It follows from standard arguments that~\cite[Chapter 13]{coverThomas} that
\begin{equation}
h(\rvs_{t-B-W+i}^n-\{\hat{\rvs}^n_{t-B-W+i}\}_{d_{B+W-i}}) \le \frac{n}{2}\log {2\pi e}{(d_{B+W-i})}. \label{eq:Gauss_Jensen}
\end{equation}
Substituting~\eqref{eq:Gauss_Jensen} into~\eqref{eq:Gauss_toSub} and the fact that $h(\rvs_{t-B-W+i}^n)= \frac{n}{2}\log 2\pi e$ establishes~\eqref{eq:Gauss_RD_Bound}.

Now we establish~\eqref{eq:Gauss_LB_T2} which is repeated in~\eqref{eq:Gauss_LB_T2App} at the top of next page.
\begin{figure*}
\begin{align}
&h(\rvs_{t-W}^n, \ldots, \rvs_{t}^n)- h(\rvs_{t-W}^n, \ldots, \rvs_{t}^n| [\rvf]_{0}^{{t-B-W-1}},[\rvf]_{t-W}^{t},\rvs_{t-B-W}^n, \ldots, \rvs_{t-W-1}^n, \rvs_{-1}^n)\notag\\ &+ H([\rvf]_{t-W}^{t}|[\rvf]_{0}^{{t-B-W-1}}, \rvbt_{t}^n, \rvs_{-1}^n) \ge \frac{n(W+1)}{2}\log(\frac{1}{d_0})\label{eq:Gauss_LB_T2App}
\end{align}
\end{figure*}
Since $(\rvs_{t-W}^n, \ldots, \rvs_{t}^n)$ are independent we can express the left-hand side in~\eqref{eq:Gauss_LB_T2App} as:
\begin{align}
&I\left(\rvs_{t-W}^n, \ldots, \rvs_{t}^n;[\rvf]_{t-W}^{t}~|[\rvf]_{0}^{{t-B-W-1}},\rvs_{t-B-W}^n, \right.\notag\\
&\left.\quad
\ldots, \rvs_{t-W-1}^n, \rvs_{-1}^n\right) \notag\\ &+ H([\rvf]_{t-W}^{t}|[\rvf]_{0}^{{t-B-W-1}}, \rvbt_{t}^n, \rvs_{-1}^n)\\
&=H([\rvf]_{t-W}^{t}|[\rvf]_{0}^{{t-B-W-1}},\rvs_{t-B-W}^n, \ldots, \rvs_{t-W-1}^n, \rvs_{-1}^n)\\
&\ge H([\rvf]_{t-W}^{t}|[\rvf]_{0}^{{t-W-1}},\rvs_{t-B-W}^n, \ldots, \rvs_{t-W-1}^n, \rvs_{-1}^n)\notag\\
&\ge I\left([\rvf]_{t-W}^{t};\rvs_{t-W}^n, \ldots, \rvs_{t}^n|[\rvf]_{0}^{t-W-1},\rvs_{t-B-W}^n, \right.\notag\\&\left.\quad
\ldots, \rvs_{t-W-1}^n, \rvs_{-1}^n\right)
\end{align}

The above mutual information term can be bounded as follows:
\begin{align}
&h(\rvs_{t-W}^n, \ldots, \rvs_{t}^n|[\rvf]_{0}^{t-W-1},\rvs_{t-B-W}^n, \ldots, \rvs_{t-W-1}^n, \rvs_{-1}^n)\notag\\
&\quad-h(\rvs_{t-W}^n, \ldots, \rvs_{t}^n|[\rvf]_{0}^{t},\rvs_{t-B-W}^n, \ldots, \rvs_{t-W-1}^n, \rvs_{-1}^n)\notag\\
&=h(\rvs_{t-W}^n, \ldots, \rvs_{t}^n)\notag\\ &\qquad -h(\rvs_{t-W}^n, \ldots, \rvs_{t}^n|[\rvf]_{0}^{t},\rvs_{t-B-W}^n, \ldots, \rvs_{t-W-1}^n, \rvs_{-1}^n)\label{eq:Gauss_Indep}\\
&\ge h(\rvs_{t-W}^n, \ldots, \rvs_{t}^n)-  h(\rvs_{t-W}^n, \ldots, \rvs_{t}^n|\{\hat{\rvs}_{t-W}^n\}_{d_0}, \ldots, \{\hat{\rvs}_{t}^n\}_{d_0})\label{eq:Gauss_Reconstr_d0}\\
&\ge \sum_{i=0}^{W}\left( h(\rvs_{t-W+i}^n)-h(\rvs_{t-W+i}^n-\{\hat{\rvs}_{t-W+i}^n\}_{d_0})\right)\notag\\
&\ge \sum_{i=0}^{W}\frac{n}{2}\log (\frac{1}{d_0}) = \frac{n(W+1)}{2}\log(\frac{1}{d_0})\label{eq:exp9}
\end{align}
where~\eqref{eq:Gauss_Indep} follows from the independence of $(\rvs_{t-W}^n, \ldots, \rvs_{t}^n)$ from the past sequences, and~\eqref{eq:Gauss_Reconstr_d0} follows from the fact that given the entire past $[\rvf]_0^{t}$ each source sub-sequence needs to be reconstructed with a distortion of $d_0$ and the last step  follows from the standard approach in the proof of the rate-distortion theorem. This establishes~\eqref{eq:Gauss_LB_T2App}. 

This completes the proof.

\bibliographystyle{IEEEtran}	
\bibliography{sm}		

\begin{thebibliography}{10}
\providecommand{\url}[1]{#1}
\csname url@samestyle\endcsname
\providecommand{\newblock}{\relax}
\providecommand{\bibinfo}[2]{#2}
\providecommand{\BIBentrySTDinterwordspacing}{\spaceskip=0pt\relax}
\providecommand{\BIBentryALTinterwordstretchfactor}{4}
\providecommand{\BIBentryALTinterwordspacing}{\spaceskip=\fontdimen2\font plus
\BIBentryALTinterwordstretchfactor\fontdimen3\font minus
  \fontdimen4\font\relax}
\providecommand{\BIBforeignlanguage}[2]{{%
\expandafter\ifx\csname l@#1\endcsname\relax
\typeout{** WARNING: IEEEtran.bst: No hyphenation pattern has been}%
\typeout{** loaded for the language `#1'. Using the pattern for}%
\typeout{** the default language instead.}%
\else
\language=\csname l@#1\endcsname
\fi
#2}}
\providecommand{\BIBdecl}{\relax}
\BIBdecl

\bibitem{wang2000error}
Y.~Wang, S.~Wenger, J.~Wen, and A.~Katsaggelos, ``Error resilient video coding
  techniques,'' \emph{IEEE Signal Processing Magazine}, vol.~17, no.~4, pp.
  61--82, 2000.

\bibitem{tan}
W.~Tan and A.~Zakhor, ``Video multicast using layered {FEC} and scalable
  compression,'' \emph{IEEE Transactions on Circuits and Systems for Video
  Technology}, vol.~11, pp. 373--386, 2001.

\bibitem{Huang:08}
Y.~Huang, Y.~Kochman, and G.~Wornell, ``Causal transmission of colored source
  frames over a packet erasure channel,'' in \emph{{DCC}}, 2010, pp. 129--138.

\bibitem{pradhanRamchandran:03}
S.~S. Pradhan and K.~Ramchandran, ``Distributed source coding using syndromes
  ({DISCUS}): Design and construction,'' \emph{IEEE Trans.\ Inform.\ Theory},
  vol.~49, pp. 626--643, Mar. 2003.

\bibitem{Wang:06}
J.~Wang, V.~Prabhakaran, and K.~Ramchandran, ``Syndrome-based robust video
  transmission over networks with bursty losses,'' in \emph{{ICIP}}, Atlanta,
  GA, 2006.

\bibitem{Witsenhausen:79}
H.~S. Witsenhausen, ``On the structure of real-time source coders,'' \emph{Bell
  Syst. Tech. J.}, vol.~58, no.~6, pp. 1437--1451, Jul-Aug 1979.

\bibitem{Teneketzis}
D.~Teneketzis, ``On the structure of optimal real-time encoders and decoders in
  noisy communication,'' \emph{IEEE Trans.\ Inform.\ Theory}, vol.~52, no.~9,
  pp. 4017--4035, sep 2006.

\bibitem{Asnani}
H.~Asnani and T.~Weissman, ``Real-time coding with limited lookahead,''
  \emph{IEEE Trans.\ Inform.\ Theory}, vol.~59, no.~6, pp. 3582--3606, 2013.

\bibitem{berger}
H.~Viswanathan and T.~Berger, ``Sequential coding of correlated sources,''
  \emph{IEEE Trans.\ Inform.\ Theory}, vol.~46, no.~1, pp. 236 --246, jan 2000.

\bibitem{equitzCover:91}
W.~H. Equitz and T.~M. Cover, ``Successive refinement of information,''
  \emph{IEEE Trans.\ Inform.\ Theory}, vol.~37, pp. 269--275, Mar. 1991.

\bibitem{Wu}
J.~Wang and X.~Wu, ``Information flows in video coding,'' in \emph{Data
  Compression Conference (DCC)}, 2010, pp. 149--158.

\bibitem{ishwar}
N.~Ma and P.~Ishwar, ``On delayed sequential coding of correlated sources,''
  \emph{IEEE Trans.\ Inform.\ Theory}, vol.~57, pp. 3763--3782, 2011.

\bibitem{songChen12}
L.~Song, J.~Chen, J.~Wang, and T.~Liu, ``Gaussian robust sequential and
  predictive coding,'' \emph{IEEE Trans.\ Inform.\ Theory}, vol.~59, no.~6, pp.
  3635--3652, 2013.

\bibitem{Chung:00}
S.~Y. Chung, ``On the construction of some capacity approaching coding
  schemes,'' Ph.D. dissertation, Mass.\ Instit.\ of Tech., 2000.

\bibitem{tuncel}
X.~Chen and E.~Tuncel, ``{Zero-delay joint source-channel coding for the
  Gaussian Wyner-Ziv problem},'' in \emph{ISIT}, 2011, pp. 2929--2933.

\bibitem{arildsen}
T.~Arildsen, M.~N. Murthi, S.~V. Andersen, and S.~H. Jensen, ``On predictive
  coding for erasure channels using a {K}alman framework,'' \emph{IEEE Trans.\
  Signal Processing}, vol.~57, no.~11, pp. 4456--4466, 2009.

\bibitem{Chang:07}
C.~Chang, ``Streaming source coding with delay,'' Ph.D. dissertation,
  U.~C.~Berkeley, 2007.

\bibitem{draper}
S.~Draper, C.~Chang, and A.~Sahai, ``Sequential random binning for streaming
  distributed source coding,'' in \emph{Proc.\ Int.\ Symp.\ Inform.\ Theory},
  2005, pp. 1396--1400.

\bibitem{gallager}
R.~G. Gallager, \emph{Information Theory and Reliable Communication}.\hskip 1em
  plus 0.5em minus 0.4em\relax John Wiley and Sons, 1968.

\bibitem{martinianThesis}
E.~Martinian, ``Dynamic information and constraints in source and channel
  coding,'' Ph.D. dissertation, Mass.\ Instit.\ of Tech., 2004.

\bibitem{badrinfo:13}
A.~Badr, A.~Khisti, W.~Tan, and J.~Apostolopoulos, ``Streaming codes for
  channels with burst and isolated erasures,'' in \emph{Proc.\ IEEE INFOCOMM},
  June 2013.

\bibitem{tekin}
O.~Tekin, T.~Ho, H.~Yao, and S.~Jaggi, ``On erasure correction coding for
  streaming,'' in \emph{ITA}, 2012, pp. 221--226.

\bibitem{leong}
D.~Leong and T.~Ho, ``Erasure coding for real-time streaming,'' in \emph{ISIT},
  2012.

\bibitem{markov}
D.~Aldous, ``Reversible {M}arkov chains and random walks on graphs ({C}hapter
  3),'' unpublished notes, available at
  {http://www.stat.berkeley.edu/$\sim$aldous/RWG/book.html}, Sep. 2002.

\bibitem{oohama1997}
Y.~Oohama, ``Gaussian multiterminal source coding,'' \emph{IEEE Trans.\
  Inform.\ Theory}, vol.~43, no.~6, pp. 1912--1923, 1997.

\bibitem{zigZag}
I.~Csisz\`{a}r and J.~Korner, ``Towards a general theory of source networks,''
  \emph{IEEE Trans.\ Inform.\ Theory}, vol.~26, no.~2, pp. 155 -- 165, mar
  1980.

\bibitem{Cover95}
T.~M. Cover, ``A proof of the data compression theorem of {S}lepian and {W}olf
  for ergodic sources,'' \emph{IEEE Trans.\ Inform.\ Theory}, vol. IT-21,
  no.~2, pp. 226--228, 1995.

\bibitem{slepianWolf:73}
D.~Slepian and J.~K. Wolf, ``Noiseless coding of correlated information
  sources,'' \emph{IEEE Trans.\ Inform.\ Theory}, vol.~19, pp. 471--480, July
  1973.

\bibitem{coverThomas}
T.~M. Cover and J.~A. Thomas, \emph{Elements of Information Theory}.\hskip 1em
  plus 0.5em minus 0.4em\relax John Wiley and Sons, 1991.

\bibitem{tavildar:10}
S.~Tavildar, P.~Viswanath, and A.~B. Wagner, ``The {G}aussian many-help-one
  distributed source coding problem,'' \emph{IEEE Trans.\ Inform.\ Theory},
  vol.~56, no.~1, pp. 564--581, 2010.

\bibitem{poor94}
H.~V. Poor, \emph{An Introduction to Signal Detection and Estimation, $2$nd
  Ed.}\hskip 1em plus 0.5em minus 0.4em\relax NewYork: Springer Verlag.

\bibitem{willskyWornel03}
A.~S. Willsky, G.~W. Wornell, and J.~H. Shapiro, \emph{Stochastic Processes,
  Detection and Estimation}.\hskip 1em plus 0.5em minus 0.4em\relax 6.432
  Course Notes, Department of Electrical Engineering and Computer Science, MIT,
  2003.

\bibitem{tung:78}
S.~Tung, ``Multiterminal source coding,'' Ph.D. dissertation, Cornell
  University, 1978.

\bibitem{etezadiKhistiDCC:12}
F.~Etezadi and A.~Khisti, ``Prospicient real-time coding of {M}arkov sources
  over burst erasure channels: Lossless case,'' in \emph{Proc.\ Data\
  Compression\ Conf.}, Apr. 2012.

\bibitem{rimoldi:94}
B.~Rimoldi, ``Successive refinement of information: Characterization of the
  achievable rates,'' \emph{IEEE Trans.\ Inform.\ Theory}, vol.~40, pp.
  253--259, Jan. 1994.

\bibitem{etezadi-khisti-14}
F.~Etezadi and A.~Khisti, ``Sequential transmission of {M}arkov sources over
  burst erasure channels,'' in \emph{International Zurich Seminar}, 2014.

\end{thebibliography}

\end{document}